\newtheorem{theorem}{Theorem}[section]
\newtheorem{corollary}[theorem]{Corollary}
\newtheorem{proposition}[theorem]{Proposition}
\newtheorem{lemma}[theorem]{Lemma}
\newtheorem{example}[theorem]{Example}
\newtheorem{remark}[theorem]{Remark}
\theoremstyle{definition}
\newtheorem{definition}[theorem]{Definition}
\theoremstyle{definition}
\newcommand{\been}{\begin{enumerate}}
\newcommand{\enen}{\end{enumerate}}
\newcommand{\beit}{\begin{itemize}}
\newcommand{\enit}{\end{itemize}}
\def\PP{\mathcal P}
\def\GG{\mathcal{G}}
\def\XX{\mathcal{X}}
\def\one{{\mathds 1}}
\def\vp{\varphi}
\def\ds{\displaystyle}
\def\cre{\color{red}}
\newcommand{\R}{\mathbb{R}}
\newcommand{\Z}{\mathbb{Z}}
\newcommand{\specialcell}[2][c]{\begin{tabular}[#1]{@{}c@{}}#2\end{tabular}}
\numberwithin{equation}{section}
 \tikzset{every node/.style={auto}}
 \tikzset{every state/.style={rectangle, minimum size=0pt, draw=none, font=\normalsize}}
  \tikzset{bend angle=7}
\newcommand{\xrightleftarrows}[2]{%
  \mathrel{\mathop{%
    \vcenter{\offinterlineskip\m@th
      \ialign{\hfil##\hfil\cr
        \hphantom{$\scriptstyle\mspace{8mu}{#1}\mspace{8mu}$}\cr
        \rightarrowfill\cr
        \vrule height0pt width 2em\cr
        \leftarrowfill\cr
        \hphantom{$\scriptstyle\mspace{8mu}{#2}\mspace{8mu}$}\cr
        \noalign{\kern-0.3ex}
      }%
    }%
  }\limits^{#1}_{#2}}%
}
\begin{document}



\title{Bifunctional enzyme provides absolute concentration robustness in \\ multisite covalent modification networks}

\author{Badal Joshi \footnote{Department of Mathematics, California State University San Marcos (bjoshi$@$csusm.edu)} ~ and Tung D. Nguyen \footnote{Department of Mathematics, Texas A \& M University (daotung.nguyen$@$tamu.edu)}}

\date{}

\maketitle

\begin{abstract}
\noindent Biochemical covalent modification networks exhibit a remarkable suite of steady state and dynamical properties such as multistationarity, oscillations, ultrasensitivity and absolute concentration robustness. 
This paper focuses on conditions required for a network  of this type to have a species with absolute concentration robustness. We find that the robustness in a substrate is endowed by its interaction with a bifunctional enzyme, which is an enzyme that has different roles when isolated versus when bound as a substrate-enzyme complex. When isolated, the bifunctional enzyme promotes production of more molecules of the robust species while when bound, the same enzyme facilitates degradation of the robust species. These dual actions produce robustness in the large class of covalent modification networks.
For each network of this type, we find the network conditions for the presence of robustness, the species that has robustness, and its robustness value. 
The unified approach of simultaneously analyzing a large class of networks for a single property, i.e. absolute concentration robustness, reveals the underlying mechanism of the action of bifunctional enzyme while simultaneously providing a precise mathematical description of bifunctionality. 

~\\ \vskip 0.02in
\noindent {\bf Keywords:} absolute concentration robustness, bifunctional enzyme, paradoxical enzyme, covalent modification network, futile cycle\\
{\bf MSC:} 92C40, 37N25

\end{abstract}

\section{Introduction}

The steady state properties and dynamics of covalent modification networks have been extensively studied since the work of Goldbeter and Koshland \cite{goldbeter1981amplified, straube2013reciprocal, jeynes2021ultrasensitivity}. A large class of such networks are multisite phosphorylation-dephosphorylation networks which play a key role in cell signaling circuits \cite{ThomGuna09a,walsh2006posttranslational,cohen2001role}. 
The number of phosphorylation sites can be as many as 150, resulting in as many as $2^{150}$ distinct phospho-forms \cite{gnad2007phosida}. 
The high dimensionalities of the state space and parameter space of the resulting dynamical model put it far beyond the scope of a detailed simulation study. 
The topology of the underlying reaction network helps guide the unraveling of complex dynamics \cite{ThomGuna09a,ThomGuna09,WangSontag,conradi2015global,holmberg2002multisite,manrai2008geometry,rubinstein2016long}. 
A parameterization of the positive steady states in the distributive reversible covalent modification  cycle (futile cycle) of arbitrary length  \cite{ThomGuna09} decreases the complexity from a larger number of differential equations to a small number of algebraic equations. 
The distributive futile cycle was shown \cite{WangSontag} to have between $n$ and $2n-1$ steady states where $n$ is the number of phosphorylation sites.

In this paper, we study the conditions on network topology that endow concentration robustness in covalent modification networks. 
Robustness in signal transduction against high variability in circuit components has been experimentally observed in thousands of bacterial signaling systems \cite{shinar2007input,alon2019introduction,hart2013utility}. 
Experimental and mathematical modeling work identified underlying the mechanism was a {\em bifunctional enzyme} (or {\em paradoxical enzyme}), a protein that performs opposing kinase and phosphatase activities \cite{russo1993essential,hsing1998mutations,batchelor2003robustness,shinar2009robustness,dexter2013dimerization}. 
In general, a bifunctional enzyme has distinct and possibly opposing effects on the network output. 

The specific notion of robustness that we consider in this paper is {\em absolute concentration robustness}, which refers to the property that the steady state concentration of a specific substrate remains invariant across all positive steady states, even when the initial values of all variables are allowed to vary.  According to the Shinar-Feinberg criterion \cite{shinar2010structural}, two simple network conditions suffice to ensure absolute concentration robustness: (i) the deficiency of the reaction network is one, and (ii) two non-terminal complexes differ in exactly one species (see Definitions \ref{def:def} and \ref{def:nonterminal} and Theorem \ref{thm:shinfein}). If these conditions are satisfied then the network has absolute concentration robustness in the species that is the difference of the two non-terminal complexes appearing in the second condition. 

The property of deficiency equal to one is useful for parameterization of positive steady states which helps prove the property of absolute concentration robustness in such cases. It is not clear that deficiency one plays a {\em functional} role in network output or dynamics. Therefore, one might expect that absolute concentration robustness is found even for reaction networks with deficiency not equal to one. In fact, zero deficiency networks can have absolute concentration robustness, the conditions for which are studied in \cite{joshi2022foundations}.  Another approach to establish ACR in networks without deficiency one involves network translation techniques \cite{tonello2018network}.  The networks studied in this paper can have arbitrary deficiency, and for every possible deficiency value, we find networks with the property of absolute concentration robustness. 

The second property that two non-terminal complexes differ in exactly one species {\em is} related to a functional property: {\em activity of a bifunctional enzyme}. In higher deficiency networks, bifunctionality needs a more careful definition since the same enzyme is often implicated in covalent modification of multiple sites. Thus in more complex networks, an enzyme may promote the increase of several different substrate types. Moreover, if the enzyme is bifunctional it may simultaneously facilitate the removal of many of these substrate types. 
In terms of the second Shinar-Feinberg condition, there can be several pairs of non-terminal complexes that differ in exactly one species and moreover, this differing species is usually different for every such pair. 

Our goal in this work is to find a {\em generalization of the second Shinar-Feinberg condition for a large class of covalent modification networks} while at the same time not relying on or requiring the first Shinar-Feinberg condition to hold. We address the questions of: (i) finding sufficient network conditions for absolute concentration robustness (ACR) to hold, (ii) finding species for which ACR holds (subject to the existence of a positive steady state), (iii) the ACR value of each ACR species, and (iv) necessary and sufficient conditions for the existence of a positive steady state.

The rest of the paper is organized as follows. Section \ref{sec:CRN} reviews basic definitions related to reaction networks and ACR. Section \ref{sec:model} introduces the detailed model for multisite covalent modification networks with a bifunctional enzyme, and provides preliminary analysis on their steady state equations. Section \ref{sec:ACR} establishes sufficient conditions for ACR in these networks. Section \ref{sec:existenceofss} gives necessary and sufficient conditions for the existence of a positive steady state in these networks. Section \ref{sec:parameterization} provides a steady state parameterization for futile cycles with a  bifunctional enzyme. Finally, we end with a discussion on boundary steady states of covalent modification networks with a bifunctional enzyme in Section \ref{sec:boundaryss}.

\section{Reaction network}\label{sec:CRN}

Here, we first recall the basic setup and definitions involving reaction networks, the dynamical systems they generate (Section 2.1), and absolute concentration robustness (Section 2.2). Readers who are familiar with reaction network theory can skip to Section \ref{sec:model} for the specific class of reaction networks we study in this paper. 

\subsection{Reaction networks and dynamical systems}
A {\em reaction network} $\GG$ is a directed graph whose vertices are non-negative linear combinations of {\em species} $X_1,X_2, \ldots, X_d$.  
In reaction network literature, we often refer to each vertex as a {\em complex}, and we denote a complex by $y= y_{1}X_1 + y_{2} X_2 + \dots + y_{d} X_d $  or by $y=( y_{1}, y_{2}, \dots, y_{d})$ 	(where $y_{i}\in\Z_{\geq 0}$).

Edges of $\GG$ represent the possible changes in the abundances of the species, and are referred to as {\em reactions}. 
The vector $y'-y$ is the {\em reaction vector} associated to the reaction $y \to y'$. Additionally, in this reaction,  $y$ is called the \textit{reactant complex}, and $y'$ is called the \textit{product complex}. If there is also a reaction from $y'$ to $y$, we write $y \rightleftarrows y'$ and say that they are a pair of {\em reversible reactions}.

\begin{example} \label{ex:futile}
An example of a reaction network $\GG$ is the cycle with two substrates $S_1,S_2$ whose interconversion is facilitated by enzymes $E$ and $F$:
\begin{align*}
S_1+E \rightleftarrows C \rightarrow S_2+E \\
S_2+F \rightleftarrows D \rightarrow S_1+F.
\end{align*}
Here $\GG$ has 6 species $\{S_1,S_2,E,F,C,D\}$; 6 complexes $\{S_1+E,C,S_2+E,S_2+F,D,S_1+F\}$; and 6 reactions (one reaction for each arrow).
\end{example}
Under the assumption of mass-action kinetics, each reaction network $\GG$ defines a  parametrized family of systems of ordinary differential equations (ODEs), as follows. First, we enumerate the reactions by $y_i \to y_i'$ for $i=1,\dots,r$ and for each reaction $y_i\to y_i'$ we assign a positive {\em rate constant} $\kappa_{i} \in \mathbb{R}_{>0}$. Then the \textit{mass-action system}, denoted by $(\GG,\kappa)$ is the dynamical system arising from the following ODEs:
\begin{equation}\label{eq:mass_action_ODE}
		\frac{dx}{dt} ~=~ \sum_{i=1}^r  \kappa_{i} x^{y_i} (y_i'-y_i)~=:~ f_{\kappa}(x)~,
\end{equation}
where  $x(t)=(x_1(t),\dots,x_d(t))$ denotes the concentration of the species at time $t$ and $x^{y_i} := \prod_{j=1}^d x_j^{y_{ij}}$. The right-hand side of the ODEs~\eqref{eq:mass_action_ODE} consists of polynomials 
$f_{\kappa,i}$, for $i=1,2,\dots,d$ (where $d$ is the number of species). For simplicity, we often write $f_i$ instead of $f_{\kappa,i}$. 

The ODEs \eqref{eq:mass_action_ODE} can also be written in matrix form
\begin{equation}
    \frac{dx}{dt} = N \cdot v(x),
\end{equation}
where $N$, {\em the stoichiometric matrix}, is the matrix whose columns are all reaction vectors of $\GG$ and  $v_i(x)=\kappa_i x^{y_i}$. A {\em conservation law matrix} of $\GG$, denoted by $W$, is any row-reduced matrix whose rows form a basis of $\text{im}(N)^\perp$. The {\em conservation laws} of $\GG$ are given by $W x = c$, where $c:=W x(0)$ is the {\em total-constant vector}.

We denote by $S:=\text{im}(N)$, the \textit{stoichiometric subspace} of $\GG$.  Observe that the vector field of the mass-action ODEs~\eqref{eq:mass_action_ODE} lies in $S$ (more precisely, the vector of ODE right-hand sides is always in $S$).  Hence,  a forward-time solution $\{x(t) \mid t \ge 0\}$ of~\eqref{eq:mass_action_ODE}, with initial condition $x(0)  \in \mathbb{R}_{>0}^d$, remains in the following \textit{stoichiometric compatibility class}~\cite{feinberg2019foundations}: $$P_{x(0)} ~:=~ (x(0)+S)\cap \mathbb{R}_{\geq 0}^d~.$$

A {\em steady state} of a mass-action system is a nonnegative concentration vector $x^*\in \R_{\geq 0}^d$ at which the right-hand side of the ODEs \eqref{eq:mass_action_ODE} vanishes: $f_{\kappa}(x^*)=0$. 

\begin{definition}
$\GG$ is a {\em consistent reaction network} if there exist $\beta_{y \to y'} >0$ such that $\ds \sum_{y \to y' \in \GG} \beta_{y \to y'} (y' - y) = 0$.
\end{definition}

\begin{theorem}
The following are equivalent
\been
\item $\GG$ is a consistent reaction network.
\item There is a choice of positive rate constants $\kappa$ such that the mass-action system $(\GG,\kappa)$ has a positive steady state.
\enen
\end{theorem}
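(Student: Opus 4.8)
The plan is to exploit the near-identical algebraic form of the two conditions. Writing out the steady state equation $f_{\kappa}(x^*)=0$ coming from \eqref{eq:mass_action_ODE}, a positive vector $x^*$ is a steady state of $(\GG,\kappa)$ precisely when $\sum_{i=1}^r \kappa_i (x^*)^{y_i}(y_i'-y_i)=0$, whereas consistency asks for positive coefficients $\beta_{y_i\to y_i'}$ with $\sum_{i=1}^r \beta_{y_i\to y_i'}(y_i'-y_i)=0$. The entire argument rests on the observation that the assignment $\beta_{y_i\to y_i'} := \kappa_i (x^*)^{y_i}$ furnishes a positivity-preserving dictionary between the two relations: since $\kappa_i>0$ and every coordinate of $x^*$ is positive, each monomial $(x^*)^{y_i}=\prod_{j=1}^d (x_j^*)^{y_{ij}}$ is strictly positive, so the $\beta$'s are positive exactly when the corresponding data is positive.

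For the direction $(2)\Rightarrow(1)$, I would start from a positive steady state $x^*\in\R_{>0}^d$ of some mass-action system $(\GG,\kappa)$ and define $\beta_{y_i\to y_i'}:=\kappa_i(x^*)^{y_i}$. These coefficients are positive by the remark above, and substituting them into the steady state equation yields precisely the consistency relation $\sum_{i=1}^r \beta_{y_i\to y_i'}(y_i'-y_i)=0$. Hence $\GG$ is consistent.

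For the converse $(1)\Rightarrow(2)$, I would use the freedom to choose both $\kappa$ and the location of the steady state. Given consistency coefficients $\beta_{y_i\to y_i'}>0$, I take $x^*=(1,1,\dots,1)$, so that $(x^*)^{y_i}=1$ for every $i$, and set $\kappa_i:=\beta_{y_i\to y_i'}>0$. Then $\sum_{i=1}^r \kappa_i (x^*)^{y_i}(y_i'-y_i)=\sum_{i=1}^r \beta_{y_i\to y_i'}(y_i'-y_i)=0$, so the all-ones vector is a positive steady state of this $(\GG,\kappa)$.

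There is essentially no hard analytic step: the content is the bookkeeping identity $\beta_i=\kappa_i(x^*)^{y_i}$ together with the preservation of positivity in both directions. The only point deserving a word of care is that in the $(1)\Rightarrow(2)$ direction we are free to prescribe $\kappa$ ourselves, which is exactly what legitimizes the trivial choice $x^*=(1,\dots,1)$; had $\kappa$ been fixed in advance, one would instead have to solve $\kappa_i(x^*)^{y_i}=\beta_i$ for $x^*$, a genuinely harder and not always solvable problem.
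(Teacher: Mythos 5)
Your proposal is correct and follows essentially the same route as the paper's proof: for $(1)\Rightarrow(2)$ you set $\kappa_i:=\beta_{y_i\to y_i'}$ so that $(1,\dots,1)$ is a positive steady state, and for $(2)\Rightarrow(1)$ you take $\beta_{y_i\to y_i'}:=\kappa_i (x^*)^{y_i}$, exactly the dictionary used in the paper. No gaps; the added remark about positivity preservation and the freedom to choose $\kappa$ is a fair elaboration of the same argument.
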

\begin{proof}
Suppose that $\GG$ is consistent. Then there exist $\beta_{y \to y'} > 0$ such that $\sum_{y \to y' \in \GG} \beta_{y \to y'} (y' - y) = 0$. Choose $\kappa_{y \to y'} = \beta_{y \to y'}$ for all $y \to y' \in \GG$. Then from \eqref{eq:mass_action_ODE}, $(1,1,\ldots, 1)$ is a positive steady state. 
Conversely suppose that $x^*$ is a positive steady state  for some choice of rate constants $\{\kappa_{y\to y'}\}$. Then consistency of $\GG$ follows from letting $\beta_{y \to y'} = \kappa_{y \to y'} x^{*y}$. 
\end{proof}

\subsection{Absolute concentration robustness}

In the context of reaction networks, absolute concentration robustness (ACR) can be formally defined at the level of systems and also networks, the latter is a significantly stronger property. 

\begin{definition}[ACR] \label{def:acr}
Let $X_i$ be a species of a reaction network $\GG$ with $r$ reactions.
\begin{enumerate}
\item For a fixed vector of positive rate constants $\kappa \in \mathbb{R}^r_{>0}$, 
	the mass-action system $(\GG,\kappa)$ has {\em absolute concentration robustness} (ACR) in $X_i$ if $(\GG,\kappa)$ has a positive steady state and in every positive steady state $x \in \R_{> 0}^n$ of the system, the value of $x_i$ (the concentration of $X_i$) is the same constant $x_i^*$. This value $x_i^*$ is the \textit{ACR value} of $X_i$. 
\item The reaction network $\GG$ has {\em absolute concentration robustness}  in species $X_i$ if $\GG$ is consistent and  furthermore, for any $\kappa'$ such that the mass-action system $(\GG,\kappa' > 0)$ has a positive steady state, $(\GG,\kappa')$ must have ACR in $X_i$. 
\end{enumerate}
\end{definition}

Note that $x_i^*$ is independent of the positive steady state of $(\GG,\kappa)$, but this value does depend on $\kappa$ in general, as in the next example.
A natural interpretation of ACR is that a particular steady state coordinate (corresponding the ACR species $X_i$) is independent of initial concentrations.

To show ACR in a species $X_i$ (either for a mass action system or a reaction network), one must show two things: the first is that a positive steady state exists and the second is that the concentration of $X_i$ is invariant across all positive steady states. 
In this paper, we first show the second property under the assumption that a positive steady state exists and later in Section \ref{sec:existenceofss}, we go on to show that a positive steady state exists for every case found to have the invariance property in some species. 

\begin{remark}
In this paper, the notion of ACR that we study has been referred to as  {\em static} ACR \cite{joshi2022foundations}, since it requires only knowledge of the positive steady states. A stronger notion called {\em dynamic} ACR \cite{joshi2022foundations} requires convergence of the ACR species concentration to the ACR value. Static ACR is necessary for dynamic ACR; in future work, we plan to study additional conditions required to ensure dynamic ACR. 
\end{remark}


\begin{example}  The following network
\[
A +B \xrightarrow{\kappa_1} 2B, ~ B \xrightarrow{\kappa_2}A
\]
is a classic example of a network with ACR \cite{shinar2010structural}.  Indeed, at all positive steady states, the concentration of species $A$ is $\kappa_2/\kappa_1$, and hence the network has ACR in $A$. 
\end{example}

Shinar and Feinberg \cite{shinar2010structural} proposed a network condition that guarantees ACR. We first provide the necessary terminology to state such a network condition.
\begin{definition} \label{def:def}
    The {\em deficiency} of a reaction network $\GG$ is given by
    \[
    \delta = C - \ell - s,
    \]
    where $C$ is the number of complexes, $\ell$ is the number of connected components, and $s$ is the dimension of the stoichiometric subspace of $\GG$.
\end{definition}
\begin{definition} \label{def:nonterminal}
    A {\em strong linkage class} of a reaction network $\GG$ is a maximal subset of its complexes that are strongly connected. A {\em terminal strong linkage class} is a strong linkage class in which there is no reaction from its complexes to complexes in another strong linkage class. Complexes not belonging to any terminal strong linkage class are called {\em non-terminal}.
\end{definition}
\begin{theorem}[Deficiency one condition \cite{shinar2010structural}] \label{thm:shinfein}
Let $\GG$ be a consistent reaction network with deficiency of one and such that it has two non-terminal complexes that differ only in species $S$. Then $\GG$ has ACR in $S$. 
\end{theorem}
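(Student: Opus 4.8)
The plan is to fix any vector of positive rate constants $\kappa$ for which $(\GG,\kappa)$ admits a positive steady state — consistency guarantees at least one such $\kappa$, and the network version of ACR in Definition \ref{def:acr} asks me to treat every such $\kappa$ — and to prove that the $S$-coordinate is identical at all positive steady states of $(\GG,\kappa)$. Denote the two given non-terminal complexes by $y$ and $y'$. Since they differ only in species $S$, their difference is $y-y'=m\,e_S$ for some nonzero integer $m$, where $e_S$ is the unit vector of species $S$. For two positive steady states $x^a,x^b$ put $\mu:=\ln x^b-\ln x^a$ (componentwise). The entire theorem then collapses to the single orthogonality relation $\mu\cdot(y-y')=0$, because this says $m\,\mu_S=0$, hence $\mu_S=0$, i.e.\ $x^a_S=x^b_S$.

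To produce that orthogonality I would rewrite the mass-action field in Feinberg's factored form $f_\kappa(x)=Y A_\kappa\Psi(x)$, where $\Psi(x)$ is the vector of monomials $x^{y}$ indexed by the complexes, $Y$ is the linear map carrying each complex to its composition vector in species space, and $A_\kappa$ is the weighted graph Laplacian of the reaction graph; note that $N=Y\partial$ and $S=Y(\text{im}\,\partial)$ for the incidence matrix $\partial$, and that $\text{im}(A_\kappa)\subseteq\text{im}(\partial)$. At a positive steady state $x^*$ one has $Y A_\kappa\Psi(x^*)=0$, so the complex-space flux $A_\kappa\Psi(x^*)$ lies in $\ker Y\cap\text{im}(\partial)$. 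A dimension count gives $\dim\big(\ker Y\cap\text{im}(\partial)\big)=\dim\text{im}(\partial)-\dim Y(\text{im}\,\partial)=(C-\ell)-s=\delta$, so here this space is exactly one-dimensional. Fixing a generator $b$, every positive steady state satisfies $A_\kappa\Psi(x^*)=g(x^*)\,b$ for some scalar $g(x^*)$.

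The heart of the matter is to turn this one-dimensionality into the statement that $z\mapsto \mu\cdot z$ is constant over the non-terminal complexes; applied to $y,y'$ it is exactly the orthogonality I need. First I would recall the deficiency-zero picture as a warm-up: there $\ker Y\cap\text{im}(\partial)=0$ forces $A_\kappa\Psi(x^*)=0$, i.e.\ complex balancing, and the matrix-tree theorem makes $\Psi(x^*)_z$ proportional within each linkage class to a $\kappa$-determined tree constant that is independent of the chosen steady state; comparing $x^a$ with $x^b$ then makes $e^{\mu\cdot z}=\Psi(x^b)_z/\Psi(x^a)_z$ constant on each linkage class. For deficiency one the flux $g(x^*)\,b$ need not vanish, but the single generator $b$ carries the whole deviation from complex balancing: $b$ lies in $\text{im}(\partial)$ so it sums to zero on each linkage class, the terminal strong linkage classes stay balanced, and $b$ is supported on non-terminal complexes in a coordinated way. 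Tracking this support and using that $b$ spans a one-dimensional space, I would show the ratio $\Psi(x^b)_z/\Psi(x^a)_z$ is still forced to a common value across the non-terminal complexes, giving $\mu\cdot(y-y')=0$.

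The main obstacle is precisely this last deficiency-one step. Unlike the balanced case there is no closed Boltzmann form, so I must extract the constancy of $\mu\cdot z$ on the non-terminal complexes from the combinatorics of the single kernel vector $b$ together with the sign and support pattern forced by the Laplacian $A_\kappa$ — and this is exactly where the hypotheses \emph{deficiency equal to one} and \emph{non-terminal} enter and cannot be relaxed to an arbitrary pair of complexes. By contrast, the reduction through $y-y'=m\,e_S$ and the final passage to $x^a_S=x^b_S$ are routine once the orthogonality is secured.
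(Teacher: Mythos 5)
The paper does not actually prove this theorem; it is quoted directly from Shinar--Feinberg \cite{shinar2010structural}, so your attempt must be measured against the known proof from that reference. The first half of your proposal correctly mirrors that proof's strategy: reduce ACR in $S$ to the orthogonality $\mu\cdot(y-y')=0$ with $\mu=\ln x^b-\ln x^a$, factor $f_\kappa(x)=YA_\kappa\Psi(x)$, and use deficiency one to conclude that the fluxes $A_\kappa\Psi(x^a)$ and $A_\kappa\Psi(x^b)$ both lie in the one-dimensional space $\ker Y\cap\mathrm{im}(\partial)$. This setup, the dimension count $(C-\ell)-s=\delta$, and the final passage from $\mu\cdot(y-y')=0$ to $x^a_S=x^b_S$ are all sound.

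The genuine gap is exactly where you flag ``the main obstacle'': you never prove that $\mu\cdot z$ is constant over the non-terminal complexes; you only announce that you ``would show'' it by tracking the support of $b$. Moreover, the structural claims you do make about $b$ are incorrect: there is no reason $b$ is ``supported on non-terminal complexes,'' and terminal strong linkage classes are not balanced at steady state --- the net flux into a terminal strong linkage class equals the (generally positive) inflow from the non-terminal complexes feeding it. The property you actually need concerns $\ker A_\kappa$, not $b$: by the matrix-tree theorem, $\ker A_\kappa$ is spanned by nonnegative vectors each supported on a single terminal strong linkage class, so every element of $\ker A_\kappa$ vanishes at every non-terminal complex. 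With that lemma the argument closes cleanly. Since non-terminal complexes exist, the strictly positive vector $\Psi(x^a)$ cannot lie in $\ker A_\kappa$, so $A_\kappa\Psi(x^a)=g_a b$ with $g_a\neq 0$, and likewise $g_b\neq 0$; setting $\lambda=g_b/g_a$ gives $A_\kappa\bigl(\Psi(x^b)-\lambda\Psi(x^a)\bigr)=0$, so $\Psi(x^b)-\lambda\Psi(x^a)\in\ker A_\kappa$ vanishes at every non-terminal complex, i.e. $e^{\mu\cdot z}=\lambda$ for every non-terminal $z$; applying this to $y$ and $y'$ yields $\mu\cdot(y-y')=0$ and hence $\mu_S=0$. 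Without the kernel-support lemma and this subtraction step, the heart of your proof is an unproven assertion, and the hand-waved ``combinatorics of the single kernel vector $b$'' does not substitute for it.
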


While the network conditions in the theorem above can be checked easily, deficiency one is a particularly restrictive condition. Deficiency can easily increase with the addition, or discovery of new reactions (see \cite[Lemma 2.1]{anderson2022prevalence}) and thus it is unrealistic to expect biochemical reaction networks to have deficiency of exactly one. The deficiency one condition is neither sufficient nor necessary for ACR, and the reaction networks we study in the next section do not normally satisfy this condition (see Remark \ref{rem:deficiency}). 

\section{Multisite covalent modification network with a bifunctional enzyme}\label{sec:model}

\subsection{Model}

In this section, we give a detailed description of a {\em multisite covalent modification network} which uses a {\em bifunctional} (or {\em paradoxical}) {\em enzyme}. 

\begin{definition}[Multisite covalent modification network with a bifunctional enzyme]\label{def:model}
Let $(\GG,(k,h))$ denote the following mass-action system

\begin{align*}
S_1 + E &\xrightleftarrows{k_1^+}{k_1^-} C_1 \xrightarrow{ k_1} S_2 + E \xrightleftarrows{k_2^+}{k_2^-} C_2 \xrightarrow{ k_2} \ldots \ldots \xrightleftarrows{k_{n-1}^+}{k_{n-1}^-} C_{n-1} \xrightarrow{ k_{n-1}} S_n + E, \\ 
S_{\vp_1} + C_\alpha &\xrightleftarrows{h_1^+}{h_1^-} D_1 \xrightarrow{ h_1} S_{\vp_2} + C_\alpha \xrightleftarrows{h_2^+}{h_2^-} D_2 \xrightarrow{ h_2} \ldots \ldots \xrightleftarrows{h_{m-1}^+}{h_{m-1}^-} D_{m-1} \xrightarrow{ h_{m-1}} S_{\vp_m} + C_\alpha, 
\end{align*}    
\beit
\item The top linkage class which converts $S_1$ to $S_n$,  through a sequence of steps, is referred to as the {\em forward chain}.  While the bottom linkage class which converts $S_{\vp_1}$ to $S_{\vp_m}$ is not necessarily the reverse of the forward chain, later we show in Section \ref{sec:existenceofss} that for $\GG$ to be consistent, the bottom linkage class must contain a path that transform $S_n$ into $S_1$. 
Since our main interest is in consistent networks, we refer to this linkage class as the {\em backward chain}.
\item $S_1, \ldots, S_N$ are {\em phosphoforms} or {\em substrates}. By convention, $\{S_1,\ldots, S_n\} \cup \{S_{\vp_1},\ldots, S_{\vp_m}\} \subseteq \{S_1,\ldots, S_N\}$ with the possibility of strict subset allowed for later convenience. 
\item $C_1, \ldots, C_{n-1}$ are the {\em intermediate complexes of the forward chain} while $D_1, \ldots, D_{m-1}$ are the {\em intermediate complexes of the backward chain}. 
\item $E$ is the {\em enzyme} for the forward chain while $C_\alpha$ is the {\em enzyme} for the backward chain. The index $\alpha$ takes values in $\{1,\ldots,n-1\}$ and therefore $C_\alpha$ is also one of the intermediate complexes of the forward chain. Because of the dual role of $C_\alpha$, it is referred to as a {\em bifunctional enzyme}. 
\item By assumption the indices $\vp_1, \ldots, \vp_m$ are distinct. The entire network $\GG$ is completely specified by the integers $n,m, \vp_1, \ldots, \vp_m$ and $\alpha$. Letting $\vp = (\vp_1,\ldots, \vp_m)$, we will refer to $\GG$ as {\em the covalent modification network $(n,m,\vp)$ with the bifunctional enzyme $C_\alpha$}. 
\item The reaction rate constants in the forward chain are labeled with a $k$ while in the backward chain, they are labeled with an $h$. Moreover, the complex formation and disassociation steps have the superscript $+$ and $-$ respectively while the subscript index indicates the covalent modification step/site. 
\enit
\end{definition}

The name {\em paradoxical} derives from the possibly contradictory roles of the complex $C_\alpha$, for example when the backward chain is the reverse of the forward chain. On the one hand, $C_\alpha$ promotes a step (eg. phosphorylation) in the forward chain, in the sense that an increase in the amount of $C_\alpha$ directly leads to increase in the form $S_{\alpha + 1}$. On the other hand, $C_\alpha$ acts as an enzyme in the backward chain (eg. promotes dephosphorylation).  

\begin{remark}
For $a < b \in \Z_{>0}$, we use the notation $[a,b]$ to mean the sequence $(a,a+1,\ldots,b)$. 
We can think of $\vp$ as an injective function mapping indices in $\{1,\dots,m\}$ to $\{1,\dots,N\}$, i.e. $\vp_j=\vp(j)$. We denote the image of this map by $\vp([1,m])$. We write $j = \vp^{-1}(i)$ if there exists an index $j\in \{1,\dots,m\}$ such that  $i=\vp_j=\vp(j)$.
\end{remark}

\begin{example}
An example of a covalent modification network with a bifunctional enzyme can be found in the E. coli IDHKP-IDH glyoxylate bypass regulation system \cite{shinar2010structural}:
\begin{align*}
&E+I_p \rightleftarrows EI_p \rightarrow E+I, \\
&EI_p + I \rightleftarrows EI_pI \rightarrow EI_p+I_p,
\end{align*}
where $I$ denotes the active, unphosphorylated TCA cycle enzyme isocitrate dehydrogenase (IDH), and $I_p$ denotes the phosphorylated form. Here $EI_p$ (or $E$ in native form) is a bifunctional enzyme. Upto species relabelling, this network is equivalent to the covalent modification network $(2,2,\vp)$ where $\vp_1=2, \vp_2=1$ with the bifunctional enzyme $C_1$:
\begin{align*}
&S_1+E \rightleftarrows C_1 \rightarrow S_2+E\\
&S_2+C_1 \rightleftarrows D_1 \rightarrow S_1+C_1.
\end{align*}
This network is also the smallest among the networks in Definition \ref{def:model}.
\end{example}

\subsection{Biologically significant special cases}
Different choices of $n$, $m$, $\vp$ and $\alpha$ result in different networks. Certain specific choices have special relevance. 
\beit
\item {\bf covalent modification cycle (with bifunctional enzyme).} A covalent modification network where the initial and final substrates of the forward chain are the final and initial substrates of the backward chain is a {\em covalent modification cycle}. More concretely, a covalent modification cycle is a covalent modification network $(n,m,\vp)$ with $\vp_1 =n$ and $\vp_m = 1$. 
\item {\bf reversible covalent modification cycle = futile cycle (with bifunctional enzyme).} 
Every step in the forward chain is reversed in the backward chain. Concretely, this refers to a covalent modification network $(n,n,\vp)$ with $\vp([1,n]) = (n,n-1,\ldots, 2,1)$. 
\enit

\subsection{Steady state equations}
In this subsection, we provide some preliminary analysis on the deficiency and steady state equations of covalent modification networks with a bifunctional enzyme.
 The subsection is organized as follows. First, we state the ODEs of the mass-action system associated with $\GG$, and define some compact expressions that help with the analysis. In particular, Lemmas \ref{lem:cascade} and \ref{lem:cascade_ss} provide some useful relations between these expressions. Using these two lemmas, we provide an equivalent but simplified set of equations at the positive steady state in Proposition \ref{prop:FGHK}, Corollary \ref{cor:HK} and Corollary \ref{cor:steadystate}. Notably, Corollary \ref{cor:steadystate} is also used in obtaining a steady state parameterization in Section \ref{sec:parameterization}.

For convenience, we let ${k_{0}} = k_{0}^+=k_{0}^-={k_{n}} = k_{n}^+=k_{n}^-=0$ and ${h_{0}} = h_{0}^+=h_{0}^-={h_{m}} = h_{m}^+=h_{m}^-=0$  and $c_0=d_0=c_n=d_m=0$. The ODEs of the mass-action system $(G,(k,h))$ in Definition \ref{def:model} are given by
\begin{equation}\label{eq:ODEs}
\begin{aligned}
&\frac{ds_i}{dt} = \begin{cases} F_i+G^\alpha_{\vp^{-1}(i)}, \quad &\text{for}\quad i \in [1,n]  \cap \vp([1,m]),\\  F_i \quad &\text{for} \quad i\in [1,n]\cap  \vp([1,m])^c,  \\ G^\alpha_{\vp^{-1}(i)} \quad &\text{for} \quad i\in [1,n]^c\cap  \vp([1,m]), \end{cases} \\
&\frac{de}{dt}=\sum_{i=1}^n F_i,   \\
&\frac{dd_j}{dt}= - G^\alpha_j + H_j,\quad \text{for}\quad j\in[1,m-1],    \\
&\frac{dc_i}{dt}=-F_i+ K_i+\delta_\alpha(i)\sum_{j=1}^m G^\alpha_j, \quad \text{for}\quad i\in[1,n-1],
\end{aligned}
\end{equation}
where $\alpha$ is a fixed index in $[1,n-1]$ and 
\begin{equation}\label{eq:FGHK}
\begin{aligned}
&F_i := - k_i^+s_ie + k_i^-c_i + {k_{i-1}}c_{i-1}, \quad \text{for}\quad i \in [1,n], \\
&G_j^\alpha := -h_j^+s_{\vp_j}c_\alpha + h_j^-d_j +{h_{j-1}}d_{j-1}, \quad \text{for}\quad j \in [1,m], \\
&H_j :={h_{j-1}}d_{j-1}-{h_j}d_j, \quad \text{for}\quad j \in [1,m], \\
&K_i :={k_{i-1}}c_{i-1}-{k_i}c_i, \quad \text{for}\quad i \in [1,n]. 
\end{aligned}
\end{equation}
The following identities follow immediately:
\begin{align}\label{eq:KnHm}
F_n= K_n (=k_{n-1} c_{n-1}) \quad \text{and} \quad G^\alpha_m = H_m (=h_{m-1} d_{m-1}).
\end{align}
The system \eqref{eq:ODEs} satisfies the conservation laws
\begin{equation}\label{eq:conservation}
\begin{aligned}
&\sum_{i=1}^ns_i+\sum_{i=1}^{n-1}c_i+2\sum_{j=1}^{m-1}d_j = T_s, \\
&e+ \sum_{i=1}^{n-1}c_i+\sum_{j=1}^{m-1}d_j = T_e, 
\end{aligned}
\end{equation}
where $T_s$ and $T_e$ denote the total substrate and total enzyme, respectively.

\begin{theorem}\label{prop:deficiency}
Let $\GG$ be a covalent modification network $(n,m,\vp)$ with bifunctional enzyme $C_\alpha$. Then $\GG$ has deficiency
\[
\delta = 
\begin{cases} 
0, &\mbox{ if } [1,n] \cap \vp([1,m]) = \varnothing, \\
\# ( [1,n] \cap \vp([1,m]) ) -1, & \mbox{ otherwise, } 
\end{cases}
\]
where $\#( [1,n] \cap \vp([1,m]) )$ denotes the number of common indexes between $[1,n]$ and $\vp([1,m])$.
\end{theorem}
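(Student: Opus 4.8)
The plan is to compute the three ingredients of the deficiency $\delta = C - \ell - s$ directly from Definition \ref{def:def}. First I would argue that the forward chain and the backward chain share no complex, so the complex count is simply the sum over the two chains. The forward chain contributes the $n$ complexes $S_1 + E, \ldots, S_n + E$ together with the $n-1$ singletons $C_1, \ldots, C_{n-1}$, and the backward chain contributes $S_{\vp_1} + C_\alpha, \ldots, S_{\vp_m} + C_\alpha$ together with $D_1, \ldots, D_{m-1}$. A forward complex can never equal a backward complex: a singleton $C_i$ cannot equal a two-species complex $S_{\vp_j} + C_\alpha$ nor a singleton $D_j$ (distinct species), and $S_i + E = S_{\vp_j} + C_\alpha$ is impossible because the enzyme $E$ differs from both the substrate $S_{\vp_j}$ and the intermediate $C_\alpha$. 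Hence $C = (2n - 1) + (2m - 1) = 2n + 2m - 2$, and since every reaction lies inside one chain and the two chains share no complex, they form two disjoint connected components, so $\ell = 2$.

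For $s$ I would exploit that the two chains are linkage classes with disjoint complex sets. Writing $S_F$ and $S_B$ for their stoichiometric subspaces, we have $S = S_F + S_B$, and applying inclusion-exclusion to $\delta = C - \ell - s$ together with the per-chain deficiencies $\delta_F, \delta_B$ yields $\delta = \delta_F + \delta_B + \dim(S_F \cap S_B)$. Each isolated chain is a standard covalent-modification cascade of deficiency zero: its only conservation laws are a substrate-plus-intermediate law and an enzyme-plus-intermediate law, so $\dim S_F = 2n - 2$ and $\dim S_B = 2m - 2$. I would confirm this by checking that the $2(n-1)$ forward reaction vectors are linearly independent, peeling them off starting from $S_n$ (which appears in only one of them) down to $S_1$. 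Consequently $\delta_F = \delta_B = 0$, and the whole computation collapses to determining $\dim(S_F \cap S_B)$.

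The heart of the argument is this intersection. Any $v \in S_F \cap S_B$ is supported on the species common to both chains, namely $C_\alpha$ together with $\{S_i : i \in [1,n] \cap \vp([1,m])\}$; writing $I = [1,n] \cap \vp([1,m])$, put $v = \sum_{i \in I} a_i \mathbf{e}_{S_i} + b\,\mathbf{e}_{C_\alpha}$. Because $S_F$ (resp. $S_B$) is exactly the kernel of its two conservation laws inside the forward (resp. backward) species space --- the common kernel has dimension $2n-2$ (resp. $2m-2$) and contains $S_F$ (resp. $S_B$), forcing equality --- membership in the intersection is equivalent to $v$ satisfying all four conservation-law conditions. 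Evaluating them, the enzyme laws force $b = 0$ (the $C_\alpha$ coordinate is the only intermediate coordinate present), after which both surviving substrate laws reduce to $\sum_{i \in I} a_i = 0$. Thus $S_F \cap S_B = \{\sum_{i \in I} a_i \mathbf{e}_{S_i} : \sum_{i \in I} a_i = 0\}$, which has dimension $\#(I) - 1$ when $I \neq \varnothing$ and dimension $0$ when $I = \varnothing$, exactly the claimed formula.

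The main obstacle I anticipate is the bookkeeping in this intersection step, and in particular the role of $C_\alpha$: it is shared by both chains even when no substrate is, so one must verify that the enzyme conservation laws kill its coefficient rather than leaving a spurious extra dimension (this is what guarantees $\delta = 0$ rather than $\delta = 1$ when $\#(I) \le 1$). The secondary, more routine point requiring care is justifying that each chain has precisely two independent conservation laws, equivalently deficiency zero, since both the reduction $\delta = \dim(S_F \cap S_B)$ and the identification of $S_F, S_B$ as kernels of the conservation functionals rely on it.
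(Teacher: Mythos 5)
Your proposal is correct, and it is worth comparing carefully with the paper's own argument. Both proofs share the same skeleton: count complexes ($C = 2n+2m-2$), count linkage classes ($\ell = 2$), observe that each chain's reaction vectors span a space of dimension $2n-2$ (resp.\ $2m-2$), and then the whole question reduces to how much these two subspaces overlap. The difference is in how that overlap is handled. The paper simply \emph{asserts} that there are $\#([1,n]\cap\vp([1,m]))-1$ linear dependence relations among the combined reaction vectors (the phrasing ``if there are \dots dependence relations'' leaves this count unjustified), whereas you actually prove it: you recast the count as $\delta = \delta_F + \delta_B + \dim(S_F \cap S_B)$ via inclusion--exclusion, verify $\delta_F = \delta_B = 0$ by an explicit linear-independence peeling argument, and then compute $S_F \cap S_B$ exactly by identifying each chain's stoichiometric subspace with the common kernel of its two conservation functionals (substrate$+$intermediate and enzyme$+$intermediate). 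Your observation that the enzyme laws kill the $C_\alpha$-coordinate of any vector in the intersection is precisely the point that rules out a spurious extra dimension coming from the shared species $C_\alpha$, and it is what makes the case $\#(I)\le 1$ (deficiency zero) come out right; this is exactly the content the paper glosses over. So your route buys rigor at the one step where the paper is terse, at the cost of some extra bookkeeping (the kernel characterization of $S_F$ and $S_B$); the conclusions agree in all cases, including $I = \varnothing$.
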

\begin{proof}
We observe that the dimensions of  the subspace spanned by the reaction vectors in the forward chain and the subspace spanned by the reaction vectors in the backward chain are $2n-2$ and $2m-2$ respectively. 
If $[1,n] \cap \vp([1,m]) = \varnothing$ then there are no linear dependence relations among the reaction vectors and so the dimension of the stoichiometric space is $2n - 2 + 2m-2$. 
Otherwise, if there are $\# ( [1,n] \cap \vp([1,m]) )- 1$ linear dependence relations, the dimension of the stoichiometric subspace of $\GG$ is
\[
s=2n-2+2m-2- (\# ( [1,n] \cap \vp([1,m]) )- 1) = 2n+2m-3 - \# ( [1,n] \cap \vp([1,m]) ).
\]
Thus $\GG$ has deficiency
\[
\delta =C-\ell-s= (2n+2m-2)-2-(2n+2m-3 - \# ( [1,n] \cap \vp([1,m]) ) ) =\# ( [1,n] \cap \vp([1,m]) ) -1, 
\]
and a similar calculation gives the stated deficiency for the case of $[1,n] \cap \vp([1,m]) = \varnothing$.
\end{proof}
\begin{remark}\label{rem:deficiency}
From Proposition \ref{prop:deficiency}, $\GG$ has deficiency one if and only if $\# ( [1,n] \cap \vp([1,m]) ) =2$, i.e. when the forward and backward chain have exactly two substrates in common. In all other cases, $\GG$ does not have deficiency one and thus the Shinar-Feinberg deficiency one theorem cannot be applied.
\end{remark}

Before examining the steady states of \eqref{eq:ODEs}, we state an important fact regarding the expressions $K$ and $H$.
\begin{lemma}[Cascading sums of $K$ and $H$]\label{lem:cascade}
The expressions $K$ and $H$ in \eqref{eq:FGHK} satisfy
\begin{enumerate}
\item[(1)] $\ds \sum_{\ell=1}^i K_\ell = -k_ic_i$ for $i=1,\dots, n$. In particular, $\ds \sum_{\ell=1}^n K_\ell =0.$
\item[(2)] $\ds \sum_{\ell=j+1}^m H_\ell= h_jd_j$ for $j=0,\ldots,m-1$. In particular, $\ds \sum_{\ell=1}^m H_\ell =0.$
\end{enumerate}
\end{lemma}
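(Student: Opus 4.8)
The plan is to exploit the fact that both $K_i$ and $H_j$ are defined in \eqref{eq:FGHK} as \emph{differences of consecutive terms}, so that each partial sum telescopes. I would treat the two parts separately, but with the same underlying mechanism, and lean on the boundary conventions $k_0 = k_n = h_0 = h_m = 0$ and $c_0 = c_n = d_0 = d_m = 0$ introduced just before \eqref{eq:ODEs}.

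For part (1), I would write $\sum_{\ell=1}^i K_\ell = \sum_{\ell=1}^i (k_{\ell-1}c_{\ell-1} - k_\ell c_\ell)$ and observe that all interior terms cancel in pairs, leaving only the unpaired left endpoint $k_0 c_0$ and the unpaired right endpoint $-k_i c_i$. The convention $k_0 = 0$ (equivalently $c_0 = 0$) kills the first term, so the partial sum equals $-k_i c_i$, as claimed. Specializing to $i = n$ and using $k_n = 0$ then gives $\sum_{\ell=1}^n K_\ell = 0$.

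For part (2) the argument is identical up to relabeling: $\sum_{\ell=j+1}^m H_\ell = \sum_{\ell=j+1}^m (h_{\ell-1}d_{\ell-1} - h_\ell d_\ell)$ telescopes, leaving the left endpoint $h_{(j+1)-1}d_{(j+1)-1} = h_j d_j$ and the right endpoint $-h_m d_m$. The convention $h_m = 0$ removes the last term, yielding $h_j d_j$; taking $j = 0$ together with $h_0 = 0$ then recovers $\sum_{\ell=1}^m H_\ell = 0$.

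There is no substantive obstacle here; the content is a single telescoping identity applied twice. The only point requiring care is bookkeeping at the boundaries, in particular correctly matching the shifted lower limit $\ell = j+1$ in part (2) against the index convention, so that the surviving left endpoint reads $h_j d_j$ rather than $h_{j+1} d_{j+1}$. I would verify this index arithmetic explicitly, since the asymmetry between the two summation ranges (starting at $\ell = 1$ versus $\ell = j+1$) is the one place a sign or off-by-one slip could creep in.
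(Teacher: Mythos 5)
Your proof is correct and is exactly the argument the paper has in mind: the paper's own proof simply notes that the identities follow from the definitions in \eqref{eq:FGHK} together with the conventions $k_0=k_n=0$ and $h_0=h_m=0$, i.e.\ the same telescoping you spell out. Your explicit bookkeeping at the boundaries (including the shifted lower limit $\ell = j+1$ in part (2)) is accurate and just makes the paper's one-line justification fully explicit.
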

\begin{proof}
The equations can be easily derived from the definition of $K$ and $H$ in \eqref{eq:FGHK} and the fact that $k_0=k_n=0$ and $h_0=h_m=0$. 
\end{proof}

\begin{lemma}[Cascading sums of $G$ and $F$ at steady state]\label{lem:cascade_ss}
At steady state, we have the following: 
\been[(1)]
\item $\ds \sum_{i=1}^n F_i  = 0$.
\item $\ds \sum_{j=1}^m G^\alpha_j  = 0$.
\enen
\end{lemma}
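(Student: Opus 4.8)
The plan is to read off both identities directly from the ODE system \eqref{eq:ODEs}, exploiting the fact that the evolution equations for the free enzyme $e$ and for the backward intermediates $d_j$ are \emph{themselves} sums of the $F_i$ and $G^\alpha_j$ expressions, together with the telescoping already recorded in Lemma \ref{lem:cascade}.

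For part (1), I would simply invoke the equation $\frac{de}{dt} = \sum_{i=1}^n F_i$ appearing in \eqref{eq:ODEs}. At any steady state the left-hand side vanishes, so $\sum_{i=1}^n F_i = 0$ follows at once, with no auxiliary input. For part (2), the strategy is to sum the equations $\frac{dd_j}{dt} = -G^\alpha_j + H_j$ over $j \in [1,m-1]$ and set the left-hand side to zero at steady state, which yields $\sum_{j=1}^{m-1} G^\alpha_j = \sum_{j=1}^{m-1} H_j$. The point requiring care is that this sum runs only up to $m-1$, since $d_m$ does not exist (we have set $d_m = 0$), so the missing $j=m$ term must be recovered. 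To do this I would combine Lemma \ref{lem:cascade}(2), which gives $\sum_{j=1}^{m} H_j = 0$ and hence $\sum_{j=1}^{m-1} H_j = -H_m$, with the boundary identity $G^\alpha_m = H_m$ from \eqref{eq:KnHm}. This gives $\sum_{j=1}^{m-1} G^\alpha_j = -H_m = -G^\alpha_m$, and transposing $G^\alpha_m$ yields $\sum_{j=1}^m G^\alpha_j = 0$.

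The argument is essentially bookkeeping rather than containing a genuine obstacle; the only subtlety — and the step most prone to an off-by-one slip — is correctly accounting for the boundary term $G^\alpha_m = H_m$, which lies just outside the index range of the $d_j$ sum and so must be reinstated by hand. I note that an alternative route to part (2) would instead sum the $c_i$ equations, using that $\sum_{i=1}^{n-1}\delta_\alpha(i) = 1$ (because $\alpha \in [1,n-1]$) together with Lemma \ref{lem:cascade}(1), the identity $F_n = K_n$ from \eqref{eq:KnHm}, and part (1) of the present lemma. I would favor the $d_j$ route, however, since it is self-contained and does not depend on having first established part (1).
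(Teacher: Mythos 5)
Your proposal is correct and uses exactly the same ingredients as the paper's proof: part (1) read off from the $de/dt$ equation, and part (2) obtained from the steady-state relations $G^\alpha_j = H_j$ for $j \in [1,m-1]$, the boundary identity $G^\alpha_m = H_m$ from \eqref{eq:KnHm}, and the telescoping $\sum_{j=1}^m H_j = 0$ of Lemma \ref{lem:cascade}(2). The only difference is bookkeeping order — the paper sums all $m$ identities $G^\alpha_j = H_j$ at once, while you sum over $j \le m-1$ and reinstate the $j=m$ term afterward — which is an immaterial rearrangement of the same argument.
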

\begin{proof}
The first claim is immediate from $de/dt$. To obtain the second equality, we first recall from \eqref{eq:KnHm}  that $G^\alpha_m = H_m$. Thus we have
\begin{align}\label{eq:sumG}
\sum_{j=1}^m G^\alpha_j = \sum_{j=1}^m H_j = 0,
\end{align}
where the second equality follows from Lemma \ref{lem:cascade}(2).
\end{proof}

\begin{proposition}\label{prop:FGHK}
The steady states  for the ODEs \eqref{eq:ODEs} are solutions of the following equations

\begin{enumerate}
\item[(1)] $F_i= -G^\alpha_{\vp^{-1}(i)} \quad \text{for}\quad i \in [1,n] \cap \vp([1,m])$,
\item[(2)] $F_i= 0 \quad \text{for}\quad i \in [1,n] \cap \vp([1,m])^c$,
\item[(3)] $G^\alpha_j=0 \quad \text{for} \quad \vp_j \in [1,n]^c \cap \vp([1,m])$,
\item[(4)] $G^\alpha_j = H_j \quad \text{for}\quad j\in[1,m-1]$,
\item[(5)] $F_i = K_i, \quad \text{for}\quad i\in[1,n-1]$. 
\end{enumerate}

\end{proposition}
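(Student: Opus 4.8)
The plan is to set each right-hand side of the ODE system \eqref{eq:ODEs} equal to zero and read off the five equations coordinate by coordinate, invoking the cascade lemmas only where the coordinates are coupled. The work is almost entirely bookkeeping, so I would organize it by which block of variables each equation comes from.

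First I would treat the substrate equations $ds_i/dt = 0$. Because the right-hand side is given by a three-way case split according to whether $i$ lies in $[1,n] \cap \vp([1,m])$, in $[1,n]\cap\vp([1,m])^c$, or in $[1,n]^c\cap\vp([1,m])$, setting each case to zero yields exactly (1), (2), and (3) respectively; for (3) I would use that $\vp^{-1}(i)=j$ precisely when $i=\vp_j$, so $G^\alpha_{\vp^{-1}(i)}=0$ becomes $G^\alpha_j=0$. Equation (4) is then immediate from $dd_j/dt=-G^\alpha_j+H_j=0$ for $j\in[1,m-1]$.

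The one step that is not a pure reading-off is (5), and this is where I expect the only real subtlety. Setting $dc_i/dt=0$ gives $-F_i+K_i+\delta_\alpha(i)\sum_{j=1}^m G^\alpha_j=0$. For $i\neq\alpha$ the Kronecker term vanishes and (5) is immediate; for $i=\alpha$ the coupling term $\sum_{j=1}^m G^\alpha_j$ survives and must be shown to vanish. Here I would invoke Lemma \ref{lem:cascade_ss}(2), which states precisely that $\sum_{j=1}^m G^\alpha_j=0$ at steady state, so the extra term drops and (5) holds for every $i\in[1,n-1]$, including $i=\alpha$. This is the main obstacle in the sense that it is the only place the argument is not self-contained from \eqref{eq:ODEs} alone.

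Finally, since the statement is meant as an equivalent system, I would also verify the converse: that (1)--(5) force every right-hand side of \eqref{eq:ODEs} to vanish. The substrate and $d_j$ equations are recovered directly from (1)--(4), and (5) recovers $dc_i/dt=0$ for $i\neq\alpha$. The two remaining checks are $de/dt=\sum_{i=1}^nF_i=0$ and the $i=\alpha$ case of $dc_i/dt$. For the former I would combine (5) with the identity $F_n=K_n$ from \eqref{eq:KnHm} to write $\sum_{i=1}^n F_i=\sum_{i=1}^n K_i$, which is zero by Lemma \ref{lem:cascade}(1); for the latter I would combine (4) with $G^\alpha_m=H_m$ from \eqref{eq:KnHm} to write $\sum_{j=1}^m G^\alpha_j=\sum_{j=1}^m H_j=0$ by Lemma \ref{lem:cascade}(2), which kills the coupling term. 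This closes the equivalence.
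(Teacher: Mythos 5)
Your proof is correct and follows essentially the same route as the paper's: equations (1)--(4) are read off directly from setting $ds_i/dt$, $dd_j/dt$ to zero, and equation (5) comes from $dc_i/dt=0$ together with Lemma \ref{lem:cascade_ss}(2) to eliminate the coupling term $\delta_\alpha(i)\sum_{j=1}^m G^\alpha_j$ in the $i=\alpha$ case. The converse verification you add (via $F_n=K_n$, $G^\alpha_m=H_m$ and Lemma \ref{lem:cascade}) is not required by the proposition as stated and is not in the paper's proof, but it is correct and consistent with the paper's description of these as an equivalent system of steady-state equations.
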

\begin{proof}
Equations $(1)-(4)$ follow directly by setting $ds_i/dt=0$, $de/dt=0$ and $dd_j/dt=0$ in \eqref{eq:ODEs}. 
Setting $dc_i/dt=0$ in \eqref{eq:ODEs} and Lemma \ref{lem:cascade_ss}  yields $F_i = K_i$  for $i\in[1,n-1]$. 
\end{proof}

The next corollary, which will be used extensively in Section \ref{sec:ACR}, follows directly from Proposition \ref{prop:FGHK}.
\begin{corollary}\label{cor:HK}
The steady states  for the ODEs \eqref{eq:ODEs} satisfy the following equations
\begin{enumerate}
\item[(1)] $K_i = - H_{\vp^{-1}(i)}  \quad \text{for}\quad i \in [1,n] \cap \vp([1,m])$, 
\item[(2)] $K_i = 0  \quad \text{for}\quad i \in [1,n] \cap \vp([1,m])^c$, 
\item[(3)] $H_j=0 \quad \text{for} \quad \vp_j \in [1,n]^c \cap \vp([1,m])$. 
\end{enumerate}
\end{corollary}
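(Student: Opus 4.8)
The plan is to derive each of the three claimed identities directly from Proposition \ref{prop:FGHK} by eliminating the $F$ and $G$ expressions in favor of $K$ and $H$. The key observation is that Proposition \ref{prop:FGHK}(5) gives $F_i = K_i$ for $i \in [1,n-1]$, and Proposition \ref{prop:FGHK}(4) gives $G^\alpha_j = H_j$ for $j \in [1,m-1]$, so the strategy is simply to substitute these two relations into the remaining three equations (1)--(3) of the proposition. The only subtlety is bookkeeping at the boundary indices $i=n$ and $j=m$, which are excluded from (4) and (5), so I must handle those cases separately using the identities \eqref{eq:KnHm}, namely $F_n = K_n$ and $G^\alpha_m = H_m$.

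For item (1), I take Proposition \ref{prop:FGHK}(1), which reads $F_i = -G^\alpha_{\vp^{-1}(i)}$ for $i \in [1,n] \cap \vp([1,m])$. On the left side, if $i \in [1,n-1]$ I use (5) to write $F_i = K_i$, and if $i = n$ I use \eqref{eq:KnHm} to write $F_n = K_n$; either way $F_i = K_i$ for all $i \in [1,n]$. On the right side, writing $j = \vp^{-1}(i)$, I have $G^\alpha_j = H_j$ by (4) when $j \in [1,m-1]$ and by \eqref{eq:KnHm} when $j = m$; again $G^\alpha_j = H_j$ for all relevant $j$. Substituting both gives $K_i = -H_{\vp^{-1}(i)}$, as claimed.

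Items (2) and (3) follow the same template. For (2), Proposition \ref{prop:FGHK}(2) says $F_i = 0$ for $i \in [1,n] \cap \vp([1,m])^c$; replacing $F_i$ by $K_i$ (via (5) for $i \le n-1$ and via \eqref{eq:KnHm} for $i = n$) yields $K_i = 0$. For (3), Proposition \ref{prop:FGHK}(3) gives $G^\alpha_j = 0$ for $\vp_j \in [1,n]^c \cap \vp([1,m])$; replacing $G^\alpha_j$ by $H_j$ (via (4) for $j \le m-1$ and via \eqref{eq:KnHm} for $j = m$) yields $H_j = 0$. In each case the substitution is valid across the full index range precisely because the boundary identities \eqref{eq:KnHm} fill in exactly the endpoint that Proposition \ref{prop:FGHK}(4)--(5) omit.

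I do not anticipate any genuine obstacle here, since this corollary is a direct algebraic consequence of the preceding proposition together with \eqref{eq:KnHm}. The one point demanding care is verifying that the boundary indices $i=n$ in items (1)--(2) and $j=m$ in item (3) are actually covered; concretely, one should check whether an index $i=n$ can lie in $[1,n]\cap\vp([1,m])$ and whether $j=m$ can arise with $\vp_m \in [1,n]^c$, and in each such case invoke \eqref{eq:KnHm} rather than Proposition \ref{prop:FGHK}(4)--(5). Once that case-check is made explicit, the proof is a one-line substitution for each part.
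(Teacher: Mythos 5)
Your proposal is correct and follows essentially the same route as the paper: the paper's proof likewise first extends Proposition \ref{prop:FGHK}(4)--(5) to the boundary indices via \eqref{eq:KnHm}, obtaining $G^\alpha_j = H_j$ for all $j\in[1,m]$ and $F_i = K_i$ for all $i\in[1,n]$, and then substitutes into equations (1)--(3) of the proposition exactly as you do. The boundary-index bookkeeping you flag as the ``one point demanding care'' is precisely what the paper handles with its auxiliary identities $(4')$ and $(5')$.
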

\begin{proof}
We first note that $G^\alpha_m = H_m = h_{m-1}d_{m-1}$ and $F_n = K_n = k_{n-1}c_{n-1}$. Thus due to Proposition \ref{prop:FGHK}, the steady states for the ODEs \eqref{eq:ODEs} satisfy
\begin{itemize}
\item[($4'$)] $G^\alpha_j = H_j \quad \text{for}\quad j\in[1,m]$,
\item[($5'$)] $F_i = K_i, \quad \text{for}\quad i\in[1,n]$. 
\end{itemize}
Combining equation $(1)$ in Proposition \ref{prop:FGHK} and $(4'), (5')$ yields  $K_i = - H_{\vp^{-1}(i)}$ for $i \in [1,n] \cap \vp([1,m])$. Combining equation $(2)$  in Proposition \ref{prop:FGHK} and $(5')$ yields $K_i = 0$  for $i \in [1,n] \cap \vp([1,m])^c$. Finally, combining equation $(3)$  in Proposition \ref{prop:FGHK} and $(4')$ yields $H_j=0$ for $\vp_j \in [1,n]^c \cap \vp([1,m])$. 
\end{proof}

Next, we introduce two sets of quantities that are important in obtaining the ACR values in Section \ref{sec:ACR} and the steady state parameterization in Section \ref{sec:parameterization}.

\begin{definition} \label{def:star}
We define 
\[
k_i^* \coloneqq \frac{ k_i + k_i^-}{ k_i  k_i^+} \quad \text{for} \quad i\in [1, n-1]
\]
and
\[
h_j^* \coloneqq \frac{ h_j + h_j^-}{ h_j  h_j^+} \quad \text{for} \quad j\in [1, m-1].
\]
\end{definition}

\begin{corollary}\label{cor:steadystate}
The steady states for the ODEs \eqref{eq:ODEs} are solutions of the following equations
\begin{enumerate}

\item[(1)] $k_{i-1} c_{i-1} - k_{i} c_{i} = -h_{\vp^{-1}(i)-1} d_{\vp^{-1}(i)-1} + h_{\vp^{-1}(i)} d_{\vp^{-1}(i)}, \quad \text{for}\quad i \in [1,n] \cap \vp([1,m])$, 
\item[(2)] $k_{i-1}c_{i-1}=k_ic_i, \quad \text{for}\quad i  \in  [1,n] \cap \vp([1,m])^c$,  
\item[(3)] $h_{j-1}d_{j-1}=h_jd_j, \quad \text{for}\quad i = \vp_j \in  [1,n]^c \cap \vp([1,m])$,  
\item[(4)] $s_{\vp_j}c_\alpha  = h_j h_j^* d_j,\quad \text{for}\quad j\in[1,m-1]$, 
\item[(5)] $s_i e = k_i k_i^* c_i, \quad \text{for}\quad i\in[1,n-1]$. 
\end{enumerate}
\end{corollary}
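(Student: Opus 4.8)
The plan is to derive all five equations by direct substitution of the definitions in \eqref{eq:FGHK} and Definition \ref{def:star} into the already-established steady state relations, splitting the five equations into two groups according to their provenance.

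The first group, equations (1)--(3), comes directly from Corollary \ref{cor:HK}. Since $K_i = k_{i-1}c_{i-1} - k_i c_i$ and $H_j = h_{j-1}d_{j-1} - h_j d_j$ by \eqref{eq:FGHK}, I would simply unpack the relations $K_i = -H_{\vp^{-1}(i)}$, $K_i = 0$, and $H_j = 0$. For instance, $K_i = -H_{\vp^{-1}(i)}$ reads $k_{i-1}c_{i-1} - k_i c_i = -(h_{\vp^{-1}(i)-1}d_{\vp^{-1}(i)-1} - h_{\vp^{-1}(i)}d_{\vp^{-1}(i)})$, which is exactly (1); and $K_i = 0$, $H_j = 0$ give (2) and (3) after transposing a single term. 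No computation beyond rewriting is needed in this group.

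The second group, equations (4) and (5), comes from parts (4) and (5) of Proposition \ref{prop:FGHK}. Starting from $G^\alpha_j = H_j$ and expanding both sides via \eqref{eq:FGHK}, the common term $h_{j-1}d_{j-1}$ cancels, leaving $-h_j^+ s_{\vp_j} c_\alpha + h_j^- d_j = -h_j d_j$, hence $h_j^+ s_{\vp_j} c_\alpha = (h_j^- + h_j) d_j$. Dividing by $h_j^+$ and recognizing that $(h_j + h_j^-)/h_j^+ = h_j h_j^*$ by Definition \ref{def:star} produces (4). The identical manipulation applied to $F_i = K_i$ cancels $k_{i-1}c_{i-1}$, yields $k_i^+ s_i e = (k_i^- + k_i)c_i$, and then the identity $(k_i + k_i^-)/k_i^+ = k_i k_i^*$ gives (5).

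I do not expect a genuine obstacle here: the entire argument is a sequence of substitutions and one-line cancellations. The only points requiring care are bookkeeping ones — verifying that the $h_{j-1}d_{j-1}$ and $k_{i-1}c_{i-1}$ terms cancel cleanly, so that the parameter combinations collapse to precisely $h_j h_j^*$ and $k_i k_i^*$ rather than some off-by-one variant, and checking that the index ranges $[1,m-1]$ and $[1,n-1]$ in (4)--(5) match those in Proposition \ref{prop:FGHK}(4)--(5). Once the starred quantities are recognized in the simplified expressions, the equations follow immediately.
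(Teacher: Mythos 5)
Your proof is correct and takes essentially the same route as the paper: the paper likewise obtains (1)--(3) by combining the equations of Proposition \ref{prop:FGHK} (which is precisely the content of Corollary \ref{cor:HK} that you cite), and obtains (4)--(5) by expanding $G^\alpha_j = H_j$ and $F_i = K_i$ from \eqref{eq:FGHK}, cancelling the common terms, and recognizing the constants $h_j h_j^*$ and $k_i k_i^*$ of Definition \ref{def:star}. The only difference is that the paper additionally records the converse implication (that equations (2)--(5) imply equations (2)--(3) of Proposition \ref{prop:FGHK}), which is not required by the statement as written but is used later for the steady state parameterization in Section \ref{sec:parameterization}.
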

\begin{proof}
Equations $(1), (4), (5)$  are equivalent to  equations $(1), (4), (5)$ in Proposition \ref{prop:FGHK}. Equation $(2)$ follows from combining equations $(2)$ and $(5)$ in Proposition \ref{prop:FGHK} and equation $(3)$ follows from combining equations $(3)$ and $(4)$ in Proposition \ref{prop:FGHK}.  Conversely, equations $(2)$ and $(5)$ here imply equation $(2)$ in Proposition \ref{prop:FGHK}, and equations $(3)$ and $(4)$ here imply equation $(3)$ in Proposition \ref{prop:FGHK}.
\end{proof}

\begin{remark}
Equation $(4)$  in Corollary \ref{cor:steadystate} is particularly important in identifying ACR species in Section \ref{sec:ACR}. It implies that if there exists an index $j$ such that the ratio $d_j/c_\alpha$ is a constant independent of the conserved quantities, then the network has ACR in species $\vp_j$. Consequently, the main approach in  Section \ref{sec:ACR} involves finding such an index $j$.
\end{remark}

\section{Bifunctional enzyme generates concentration robustness}\label{sec:ACR}
%
%

\begin{table}[htbp] 
\centering
\begin{adjustbox}{max width=\textwidth}
\renewcommand{\arraystretch}{1.2}
\begin{tabular}{|c|c|c||c|c|c|}
\hline
\specialcell{Type of\\operation} & \specialcell{Forward chain\\Index $(1, \ldots, n)$\\of $(S_1, \ldots, S_n)$} & \specialcell{Backward chain\\Index $(\vp_1, \ldots, \vp_m)$\\of $(S_{\vp_1}, \ldots, S_{\vp_m})$} & \specialcell{Enzyme\\complex\\$C_\alpha, \alpha =$} & \specialcell{ACR\\species\\$S_i, i=$} &\specialcell{ACR\\value} \\
\hline
\hline
\multirow{2}{*}{\specialcell{Futile\\cycle}} & \multirow{2}{*}{$(1,2,3)$} & \multirow{2}{*}{$(3,2,1)$} 
& $1$ & $2$ & $ k_1 h_2^*$\\
\cline{4-6}
& & & $2$ & $3$ & $ k_2 h_1^*$ \\
 \hline
\multirow{2}{*}{\specialcell{Deletion}} & \multirow{2}{*}{$(1,2,3)$} & \multirow{2}{*}{$(3,1)$} 
& 1 & 3 & $ k_1 h_1^*$\\
\cline{4-6}
& 
&  & 2 & 3 & $ k_2 h_1^*$\\
 \hline
 \multirow{2}{*}{\specialcell{Insertion}} & \multirow{2}{*}{$(1,2,3)$} & \multirow{2}{*}{$(3,4,2,1)$} 
& 1 & 2 & $ k_1 h_3^*$\\
\cline{4-6}
& 
&  & 2 & 3,4 & $ k_2 h_1^*$, $ k_2 h_2^*$\\
\hline 
\multirow{2}{*}{\specialcell{Deletion  \\ \& Insertion}} & \multirow{2}{*}{$(1,2,3)$} & \multirow{2}{*}{$(3,4,1)$} 
& 1 & 3,4 & $k_1 h_1^*$, $k_1 h_2^*$ \\
\cline{4-6}
& 
&  & 2 & 3,4 & $k_2 h_1^*$, $k_2 h_2^*$ \\
\hline 
\multirow{2}{*}{\specialcell{Permutation}} & \multirow{2}{*}{$(1,2,3)$} & \multirow{2}{*}{$(3,1,2)$} 
& 1 & -- & --\\
\cline{4-6}
& 
 & & 2 & 3 & $ k_2 h_1^*$\\
 \hline
\multirow{2}{*}{\specialcell{Permutation}} & \multirow{2}{*}{$(1,2,3)$} & \multirow{2}{*}{$(2,3,1)$} 
& 1 & 3 & $ k_1 h_2^*$\\
\cline{4-6}
& 
 & & 2 & -- & -- \\
 \hline
\multirow{2}{*}{\specialcell{Deletion}} &  \multirow{2}{*}{$(1,2,3,4)$} & \multirow{2}{*}{$(4,2,1)$} 
& 1 & 2 & $ k_1 h_2^*$\\
\cline{4-6}
& 
&  & 2 & 4 & $ k_2 h_1^*$\\
\cline{4-6}
& 
&  & 3 & 4 & $ k_3 h_1^*$\\
 \hline
 \multirow{2}{*}{\specialcell{Insertion}} & \multirow{2}{*}{$(1,2,3,4)$} & \multirow{2}{*}{$(4,3,5,2,1)$} 
& 1 & 2 & $ k_1 h_4^*$\\
\cline{4-6}
& 
&  & 2 & 3,5 & $ k_2 h_2^*$, $ k_2 h_3^*$\\
\cline{4-6}
& 
&  & 3 & 4 & $ k_3 h_1^*$\\
\hline 
\multirow{2}{*}{\specialcell{Permutation}} & \multirow{2}{*}{$(1,2,3,4)$} & \multirow{2}{*}{$(4,2,3,1)$} 
& 1 & 3 & $ k_1 h_3^*$\\
\cline{4-6}
& 
&  & 2 & -- & -- \\
\cline{4-6}
& 
&  & 3 & 4 & $ k_3 h_1^*$ \\
 \hline
\multirow{2}{*}{\specialcell{Permutation}} & \multirow{2}{*}{$(1,2,3,4)$} & \multirow{2}{*}{$(2,4,3,1)$} 
& 1 & 3 & $ k_1 h_3^*$\\
\cline{4-6}
& 
&  & 2 & -- & -- \\
\cline{4-6}
& 
&  & 3 & -- & -- \\
 \hline
\multirow{2}{*}{\specialcell{Permutation \\\& Insertion}} & \multirow{2}{*}{$(1,2,3,4)$} & \multirow{2}{*}{$(4,2,5,3,1)$} 
& 1 & 3 & $ k_1 h_4^*$\\
\cline{4-6}
& 
&  & 2 & -- & -- \\
\cline{4-6}
& 
&  & 3 & 4 & $ k_3 h_1^*$\\
\hline 
\multirow{2}{*}{\specialcell{Permutation \\\& Insertion \\\& Deletion}} & \multirow{2}{*}{$(1,2,3,4)$} & \multirow{2}{*}{$(4,5,1,3)$} 
& 1 & -- &  --\\
\cline{4-6}
& 
& & 2 & -- & --  \\
\cline{4-6}
& 
&  & 3 & 4,5 & $k_3h^*_1, k_3h^*_2$\\
\hline 
\end{tabular}
\end{adjustbox}
 \caption{{\bf Some examples of covalent modification networks and their ACR properties.} Every covalent modification network is obtained from some futile cycle by performing a series of insertion, deletion or permutation steps. For each network, the table indicates the ACR species and its ACR value given the choice of bifunctional enzyme $C_\alpha$.} \label{table:alphavsacr}
 \end{table}

In this section, we will show that ACR is a fairly generic property for covalent modification networks with a bifunctional enzyme. 
In particular, we will first show the existence of an ACR species in the futile cycle with bifunctional enzyme.

\begin{theorem}[futile cycle]\label{thm:base}
Let $\GG$ be a covalent modification network $(n,m,\vp)$ with a bifunctional enzyme $C_\alpha$, where  $m=n$ and $\vp_j=n+1-j$. 
\begin{align*}
S_1 + E &\xrightleftarrows{k_1^+}{k_1^-} C_1 \xrightarrow{ k_1} S_2 + E \xrightleftarrows{k_2^+}{k_2^-} C_2 \xrightarrow{ k_2} \ldots \ldots \xrightleftarrows{k_{n-1}^+}{k_{n-1}^-} C_{n-1} \xrightarrow{ k_{n-1}} S_n + E, \\ 
S_{n} + C_\alpha &\xrightleftarrows{h_1^+}{h_1^-} D_1 \xrightarrow{ h_1} S_{n-1} + C_\alpha \xrightleftarrows{h_2^+}{h_2^-} D_2 \xrightarrow{h_2} \ldots \ldots \xrightleftarrows{h_{n-1}^+}{h_{n-1}^-} D_{n-1} \xrightarrow{ h_{n-1}} S_{1} + C_\alpha, 
\end{align*}    
Then $\GG$ has ACR in species $S_{\alpha+1}$ with ACR value $k_\alpha h^*_{n-\alpha}.$
\end{theorem}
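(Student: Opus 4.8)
The plan is to exhibit the constant value of $s_{\alpha+1}$ at an arbitrary positive steady state using the reduced steady state equations derived in Section~\ref{sec:model}, following the strategy flagged in the remark after Corollary~\ref{cor:steadystate}: locate an index $j$ for which the ratio $d_j/c_\alpha$ is forced to be a constant independent of the conserved quantities. For the futile cycle we have $\vp_j = n+1-j$, hence $\vp^{-1}(i) = n+1-i$, and every substrate index is shared, i.e. $[1,n]\cap\vp([1,m]) = [1,n]$. Since $\vp_{n-\alpha} = \alpha+1$, the species $S_{\alpha+1}$ corresponds to the index $j = n-\alpha$, which lies in $[1,n-1]$ because $\alpha \in [1,n-1]$.

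First I would invoke equation~(4) of Corollary~\ref{cor:steadystate} at $j = n-\alpha$, giving $s_{\alpha+1}\, c_\alpha = h_{n-\alpha}\, h_{n-\alpha}^* \, d_{n-\alpha}$. At a positive steady state $c_\alpha > 0$, so it remains to show that $h_{n-\alpha} d_{n-\alpha}$ equals $k_\alpha c_\alpha$; then dividing by $c_\alpha$ yields $s_{\alpha+1} = k_\alpha h_{n-\alpha}^*$, a quantity built only from rate constants, which is exactly the claimed ACR value.

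The heart of the argument --- and the step I expect to be the main obstacle --- is establishing the identity $k_\alpha c_\alpha = h_{n-\alpha} d_{n-\alpha}$. Here I would combine the linking relations of Corollary~\ref{cor:HK}(1), which in this case read $K_i = -H_{n+1-i}$ for all $i \in [1,n]$, with the cascading-sum formulas of Lemma~\ref{lem:cascade}. Summing $K_i = -H_{n+1-i}$ over $i = 1,\dots,\alpha$ and applying Lemma~\ref{lem:cascade}(1) turns the left-hand side into $-k_\alpha c_\alpha$; re-indexing the right-hand side by $\ell = n+1-i$ turns it into $-\sum_{\ell=n-\alpha+1}^{n} H_\ell$, which Lemma~\ref{lem:cascade}(2) (with $j=n-\alpha$) collapses to $-h_{n-\alpha} d_{n-\alpha}$. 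Matching the two sides gives the desired identity. The bookkeeping of this step --- getting the summation ranges and the index reversal $i \leftrightarrow n+1-i$ exactly right so that the partial $K$-sum and the partial $H$-sum terminate at matching cut points --- is the only delicate part; everything else is substitution.

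Finally I would note that the resulting value $k_\alpha h_{n-\alpha}^*$ does not depend on the total-constant vector, so the invariance of $s_{\alpha+1}$ holds across all positive steady states and for every admissible choice of rate constants. Combined with consistency of the futile cycle (established in Section~\ref{sec:existenceofss}), this yields ACR of $\GG$ in $S_{\alpha+1}$ in the sense of Definition~\ref{def:acr}, with ACR value $k_\alpha h_{n-\alpha}^*$.
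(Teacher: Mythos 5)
Your proposal is correct and follows essentially the same route as the paper's proof: the identity $k_\alpha c_\alpha = h_{n-\alpha}d_{n-\alpha}$ obtained by summing $K_i = -H_{n+1-i}$ (Corollary~\ref{cor:HK}) over $i\in[1,\alpha]$ and collapsing both sides via Lemma~\ref{lem:cascade}, followed by equation~(4) of Corollary~\ref{cor:steadystate} at $j=n-\alpha$ with $\vp_{n-\alpha}=\alpha+1$. The index bookkeeping you flag as delicate is carried out exactly as in the paper (the paper states the identity for all $i\in[1,n]$ and then specializes to $i=\alpha$, which is cosmetically the only difference), and your closing appeal to consistency from Section~\ref{sec:existenceofss} matches the paper's overall structure.
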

\begin{proof}
We first observe from the assumption that $\vp_j=n+1-j$ that $\vp([1,m]) = [1,n]$ and $\vp^{-1}(i)=n+1-i$. Thus from Corollary \ref{cor:HK}, we have 
\[
K_i=-H_{n+1-i}\quad \text{for} \quad i\in[1,n].
\]
Thus we have
\[
\sum_{\ell=1}^i K_\ell =- \sum_{\ell=n+1-i}^n H_\ell  \quad \text{for} \quad i\in [1,n].
\]
Using Lemma \ref{lem:cascade}, we obtain 
\begin{equation}\label{eq:base_ratio}
k_i c_i = h_{n-i}d_{n-i} \quad \text{for} \quad i\in [1,n].
\end{equation}
In particular, setting $i=\alpha$ in the equation \eqref{eq:base_ratio} gives us
\[
k_\alpha c_\alpha = h_{n-\alpha}d_{n-\alpha}.
\]
Finally, setting $j=n-\alpha$ in equation $(4)$ of Corollary \ref{cor:steadystate} and observing that $\vp_{n-\alpha}=\alpha+1$, we obtain $s_{\alpha+1}=k_\alpha h^*_{n-\alpha}$.
\end{proof}

\begin{remark}
The steady state parameterization in Section \ref{sec:parameterization} further implies that for a futile cycle, $S_{\alpha+1}$ is the only ACR species.
\end{remark}

The most general network can be obtained from a futile cycle by doing a {\em finite sequence of only three operations on the backward chain}: (i) deletion, (ii) insertion, and (iii) permutation. 
For ease of reading, before providing the most general result, we will show in Theorems \ref{thm:deletion}, \ref{thm:insertion}, and \ref{thm:permutation} that the covalent modification networks obtained from performing each of the three operations on the futile cycle (permutations need to satisfy certain assumption) still have ACR.

\begin{theorem}[Deletion in the backward chain]\label{thm:deletion}
Let $\GG$ be a covalent modification network $(n,m,\vp)$ with a bifunctional enzyme $C_\alpha$, where $1=\vp_m < \vp_{m-1} < \dots < \vp_1=n$. Let the index $p$ be such that $\alpha\in [\vp_{p+1},\vp_{p}-1]$. Then $\GG$ has ACR in species $S_{\vp_{p}}$ with ACR value $k_\alpha h^*_p$.
\end{theorem}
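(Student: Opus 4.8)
The plan is to follow the futile-cycle argument of Theorem \ref{thm:base} essentially verbatim, the only new ingredient being the bookkeeping caused by the substrates that the deletion has removed from the backward chain. Since every backward index satisfies $\vp_j \in [1,n]$, we have $\vp([1,m]) \subseteq [1,n]$, so $[1,n]^c \cap \vp([1,m]) = \varnothing$ and part (3) of Corollary \ref{cor:HK} is vacuous. What remains from Corollary \ref{cor:HK} is that, at any positive steady state, $K_i = -H_{\vp^{-1}(i)}$ whenever $i \in \vp([1,m])$ and $K_i = 0$ whenever $i \in [1,n] \setminus \vp([1,m])$.

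First I would sum these relations over $\ell = 1, \dots, \alpha$. The left-hand side is $\sum_{\ell=1}^{\alpha} K_\ell = -k_\alpha c_\alpha$ by Lemma \ref{lem:cascade}(1), while on the right only the indices lying in $\vp([1,m])$ survive, giving $k_\alpha c_\alpha = \sum_{\ell \in [1,\alpha]\cap \vp([1,m])} H_{\vp^{-1}(\ell)}$. The combinatorial heart of the proof is to identify this index set. Because $\vp_1 > \vp_2 > \dots > \vp_m$ is strictly decreasing, the inequality $\vp_j \le \alpha$ holds exactly for $j \ge p+1$: the defining condition $\vp_{p+1} \le \alpha \le \vp_p - 1$ gives $\vp_j \ge \vp_p > \alpha$ for all $j \le p$ and $\vp_j \le \vp_{p+1} \le \alpha$ for all $j \ge p+1$. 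Hence $\{\vp^{-1}(\ell) : \ell \in [1,\alpha]\cap\vp([1,m])\} = \{p+1, \dots, m\}$, and the sum telescopes by Lemma \ref{lem:cascade}(2) (applied with index $p$) to $\sum_{j=p+1}^{m} H_j = h_p d_p$. This produces the central identity $k_\alpha c_\alpha = h_p d_p$.

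To finish I would feed this into equation (4) of Corollary \ref{cor:steadystate}. Taking $j = p$ there—legitimate because the intervals $[\vp_{p+1}, \vp_p -1]$ for $p = 1, \dots, m-1$ partition $[1,n-1]$, so the $p$ selected by $\alpha$ satisfies $1 \le p \le m-1$—gives $s_{\vp_p} c_\alpha = h_p h_p^* d_p$. Dividing by $c_\alpha > 0$ and substituting $h_p d_p = k_\alpha c_\alpha$ collapses the right-hand side to $k_\alpha h_p^*$, whence $s_{\vp_p} = k_\alpha h_p^*$ at every positive steady state, independently of the total constants. This is exactly ACR in $S_{\vp_p}$ with the stated value; existence of a positive steady state is deferred to Section \ref{sec:existenceofss}, in keeping with the overall strategy of the paper.

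I expect the only delicate point to be the monotonicity step that pins the threshold at $j = p+1$; once that is settled the remainder is the same telescoping cascade as in the futile cycle. Secondary care is needed to confirm that $\alpha \in [1,n-1]$ and $p \in [1,m-1]$, so that $k_\alpha$ and $h_p$ are genuine positive rate constants and $c_\alpha, d_p > 0$, which guarantees every division above is valid.
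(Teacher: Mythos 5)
Your proof is correct and follows essentially the same route as the paper's: cascading the relations of Corollary \ref{cor:HK} through Lemma \ref{lem:cascade} to obtain the central identity $k_\alpha c_\alpha = h_p d_p$, then concluding via equation $(4)$ of Corollary \ref{cor:steadystate}. The only difference is organizational: you sum $K_\ell$ directly up to $\alpha$ (absorbing the gap indices where $K_i=0$), whereas the paper first establishes $k_{\vp_{j+1}}c_{\vp_{j+1}} = h_j d_j$ for all $j$ and then bridges the gap $[\vp_{p+1}+1,\alpha]$ with a short case analysis--your version avoids that case split but uses identical ingredients.
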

\begin{proof}
We first observe from the assumption that $\vp_1,\dots,\vp_m \in [1,n]\cap \vp([1,m])$. Thus from equation $(1)$ in Corollary \ref{cor:HK}, we have
\begin{equation}\label{eq:del}
K_{\vp_{j}}=-H_{j} \quad \text{for} \quad j\in [1,m].
\end{equation}
Consequently, for each $j\in[1,m]$, we have
\[
\sum_{i=1}^{\vp_{j+1}} K_i = \sum_{\ell = j+1}^m K_{\vp_{\ell}} + \sum_{i\in[1,\vp_{j+1}]\cap\vp([1,m])^c} K_i = -\sum_{\ell=j+1}^m  H_\ell,
\]
where the second equality comes from \eqref{eq:del} and equation $(2)$ in Corollary \ref{cor:HK}. Using Lemma \ref{lem:cascade} we obtain
\begin{equation}\label{eq:del2}
k_{\vp_{j+1}}c_{\vp_{j+1}} = h_{j}d_{j} \quad \text{for} \quad j\in [1,m].
\end{equation}
Now if $\alpha = \vp_{p+1}$, then from \eqref{eq:del2} we have $ k_\alpha c_\alpha = h_{p}d_{p}$. If $\alpha>\vp_{p+1}$, then $[\vp_{p+1}+1,\alpha]\subset [1,n]\cap \vp([1,m])^c$. Thus using equation $(2)$ in Corollary \ref{cor:steadystate} yields
\[
k_\alpha c_\alpha = \dots = k_{\vp_{p+1}}c_{\vp_{p+1}}= h_{p}d_{p}.
\]
In both cases, we have $ k_\alpha c_\alpha = h_{p}d_{p}$. Finally, setting $j=p$ in equation $(4)$ in Corollary \ref{cor:steadystate} gives us $s_{\vp_p} =k_\alpha h^*_p$.
\end{proof}

The example below illustrates how Theorem \ref{thm:deletion} is used, and gives some intuition on the relation between the bifunctional enzyme and the substrate with ACR.
\begin{example}
Let $\GG$ be a covalent modification network $(4,3,\vp)$  with a bifunctional enzyme $C_\alpha$, where $\vp$ is given by $\vp_1=4, \vp_2=2$ and $\vp_3=1$. 
\begin{align*}
S_1 + E &\xrightleftarrows{k_1^+}{k_1^-} {\cre C_1 } ~ {\cre \xrightarrow{ k_1} } ~S_2 + E \xrightleftarrows{k_2^+}{k_2^-} C_2 \xrightarrow{ k_2} S_3+E \xrightleftarrows{k_3^+}{k_3^-} C_{3} \xrightarrow{ k_{3}} S_4 + E, \\ 
S_{4} + C_\alpha &\xrightleftarrows{h_1^+}{h_1^-} D_1 \xrightarrow{ h_1} S_{2} + C_\alpha ~ {\cre \xrightleftarrows{h_2^+}{h_2^-}} ~ D_2 ~ {\cre \xrightarrow{ h_2}} ~  S_{1} + C_\alpha, 
\end{align*}    
\begin{itemize}
\item When the bifunctional enzyme is $C_1$, we have $\alpha = 1 \in [\vp_3,\vp_2-1]$, thus $\GG$ has ACR in species $S_{\vp_2}=S_2$ with ACR value $k_1h^*_2$. The bifunctional enzyme $C_1$ and the reaction rate constants that appear in the ACR value are shown in red in the reaction network.  In this case, the ACR species $S_2$ can be seen as the target of the bifunctional enzyme $C_1$ as $C_1$ directly produces $S_2$, and catalyzes the degradation of $S_2$.
\item When the bifunctional enzyme is $C_2$ or $C_3$, we have $\alpha \in [\vp_2,\vp_1-1]$, thus in both cases $\GG$ has ACR in species $S_{\vp_1}=S_4$ with ACR values $k_2h^*_1$ and $k_3h^*_1$ respectively. When the bifunctional enzyme is $C_3$, again we can see $S_4$ as the target substrate as $C_3$ directly involves in the production and degradation of $S_4$. When the bifunctional enzyme is $C_2$, while $S_3$ is produced directly from $C_2$, it does not appear in the backward chain. So we have to look one step further to find the target substrate: $C_2$ indirectly involves in the production of $S_4$, and directly involves in the degradation of $S_4$. 
\end{itemize}
\end{example}

\begin{theorem}[Insertion in the backward chain]\label{thm:insertion}
Let $\GG$ be a covalent modification network $(n,m,\vp)$ with a bifunctional enzyme $C_\alpha$, where $m=\vp^{-1}(1) > \vp^{-1}(2) >\dots > \vp^{-1}(n)=1$. Then for any $j \in \{\vp^{-1}(\alpha+1),\dots, \vp^{-1}(\alpha) -1\}$, $\GG$ has ACR in species $\vp_j$ with ACR value $k_\alpha h^*_j$.
\end{theorem}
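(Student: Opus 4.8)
The plan is to reduce the claim to a single monomial identity $k_\alpha c_\alpha = h_j d_j$ for each $j$ in the stated range, exactly as in the proofs of Theorems~\ref{thm:base} and~\ref{thm:deletion}, and then read off the ACR value from equation $(4)$ of Corollary~\ref{cor:steadystate}. First I would translate the insertion hypothesis $m = \vp^{-1}(1) > \vp^{-1}(2) > \dots > \vp^{-1}(n) = 1$ into combinatorics: since $\vp^{-1}(i)$ is defined for every $i \in [1,n]$, we have $[1,n] \subseteq \vp([1,m])$, so $[1,n] \cap \vp([1,m])^c = \varnothing$ and the ``extra'' backward substrates are exactly those $\vp_j$ with $\vp_j > n$. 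Feeding this into Corollary~\ref{cor:HK} collapses its three cases to two usable facts: $K_i = -H_{\vp^{-1}(i)}$ for all $i \in [1,n]$, and $H_j = 0$ at every inserted position $j$ (those with $\vp_j \in [1,n]^c$).

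Next I would run a cascading-sum argument to pin down $k_\alpha c_\alpha$. Summing $K_i = -H_{\vp^{-1}(i)}$ over $i = 1, \dots, \alpha$ and applying Lemma~\ref{lem:cascade}(1) on the left gives $-k_\alpha c_\alpha = -\sum_{i=1}^\alpha H_{\vp^{-1}(i)}$. By monotonicity of $\vp^{-1}$, the indices $\vp^{-1}(1), \dots, \vp^{-1}(\alpha)$ are precisely the non-inserted positions lying in the contiguous block $[\vp^{-1}(\alpha), m]$. Since $H$ vanishes on inserted positions, the sum over this sparse set equals the full contiguous cascade $\sum_{\ell = \vp^{-1}(\alpha)}^m H_\ell$, which by Lemma~\ref{lem:cascade}(2) equals $h_{\vp^{-1}(\alpha)-1} d_{\vp^{-1}(\alpha)-1}$. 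Hence $k_\alpha c_\alpha = h_{\vp^{-1}(\alpha)-1} d_{\vp^{-1}(\alpha)-1}$.

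It then remains to propagate this value leftward across the inserted positions. For each $j \in \{\vp^{-1}(\alpha+1)+1, \dots, \vp^{-1}(\alpha)-1\}$ the substrate $\vp_j$ is inserted (no forward substrate sits strictly between positions $\vp^{-1}(\alpha+1)$ and $\vp^{-1}(\alpha)$), so $H_j = 0$ forces $h_{j-1} d_{j-1} = h_j d_j$; chaining these equalities down to $j = \vp^{-1}(\alpha+1)$ shows $h_j d_j$ is constant on the whole range $\{\vp^{-1}(\alpha+1), \dots, \vp^{-1}(\alpha)-1\}$ and equal to $k_\alpha c_\alpha$. Substituting $h_j d_j = k_\alpha c_\alpha$ into equation $(4)$ of Corollary~\ref{cor:steadystate}, namely $s_{\vp_j} c_\alpha = h_j h_j^* d_j = h_j^* (h_j d_j)$, cancels $c_\alpha$ and yields $s_{\vp_j} = k_\alpha h_j^*$, the claimed ACR value.

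I expect the only delicate point to be the ``fill-in'' identification in the second step: justifying that the sum of $H$ over the non-contiguous set $\{\vp^{-1}(1), \dots, \vp^{-1}(\alpha)\}$ of non-inserted positions coincides with the contiguous cascade $\sum_{\ell=\vp^{-1}(\alpha)}^m H_\ell$. This is exactly where the vanishing of $H$ at inserted positions (Corollary~\ref{cor:HK}(3)) does the work, and it is the structural feature that makes insertions harmless for ACR. Alongside this I would check the routine boundary bookkeeping: that the range $\{\vp^{-1}(\alpha+1), \dots, \vp^{-1}(\alpha)-1\}$ is nonempty (immediate from strict monotonicity, and it degenerates gracefully to a single index when there is no insertion between $S_\alpha$ and $S_{\alpha+1}$) and that it lies inside $[1,m-1]$, so that Corollary~\ref{cor:steadystate}(4) is applicable.
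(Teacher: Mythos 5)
Your proposal is correct and follows essentially the same route as the paper's proof: the same cascading sum of $K_i = -H_{\vp^{-1}(i)}$ over $i \le \alpha$, the same use of Corollary~\ref{cor:HK}(3) to fill in the vanishing $H$-terms at inserted positions so the sparse sum becomes the contiguous cascade $\sum_{j=\vp^{-1}(\alpha)}^m H_j$, the same chaining of $h_{j-1}d_{j-1}=h_jd_j$ across inserted positions, and the same final substitution into Corollary~\ref{cor:steadystate}(4). The only cosmetic difference is that you invoke $H_j=0$ directly for the chaining step where the paper cites Corollary~\ref{cor:steadystate}(3), which is the same identity.
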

\begin{proof}
We first observe that $[1,n]\subset \vp([1,m])$. Thus from equation $(1)$ in Corollary \ref{cor:HK}, we have
\begin{equation}\label{eq:add}
K_{i}=-H_{\vp^{-1}(i)} \quad \text{for} \quad i\in [1,n].
\end{equation}
As a result, for any $i\in [1,n]$, we have
\begin{equation}\label{eq:add2}
\sum_{\ell =1}^i K_\ell = -\sum_{\ell=1}^i H_{\vp^{-1}(\ell)} = -\left(\sum_{j=\vp^{-1}(i)}^m H_{j} - \sum_{j\in[\vp^{-1}(i),m]:\vp_j \in [1,n]^c} H_j \right)= -\sum_{j=\vp^{-1}(i)}^m H_{j},
\end{equation}
where the third equality is due to equation $(3)$ in Corollary \ref{cor:HK}. In particular, setting $i=\alpha$ in \eqref{eq:add2} gives us
\[
\sum_{\ell =1}^\alpha K_\ell = -\sum_{j=\vp^{-1}(\alpha)}^m H_{j}.
\]
Using Lemma \ref{lem:cascade} we obtain 
\[
k_\alpha c_\alpha = h_{\vp^{-1}(\alpha)-1} d_{\vp^{-1}(\alpha)-1}.
\]
Furthermore, if $\vp^{-1}(\alpha+1) < \vp^{-1}(\alpha)- 1$, then $\vp([\vp^{-1}(\alpha+1)+1,\vp^{-1}(\alpha)- 1]) \in [1,n]^c$. Thus by equation $(3)$ in Corollary \ref{cor:steadystate} we have
\[
h_{\vp^{-1}(\alpha+1)}d_{\vp^{-1}(\alpha+1)}= \dots = h_{\vp^{-1}(\alpha)- 1)}d_{\vp^{-1}(\alpha)- 1} = k_\alpha c_\alpha.
\]
Finally, setting $j=\vp^{-1}(\alpha+1),\dots, \vp^{-1}(\alpha)-1$ in equation $(4)$ in Corollary \ref{cor:steadystate} yields 
\[
s_{\vp_j} = k_\alpha h^*_j \quad \text{for} \quad j=\vp^{-1}(\alpha+1),\dots, \vp^{-1}(\alpha)-1.
\]
\end{proof}

We illustrate the result of Theorem \ref{thm:insertion} in the following example, and again include some intuition on the relation of the bifunctional enzyme and the ACR substrate.

\begin{example}
Let $\GG$ be a covalent modification network $(3,4,\vp)$  with a bifunctional enzyme $C_\alpha$, where $\vp$ is given by $\vp_1=3, \vp_2=4, \vp_3=2$ and $\vp_4=1$. 
\begin{align*}
S_1 + E &\xrightleftarrows{k_1^+}{k_1^-} C_1 \xrightarrow{ k_1} S_2 + E \xrightleftarrows{k_2^+}{k_2^-} C_2 \xrightarrow{ k_2} S_3+E , \\ 
S_{3} + C_\alpha &\xrightleftarrows{h_1^+}{h_1^-} D_1  \xrightarrow{ h_1} S_{4} + C_\alpha \xrightleftarrows{h_2^+}{h_2^-} D_2 \xrightarrow{ h_2} S_{2} + C_\alpha \xrightleftarrows{h_3^+}{h_3^-} D_3 \xrightarrow{ h_3}  S_{1} + C_\alpha, 
\end{align*}    
\begin{itemize}
\item When the bifunctional enzyme is $C_1$ (i.e. $\alpha=1$), we have $\vp^{-1}(\alpha+1) = 3 = \vp^{-1}(\alpha)-1$, thus $\GG$ has ACR in species $S_{\vp_3}=S_2$ with ACR value $k_1h^*_3$.  In this case, the ACR species $S_2$ is the target of the bifunctional enzyme $C_1$ as $C_1$ directly produces $S_2$, and catalyzes the degradation of $S_2$.
\item When the bifunctional enzyme is $C_2$ (i.e. $\alpha=2$), we have $\vp^{-1}(\alpha+1) = 1$ and $\vp^{-1}(\alpha)-1 = 2$, thus $\GG$ has ACR in species $S_{\vp_1}=S_3$ and $S_{\vp_2}=S_4$ with ACR values $k_2h^*_1$ and $k_2h^*_2$ respectively. Similar to the previous case, here $S_3$ can be clearly identified as the target substrate as $C_2$ directl involves in its production and degradation. It is much more subtle to interpret $S_4$ as another target substrate. While $S_4$ does not appear in the forward chain, one could argue the closest enzyme from the forward chain responsible for its production is $C_2$ since $C_2$ produces $S_3$, which in turn transforms into $S_4$ in the backward chain. 
\end{itemize}
\end{example}

\begin{theorem}[Permutation in the backward chain]\label{thm:permutation}
Let $\GG$ be a covalent modification network $(n,m,\vp)$ with a bifunctional enzyme $C_\alpha$, where $\{\vp_m,\vp_{m-1},\dots,\vp_{m-\alpha+1}\}$ is a permutation of $\{1,\dots,\alpha\}$, then $\GG$ has ACR in species $S_{\vp_{m-\alpha}}$ with ACR value $k_\alpha h^*_{m-\alpha}$
\end{theorem}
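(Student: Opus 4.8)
The plan is to follow the template of Theorems \ref{thm:deletion} and \ref{thm:insertion}: establish the single scalar identity $k_\alpha c_\alpha = h_{m-\alpha}\, d_{m-\alpha}$ at every positive steady state, and then substitute it into equation $(4)$ of Corollary \ref{cor:steadystate} to extract the ACR value. The permutation hypothesis enters only through index bookkeeping, so I would begin there. Since $\{\vp_m, \vp_{m-1}, \dots, \vp_{m-\alpha+1}\}$ is a permutation of $\{1, \dots, \alpha\}$ and $\alpha \le n-1$, each index $i \in \{1,\dots,\alpha\}$ lies in $[1,n] \cap \vp([1,m])$, and the preimages form the contiguous tail block $\{\vp^{-1}(1), \dots, \vp^{-1}(\alpha)\} = \{m-\alpha+1, \dots, m\}$. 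Equation $(1)$ of Corollary \ref{cor:HK} then gives $K_i = -H_{\vp^{-1}(i)}$ for every $i \in \{1,\dots,\alpha\}$.

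The key step is to sum these $\alpha$ identities. Because addition ignores the order of summands, the permuted preimage indices collapse into the contiguous block:
\[
\sum_{i=1}^\alpha K_i \;=\; -\sum_{i=1}^\alpha H_{\vp^{-1}(i)} \;=\; -\sum_{j=m-\alpha+1}^m H_j .
\]
This is exactly where the permutation assumption does its work, and it is the one genuinely structural point in the argument; unlike deletion or insertion, the permutation of the $H$-indices is irrelevant precisely because we sum over a contiguous block. Both sides are now cascading sums of the type handled by Lemma \ref{lem:cascade}: the left side equals $-k_\alpha c_\alpha$ by part $(1)$, while the right side equals $-h_{m-\alpha} d_{m-\alpha}$ by part $(2)$ evaluated at $j = m-\alpha$. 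Equating yields the desired identity $k_\alpha c_\alpha = h_{m-\alpha} d_{m-\alpha}$.

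To finish, I would set $j = m-\alpha$ in equation $(4)$ of Corollary \ref{cor:steadystate}, namely $s_{\vp_{m-\alpha}} c_\alpha = h_{m-\alpha} h^*_{m-\alpha} d_{m-\alpha}$, and substitute $h_{m-\alpha} d_{m-\alpha} = k_\alpha c_\alpha$; cancelling $c_\alpha$, which is strictly positive at a positive steady state, gives $s_{\vp_{m-\alpha}} = k_\alpha h^*_{m-\alpha}$, a value independent of the conserved totals and hence the ACR value. I do not anticipate a serious obstacle. In contrast to the deletion and insertion proofs, no chaining through equations $(2)$ or $(3)$ of Corollary \ref{cor:steadystate} is needed, since the $K$-cascade terminates precisely at index $\alpha$ and the $H$-cascade begins precisely at index $m-\alpha+1$. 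The only point needing care is that $m-\alpha \in [1,m-1]$, so that $S_{\vp_{m-\alpha}}$ and equation $(4)$ are both meaningful; this is built into the hypotheses ($\alpha \le n-1$ together with consistency, which places $S_n$ in the backward chain outside the terminal block $\{m-\alpha+1,\dots,m\}$) and requires no extra work.
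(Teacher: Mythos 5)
Your proposal is correct and follows essentially the same route as the paper's proof: both apply Corollary \ref{cor:HK}(1) on $[1,\alpha]$, sum over that block so the permutation hypothesis collapses the $H$-indices into the contiguous tail $\{m-\alpha+1,\dots,m\}$, invoke Lemma \ref{lem:cascade} to get $k_\alpha c_\alpha = h_{m-\alpha}d_{m-\alpha}$, and conclude via equation $(4)$ of Corollary \ref{cor:steadystate}. Your extra remarks (cancelling $c_\alpha>0$, and checking $m-\alpha\in[1,m-1]$ via consistency) are careful touches the paper leaves implicit, but they do not change the argument.
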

\begin{proof} 
We first observe that $[1,\alpha]\subset [1,n]\cap \vp([1,m])$. Thus from equation $(1)$ in Corollary \ref{cor:HK}, we have
\[
K_i =- H_{\vp^{-1}(i)}\quad \text{for} \quad i\in [1,\alpha].
\]
Thus we have
\[
\sum_{i=1}^\alpha K_i = -\sum_{i=1}^\alpha  H_{\vp^{-1}(i)} = -\sum_{j=m-\alpha+1}^m H_j,
\]
where the second equality is due to the assumption that  $\{\vp_m,\vp_{m-1},\dots,\vp_{m-\alpha+1}\}$ is a permutation of $\{1,\dots,\alpha\}$. As a result, from Lemma \ref{lem:cascade} we obtain 
\[
k_\alpha c_\alpha = h_{m-\alpha}d_{m-\alpha}.
\]
Finally, setting $j=m-\alpha$ in equation $(4)$ in Corollary \ref{cor:steadystate}, we have $s_{\vp_{m-\alpha}}=k_\alpha h^*_{m-\alpha}$.
\end{proof}

We illustrate the result in Theorem \ref{thm:permutation} with an example below. The permutation operation is complicated and it is difficult to interpret the relation between the bifunctional enzyme and the ACR substrate. We plan to develop a much broader theoretical framework to explain this relation in a future paper.

\begin{example}
Let $\GG$ be a covalent modification network $(4,4,\vp)$  with a bifunctional enzyme $C_\alpha$, where $\vp$ is given by $\vp_1=4, \vp_2=2, \vp_3=3$ and $\vp_4=1$. 
\begin{align*}
S_1 + E &\xrightleftarrows{k_1^+}{k_1^-} C_1 \xrightarrow{ k_1} S_2 + E \xrightleftarrows{k_2^+}{k_2^-} C_2 \xrightarrow{ k_2} S_3+E \xrightleftarrows{k_3^+}{k_3^-} C_{3} \xrightarrow{ k_{3}} S_4 + E, \\ 
S_{4} + C_\alpha &\xrightleftarrows{h_1^+}{h_1^-} D_1  \xrightarrow{ h_1} S_{2} + C_\alpha \xrightleftarrows{h_2^+}{h_2^-} D_2 \xrightarrow{ h_2} S_{3} + C_\alpha \xrightleftarrows{h_3^+}{h_3^-} D_3 \xrightarrow{ h_3}  S_{1} + C_\alpha, 
\end{align*}    
\begin{itemize}
\item When the bifunctional enzyme is $C_1$ (i.e. $\alpha=1$), we have $\vp_4=1=\alpha$, thus $\GG$ has ACR in species $S_{\vp_3}=S_3$ with ACR value $k_1h^*_3$. 
\item When the bifunctional enzyme is $C_2$ (i.e. $\alpha=2$), we have $\{\vp_4,\vp_3\} = \{1,3\} \neq \{1,2\}$. It can be checked that $\GG$ has no ACR species.
\item When the bifunctional enzyme is $C_3$ (i.e. $\alpha=3$), we have $\{\vp_4,\vp_3,\vp_2\} = \{1,2,3\}$, thus $\GG$ has ACR in species $S_{\vp_1}=S_4$ with ACR value $k_3h^*_1$.
\end{itemize}
\end{example}

\begin{theorem}[General network]\label{thm:general}

Let $\GG$ be a covalent modification network $(n,m,\vp)$ with a bifunctional enzyme $C_\alpha$.
Assume that $\{1,n\} \subseteq \vp([1,m])$. 
Let
\[ 
\beta = \min\{j: \vp_j\in [1,\alpha]\cap\vp([1,m])\}, \quad \text{and} \quad \gamma = \max\{j: \vp_j\in [\alpha+1, n]\cap\vp([1,m])\}.
\]
Then $\beta$ and $\gamma$ are defined.
Furthermore, if $\gamma<\beta$ then for any $j \in \{\gamma,\dots,\beta-1\}$, $\GG$ has ACR in species $S_{\vp_j}$ with ACR value $k_\alpha h^*_j$.
\end{theorem}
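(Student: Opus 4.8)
The plan is to run the same cascading-sum strategy that proves Theorems \ref{thm:deletion}, \ref{thm:insertion}, and \ref{thm:permutation}, all of which are special cases of this statement. Throughout I write $\mathcal{J} := \{j \in [1,m] : \vp_j \in [1,\alpha]\}$ and $\mathcal{J}' := \{j \in [1,m] : \vp_j \in [\alpha+1,n]\}$; since every $\vp_j$ lies in $\vp([1,m])$, the defining conditions for $\beta$ and $\gamma$ simplify to membership in $[1,\alpha]$ and $[\alpha+1,n]$ respectively, so $\beta = \min \mathcal{J}$ and $\gamma = \max \mathcal{J}'$. For well-definedness I would observe that $\alpha \in [1,n-1]$ forces $1 \in [1,\alpha]$ and $n \in [\alpha+1,n]$; the hypothesis $\{1,n\} \subseteq \vp([1,m])$ then supplies positions $\vp^{-1}(1) \in \mathcal{J}$ and $\vp^{-1}(n) \in \mathcal{J}'$, so both sets are nonempty and $\beta,\gamma$ exist.

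Next I would establish the core identity $k_\alpha c_\alpha = \sum_{j \in \mathcal{J}} H_j$ at any positive steady state. Splitting $\sum_{\ell=1}^{\alpha} K_\ell$ according to whether $\ell \in \vp([1,m])$ or not, equation $(2)$ of Corollary \ref{cor:HK} annihilates the terms with $\ell \in [1,\alpha] \cap \vp([1,m])^c$, while equation $(1)$ rewrites each surviving $K_\ell$ as $-H_{\vp^{-1}(\ell)}$. As $\ell$ runs over $[1,\alpha] \cap \vp([1,m])$ the indices $\vp^{-1}(\ell)$ run exactly over $\mathcal{J}$, giving $\sum_{\ell=1}^{\alpha} K_\ell = -\sum_{j \in \mathcal{J}} H_j$; comparing with Lemma \ref{lem:cascade}(1), which states $\sum_{\ell=1}^{\alpha} K_\ell = -k_\alpha c_\alpha$, yields the identity.

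The key and only delicate step is a telescoping collapse: for each $j \in \{\gamma,\dots,\beta-1\}$ I claim $\sum_{\ell=j+1}^{m} H_\ell = \sum_{j' \in \mathcal{J}} H_{j'}$. I would partition the positions $\ell > j$ into those with $\vp_\ell \in [1,\alpha]$, those with $\vp_\ell \in [\alpha+1,n]$, and those with $\vp_\ell > n$. Because $j \le \beta - 1 < \beta = \min \mathcal{J}$, every element of $\mathcal{J}$ already exceeds $j$, so the first group is all of $\mathcal{J}$; because $j \ge \gamma = \max \mathcal{J}'$, the second group is empty; and equation $(3)$ of Corollary \ref{cor:HK} forces $H_\ell = 0$ on the third group. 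This is precisely where the extremal definitions of $\beta,\gamma$ and the hypothesis $\gamma < \beta$ enter. Combining the collapse with Lemma \ref{lem:cascade}(2), which gives $\sum_{\ell=j+1}^{m} H_\ell = h_j d_j$, and with the core identity yields $h_j d_j = k_\alpha c_\alpha$. Feeding this into equation $(4)$ of Corollary \ref{cor:steadystate} gives $s_{\vp_j} c_\alpha = h_j h_j^* d_j = h_j^*(h_j d_j) = h_j^* k_\alpha c_\alpha$, and dividing by $c_\alpha > 0$ produces $s_{\vp_j} = k_\alpha h_j^*$. As no conserved quantity appears, this is the claimed ACR in $S_{\vp_j}$ with value $k_\alpha h_j^*$.

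The main obstacle is not any single computation but the bookkeeping in the collapsing step: one must verify that, uniformly over $\gamma \le j \le \beta-1$, the only surviving terms of $\sum_{\ell>j} H_\ell$ are those indexed by $\mathcal{J}$, using that inserted substrates ($\vp_\ell > n$) contribute nothing and that no $[\alpha+1,n]$-substrate occupies a position beyond $\gamma$. Everything else reduces to direct appeals to the cascading-sum lemmas and the steady-state relations already in hand, and a quick check that the special cases (futile cycle, deletion, insertion, permutation) recover the earlier theorems provides a useful consistency test.
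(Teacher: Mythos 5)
Your proof is correct and takes essentially the same cascading-sum approach as the paper: both reduce $\sum_{\ell=1}^{\alpha}K_\ell$ to a sum of $H_j$'s via Corollary \ref{cor:HK}, apply Lemma \ref{lem:cascade} to obtain $k_\alpha c_\alpha = h_jd_j$, and conclude with equation (4) of Corollary \ref{cor:steadystate}. The only (cosmetic) difference is that you derive $h_jd_j=k_\alpha c_\alpha$ uniformly for every $j\in\{\gamma,\dots,\beta-1\}$ by repeating the partition argument, whereas the paper proves it at $j=\gamma$ through the set identity $[1,n]\cap\vp([\gamma+1,m])=[1,\alpha]\cap\vp([1,m])$ and then propagates to larger $j$ using equation (3) of Corollary \ref{cor:steadystate} --- which is the same underlying fact as your use of $H_\ell=0$ for inserted species.
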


\begin{proof}
Since $1 \in \vp([1,m])$,  we must have $[1,\alpha]\cap\vp([1,m])\neq \emptyset$ and thus $\beta$ is defined. Similarly,  $[\alpha+1, n]\cap\vp([1,m])\neq \emptyset$, and thus $\gamma$ is also defined.
We observe that
\begin{equation}\label{eq:gen1}
\sum_{i=1}^\alpha K_i = \sum_{i\in[1,\alpha]\cap\vp([1,m])} K_i  + \sum_{i\in[1,\alpha]\cap\vp([1,m])^c} K_i = \sum_{i\in[1,\alpha]\cap\vp([1,m])} K_i,
\end{equation}
where the second equality is due to equation $(2)$ in Corollary \ref{cor:HK}. Next, we have
\begin{equation}\label{eq:gen2}
\sum_{j=\gamma+1}^{m} H_j = \sum_{j: \vp_j\in[1,n]\cap\vp([\gamma+1,m])} H_j +  \sum_{j: \vp_j\in[1,n]^c\cap\vp([\gamma+1,m])} H_j =  \sum_{j: \vp_j\in[1,n]\cap\vp([\gamma+1,m])} H_j,
\end{equation}
where the second equality is due to equation $(3)$ in Corollary \ref{cor:HK}. Before proceeding further, we prove the following claim.

\vspace{1mm}
\noindent \textbf{Claim:} $[1,n]\cap\vp([\gamma+1,m]) = [1,\alpha]\cap\vp([1,m])$.
\begin{itemize}
\item First, assume that $i\in [1,n]\cap\vp([\gamma+1,m])$. Then clearly we have $i\in\vp([1,m])$. Furthermore, since $i\in \vp([\gamma+1,m])$, we must have $\vp^{-1}(i) >\gamma$. From the definition of $\gamma$, this means $i\notin [\alpha+1,n]$. Thus $i\in [1,\alpha]$, which further implies $i\in [1,\alpha]\cap\vp([1,m])$.
\item For the remaining direction, assume that $i\in [1,\alpha]\cap\vp([1,m])$. Then clearly we have $i\in [1,n]$. From the definition of $\beta$ and the assumption that $\gamma<\beta$, we have $\vp^{-1}(i) \geq  \beta > \gamma$. Thus $i\in\vp([\gamma+1,m])$, which implies $i\in[1,n]\cap\vp([\gamma+1,m])$.
\end{itemize}
Using the above claim and equation $(1)$ in Corollary \ref{cor:HK}, we obtain 
\begin{equation}\label{eq:gen3}
\sum_{i\in[1,\alpha]\cap\vp([1,m])} K_i = \sum_{i\in[1,\alpha]\cap\vp([1,m])} H_{\vp^{-1}(i)} = \sum_{j: \vp_j\in[1,n]\cap\vp([\gamma+1,m])} H_j.
\end{equation}
Thus combining equations \eqref{eq:gen1}, \eqref{eq:gen2} and \eqref{eq:gen3} yields
\[
\sum_{i=1}^\alpha K_i  = \sum_{j=\gamma+1}^{m} H_j. 
\]
From Lemma \ref{lem:cascade} we obtain
\[
k_\alpha c_\alpha = h_{\gamma} d_{\gamma}. 
\]
Furthermore, if $\gamma < \beta-1$, then by the definitions of $\beta$ and $\gamma$ we have $\vp_j\in[1,n]^c\cap \vp([1,m])$ $\forall j\in [\gamma+1,\beta-1]$. Using equation $(3)$ in Corollary \ref{cor:steadystate}, we obtain
\[
  h_{\beta-1} d_{\beta-1}= \dots = h_{\gamma}d_{\gamma} = k_\alpha c_\alpha.
\]
Finally, setting $j=\gamma,\dots,\beta-1$ in equation $(4)$ in  Corollary \ref{cor:steadystate} yields
\[
s_{\vp_j} = k_\alpha h_j^* \quad \text{for} \quad j=\gamma,\dots,\beta-1.
\]
\end{proof}

\begin{remark}
We provide some intuition on the assumptions in Theorem \ref{thm:general}:
\begin{itemize}
\item The index $\beta$ indicates which species among $S_1,\dots,S_\alpha$ appears first in the backward chain.
\item The index $\gamma$ indicates which species among $S_{\alpha+1},\dots,S_n$ appears last in the backward chain.
\item The assumption $\gamma<\beta$ will always hold if we perform an ACR-preserving permutation in the backward chain (according to Theorem \ref{thm:permutation}) first, then finite number of insertions and deletions in the backward chain (as described in Theorems \ref{thm:deletion} and \ref{thm:insertion}). 
\end{itemize}
\end{remark}

\begin{example}
Let $\GG$ be a covalent modification network $(4,4,\vp)$  with a bifunctional enzyme $C_\alpha$, where $\vp$ is given by $\vp_1=4, \vp_2=5, \vp_3=1$ and $\vp_4=2$. 
\begin{align*}
S_1 + E &\xrightleftarrows{k_1^+}{k_1^-} C_1 \xrightarrow{ k_1} S_2 + E \xrightleftarrows{k_2^+}{k_2^-} C_2 \xrightarrow{ k_2} S_3+E \xrightleftarrows{k_3^+}{k_3^-} C_{3} \xrightarrow{ k_{3}} S_4 + E, \\ 
S_{4} + C_\alpha &\xrightleftarrows{h_1^+}{h_1^-} D_1  \xrightarrow{ h_1} S_{5} + C_\alpha \xrightleftarrows{h_2^+}{h_2^-} D_2 \xrightarrow{ h_2} S_{1} + C_\alpha \xrightleftarrows{h_3^+}{h_3^-} D_3 \xrightarrow{ h_3}  S_{2} + C_\alpha, 
\end{align*}    
\begin{itemize}
\item When the bifunctional enzyme is $C_1$ (i.e. $\alpha=1$), we have $\beta = 3$ and $\gamma = 4$. Thus the assumption in Theorem \ref{thm:general} is not satisfied and it can be checked that there is no ACR species.
\item When  the bifunctional enzyme is $C_2$ (i.e. $\alpha=2$), we have $\beta = 3$ and $\gamma = 1$. Thus  the assumption in Theorem \ref{thm:general} is satisfied, and $\GG$ has ACR in species $S_{\vp_1}= S_4$ and $S_{\vp_2}= S_5$ with ACR values $k_2h^*_1$ and $k_2h^*_2$ respectively.
\item When  the bifunctional enzyme is $C_3$ (i.e. $\alpha=3$), we have $\beta = 3$ and $\gamma = 1$.  Again,  the assumption in Theorem \ref{thm:general} is satisfied, and $\GG$ has ACR in species $S_{\vp_1}= S_4$ and $S_{\vp_2}= S_5$ with ACR values $k_3h^*_1$ and $k_3h^*_2$ respectively. 
\end{itemize}
\end{example}

\begin{remark}
In Section \ref{sec:parameterization}, we prove that the futile cycle with bifunctional enzyme  has ACR in one and only one species. 
Insertion in the backward chain (and any combination of operations containing it) can give rise to more than one ACR species. As for deletion and ACR-preserving permutation in the backward chain, we believe there isn't ACR in any other species. Numerical simulations or parameterization of the type in  Section \ref{sec:parameterization} can help with ruling out ACR in other species not stated in Theorems \ref{thm:deletion} and \ref{thm:permutation}.
\end{remark}

We include some additional examples in Table \ref{table:alphavsacr}, where we apply the theorems in this section to find the ACR species in covalent modification networks with a bifunctional enzyme.


\begin{remark}\label{rem:multi_enzyme}
It is worth noting that our main result in Theorem \ref{thm:general} does not require $C_\alpha$ to be the only enzyme in the backward chain. In fact, the result still holds if $C_\alpha$ is replaced by another enzyme $F$  in any complex in the backward chain except for the complexes containing the ACR species. 
For example, consider the futile cycle with $n=3$ and $\alpha=1$, which has ACR in $S_2$:

\begin{align*}
S_1 + E &\xrightleftarrows{k_1^+}{k_1^-} C_1 \xrightarrow{ k_1} S_2 + E \xrightleftarrows{k_2^+}{k_2^-} C_2 \xrightarrow{ k_2} S_3+E , \\ 
S_{3} + C_1 &\xrightleftarrows{h_1^+}{h_1^-} D_1  \xrightarrow{ h_1} S_{2} + C_1 \xrightleftarrows{h_2^+}{h_2^-} D_2 \xrightarrow{ h_2} S_{1} + C_1.
\end{align*}  
A variant of this network where some $C_1$ in the backward chain are replaced by $F$ still have ACR in species $S_2$:
\begin{align*}
S_1 + E &\xrightleftarrows{k_1^+}{k_1^-} C_1 \xrightarrow{ k_1} S_2 + E \xrightleftarrows{k_2^+}{k_2^-} C_2 \xrightarrow{ k_2} S_3+E , \\ 
S_{3} + F &\xrightleftarrows{h_1^+}{h_1^-} D_1  \xrightarrow{ h_1} S_{2} + F, \\
S_2+C_1 &\xrightleftarrows{h_2^+}{h_2^-} D_2 \xrightarrow{h_2} S_{1} + C_1. 
\end{align*}
This type of replacements generally does not change our results. The main difference it brings forth lie in equation $(4)$ of Corollary \ref{cor:steadystate}, where $c_\alpha$ is replaced by $f$ for some $j$. Thus, the proof with this type of replacements remains mostly identical to the proof of Theorem \ref{thm:general} with very minor changes in notations. 

Of course, it is possible to encounter many other ACR-preserving variants of the class of networks studied in this paper. In future work, we will give results that significantly generalize the results in this paper.
\end{remark}

\section{Existence of positive steady state} \label{sec:existenceofss}

We give necessary and sufficient conditions for a covalent modification network to be consistent. 
Even though the theorem and proof are stated for a covalent modification network with a bifunctional enzyme, the result applies to any covalent modification network with or without a bifunctional enzyme with a minor modification.

\begin{theorem} \label{thm:consistency}
Let $\GG$ be a covalent modification network $(n,m,\vp)$. Define an auxiliary graph $N_\GG$ whose vertices are 
\[
\{X_1, \ldots, X_n\} \cup \{X_{\vp_1},\ldots, X_{\vp_m}\}
\]
and $X_\ell \to X_s$ is a directed edge of $N_\GG$ if and only if 
$\ell + 1 = s \le n$ or $\vp^{-1}(\ell) + 1 = \vp^{-1}(s) \le m$. 
The following are equivalent:
\been
\item $\GG$ is consistent, 
\item $N_\GG$ is strongly connected. 
\enen
\end{theorem}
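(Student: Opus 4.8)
The plan is to translate the reaction-theoretic notion of consistency into a purely combinatorial circulation condition on the auxiliary graph $N_\GG$, and then to recognize that condition as strong connectivity.

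First I would reduce consistency to a \emph{substrate-level circulation}. By definition $\GG$ is consistent exactly when there are strictly positive fluxes $\beta_{y\to y'}$ on all reactions making the net production of every species zero. I would name the two catalytic fluxes $w_i>0$ (for $C_i\to S_{i+1}+E$) and $q_j>0$ (for $D_j\to S_{\vp_{j+1}}+C_\alpha$), and leave the association/dissociation fluxes of $S_i+E\rightleftarrows C_i$ and $S_{\vp_j}+C_\alpha\rightleftarrows D_j$ as auxiliary unknowns. The balance of each intermediate $C_i$ (for $i\neq\alpha$) and each $D_j$ forces the association flux to equal the dissociation flux plus the catalytic flux; the balance of $E$ then forces the same relation at $i=\alpha$, after which the balance of the bifunctional complex $C_\alpha$ is automatically satisfied. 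Substituting these relations, the balance of each substrate $S_s$ collapses to this: the forward contribution equals $w_{s-1}-w_s$ when $s\in[1,n]$ (and $0$ otherwise), the backward contribution equals $q_{\vp^{-1}(s)-1}-q_{\vp^{-1}(s)}$ when $s\in\vp([1,m])$ (and $0$ otherwise), with boundary conventions $w_0=w_n=q_0=q_m=0$, and their sum must vanish. Conversely, any positive $w_i,q_j$ extend to a full positive flux by choosing the dissociation fluxes arbitrarily positive. Reading $w_i$ as the weight of the forward edge $X_i\to X_{i+1}$ and $q_j$ as that of the backward edge $X_{\vp_j}\to X_{\vp_{j+1}}$ (treating a coincident forward/backward pair as one edge carrying the sum of the two weights), this system says exactly that these weights form a strictly positive \emph{circulation} on $N_\GG$, i.e. flow is conserved at every vertex. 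Thus I expect to obtain: $\GG$ is consistent if and only if $N_\GG$ admits a strictly positive circulation.

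Second, I would prove that $N_\GG$ admits a strictly positive circulation if and only if it is strongly connected. For the ``if'' direction, strong connectivity lets me route each edge $u\to v$ into a cycle via a directed path from $v$ back to $u$; assigning weight one to every edge of each such cycle and summing over all edges gives a strictly positive circulation. For the ``only if'' direction, I would note that in a positive circulation every vertex must have an incoming edge; applied to $X_1$, whose only candidate incoming edges are backward edges, this forces $1\in\vp([1,m])$, so the forward and backward paths share the vertex $X_1$ and $N_\GG$ is weakly connected. Passing to the condensation (the acyclic graph of strongly connected components), if there were more than one component a source component $A$ would receive no flow from outside, hence by summing conservation over $A$ would send none outside either; positivity of the weights then forces no edges between $A$ and its complement, contradicting weak connectedness. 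So $N_\GG$ is a single strongly connected component, and chaining the two equivalences completes the proof.

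The step I expect to be the main obstacle is the bookkeeping in the first reduction, specifically verifying that the dual role of the bifunctional complex $C_\alpha$ — simultaneously a forward intermediate and the backward enzyme — introduces no constraint beyond positivity of the catalytic fluxes, so that consistency is governed entirely by the substrate-level circulation. The only subtle point in the second step is excluding a spurious positive circulation when the two chains are vertex-disjoint, which is precisely what the balance at the forward source $X_1$ rules out.
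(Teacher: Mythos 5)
Your proposal is correct, and it takes a genuinely different route from the paper. The paper works at the level of rate constants and steady states: for the direction (strong connectivity $\Rightarrow$ consistency) it exhibits explicit rate constants (values $1,2,3$ determined by the relative positions of $\vp_1,\vp_m,1,n$) and verifies species-by-species that $(1,1,\ldots,1)$ is then a positive steady state (the cycle-based recipe behind these numbers appears in Remark \ref{rem:construction_existence}); for the converse it argues contrapositively, summing the ODEs over the species corresponding to a ``closed'' subgraph $N'$ with no incoming edges, obtaining $-k_\ell c_\ell - h_p d_p = 0$ at any steady state and hence ruling out positive steady states for \emph{every} choice of rate constants. You instead work purely in flux space with the definition of consistency: your first reduction (balances of $C_i$, $D_j$, $E$, $C_\alpha$ eliminate the association/dissociation fluxes, leaving the substrate balances $w_{s-1}-w_s+q_{\vp^{-1}(s)-1}-q_{\vp^{-1}(s)}=0$) shows consistency is exactly the existence of a strictly positive circulation on $N_\GG$, and your second step is the standard graph-theoretic equivalence between positive circulations and strong connectivity (cycle-sums for one direction; the source-component-of-the-condensation argument, bootstrapped by the balance at $X_1$ to get weak connectivity, for the other). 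Your handling of the bifunctional complex $C_\alpha$ is right: its balance is implied by those of $E$ and the $D_j$'s, so it adds no constraint. What each approach buys: yours is more modular and conceptual, avoids the case analysis over $\vp_1 \ne n$, $\vp_m \ne 1$, etc., and never needs the paper's Theorem 2.3 linking consistency to steady states; the paper's construction is fully explicit (useful for simulation and for Remark \ref{rem:construction_existence}), and its converse yields a strictly stronger dynamical fact --- that for all rate constants certain coordinates vanish at every steady state --- which feeds into the boundary steady state analysis of Section \ref{sec:boundaryss}.
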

\noindent {\bf Note:} The proof that ${\it 2.} \implies {\it 1.}$ is by direct construction. Specifically, we give rate constants of all reactions in $\GG$ and show that $(1,1,\ldots, 1)$ is a positive steady state for the chosen rate constants. 
The procedure to obtain the desired rate constants is explained in Remark \ref{rem:construction_existence} which follows the proof. 
\begin{proof}
Suppose that $N_\GG$ is strongly connected. 
Then, in particular, there is a path from $X_n$ to $X_1$. 
It follows that $n, 1 \in \Im (\vp)$ and $\vp^{-1}(n) < \vp^{-1}(1)$. 
Define rate constants in backward chain of $\GG$ as follows: 
\begin{equation} \label{eq:definebackrates}
\begin{aligned}
h_j = \begin{cases}
3 & \mbox{ if } \vp^{-1}(n) \le j < \vp^{-1} (1), \\
1 & \mbox{ otherwise}. 
\end{cases}
\end{aligned}
\end{equation}
Moreover, let $h_j^- = 1$ and $h_j^+ = h_j + 1$ for $j \in [1,m-1]$. 

Since $N_\GG$ is strongly connected, there must exist a node $X_\ell$ such that $X_\ell \to X_{\vp_1}$ . That means either $\ell+1 = \vp_1\leq n$ or $\vp^{-1}(\ell)+1 = \vp^{-1}(\vp_1)=1$. The latter is clearly impossible, and the former implies that $\vp_1 \in [2,n]$.
Similarly, strong connectedness of $N_\GG$ implies that $\vp_m \in [1,n-1]$. 
Define rate constants in forward chain of $\GG$ as follows: 
\begin{equation} \label{eq:defineforrates}
\begin{aligned}
k_i = \begin{cases}
1 & \mbox{ if } \vp_1 \le i < \vp_m, \\
3 & \mbox{ if } \vp_m \le i < \vp_1, \\
2 & \mbox{ otherwise}. 
\end{cases}
\end{aligned}
\end{equation}
Moreover, let $k_i^- = 1$ and $k_i^+ = k_i + 1$ for $i \in [1,n-1]$. 

To show that $\one \coloneqq (1,1,\ldots,1)$ is a steady state, it suffices to show that the production rate of every species is equal to its consumption rate at the state $\one$, i.e. for every species $s$ in $\GG$, the following must hold:
\begin{align} \label{eq:balancecondition}
\sum_{s \in y'} \kappa_{y \to y'} - \sum_{s \in y} \kappa_{y \to y'}   = 0
\end{align}
where $\kappa_{y \to y'}$ is the reaction rate constant of the reaction $y \to y'$ and $s \in y$ means that the species $s$ has positive stoichiometric coefficient in the complex $y$. 
\beit
\item For the species $S_1$, we check that 
\begin{align*}
\sum_{S_1 \in y'} \kappa_{y \to y'} - \sum_{S_1 \in y} \kappa_{y \to y'} &= \left[k_0 + k_1^- - k_1^+\right] + \left[h_{\vp^{-1}(1)-1} + h_{\vp^{-1}(1)}^- - h_{\vp^{-1}(1)}^+\right] \\ 
&=
\left.
\begin{cases}
 \left[0 + 1 - (1+2)\right] + \left[3 + 1 - (1+1)\right] &\mbox{ if }  \vp_m \ne 1 \\
 \left[0 + 1 - (1+3)\right] + \left[3 + 0 - 0\right] &\mbox{ if }  \vp_m = 1
\end{cases} \right\}= 0.
\end{align*}
\item For the species $S_n$, we check that 
\begin{align*}
\sum_{S_n \in y'} \kappa_{y \to y'} - \sum_{S_n \in y} \kappa_{y \to y'} &= \left[k_{n-1} + k_n^- - k_n^+\right] + \left[h_{\vp^{-1}(n)-1} + h_{\vp^{-1}(n)}^- - h_{\vp^{-1}(n)}^+\right] \\ 
&=
\left.
\begin{cases}
 \left[2 + 0 - 0 \right] + \left[1 + 1  - (1+3)\right] &\mbox{ if }  \vp_1 \ne n \\
 \left[3 + 0 - 0 \right] + \left[0 + 1 - (1+3)\right] &\mbox{ if }  \vp_1 = n
\end{cases} \right\}= 0.
\end{align*}
\item For the species $S_{\vp_1}$, assuming that $\vp_1 \ne n$ (since this case is already covered), we check that 
\begin{align*}
\sum_{S_{\vp_1} \in y'} \kappa_{y \to y'} - \sum_{S_{\vp_1} \in y} \kappa_{y \to y'} &= \left[k_{\vp_1-1} + k_{\vp_1}^- - k_{\vp_1}^+\right] + \left[h_{0} + h_{1}^- - h_{1}^+\right] \\ 
&=
\left.
\begin{cases}
\left[ 2 + 1 - (1+1)\right] + \left[ 0 + 1 - (1+1) \right] &\mbox{ if }  \vp_1 <  \vp_m \\
\left[ 3 + 1 - (1+2)\right] + \left[ 0 + 1 - (1+1) \right] &\mbox{ if }  \vp_1 >  \vp_m 
\end{cases}
\right\}= 0.
\end{align*}
\item For the species $S_{\vp_m}$, assuming that $\vp_m \ne 1$ (since this case is already covered), we check that 
\begin{align*}
\sum_{S_{\vp_m} \in y'} \kappa_{y \to y'} - \sum_{S_{\vp_m} \in y} \kappa_{y \to y'} &= \left[k_{\vp_m-1} + k_{\vp_m}^- - k_{\vp_m}^+\right] + \left[h_{m-1} + h_{m}^- - h_{m}^+\right] \\ 
&=
\left.
\begin{cases}
\left[ 1 + 1 - (1 + 2) \right] + \left[ 1 + 0 - 0 \right] &\mbox{ if }  \vp_1 <  \vp_m \\
\left[ 2 + 1 - (1+3)\right] + \left[ 1 + 0 - 0  \right] &\mbox{ if }  \vp_1 >  \vp_m 
\end{cases}
\right\}= 0.
\end{align*}
\item For the species $S_i$, $i \notin \{1,n,\vp_1,\vp_m\}$, it is easy to check that the rate constants balance ``locally'', i.e. $k_{i-1} + k_{i}^- - k_{i}^+ = 0$ and $h_{\vp^{-1}(i)-1} + h_{\vp^{-1}(i)}^- - h_{\vp^{-1}(i)}^+ = 0$, and so 
\begin{align*}
\sum_{S_{i} \in y'} \kappa_{y \to y'} - \sum_{S_{i} \in y} \kappa_{y \to y'} &= \left[k_{i-1} + k_{i}^- - k_{i}^+\right] + \left[h_{\vp^{-1}(i)-1} + h_{\vp^{-1}(i)}^- - h_{\vp^{-1}(i)}^+\right] =0. 
\end{align*}
\item Finally, for the remaining species $E$, $C_1,\ldots, C_{n-1}$ (including $C_\alpha$) and $D_1, \ldots, D_{m-1}$, balancing of production and consumption rates at the state $\one$ follow immediately from 
\[
k_i^+ = k_i + k_i^- (i \in [1,n-1]), \mbox{ and } h_j^+ = h_j + h_j^- (j \in [1,m-1]).
\]
\enit
For the converse, suppose that $N_\GG$ is not strongly connected. Then there exists a proper subgraph $N'$ of $N_\GG$ such that $N'$ has at least one node, $N'$ is strongly connected, there is at least one edge from a node in $N'$ to a node in $N_\GG \setminus N'$ but there is no edge from a node in $N_\GG \setminus N'$ to a node in $N'$. 
 Since there is no edge from a node in $N_\GG \setminus N'$ to a node in $N'$, the set of nodes in $N'$ must be $\{X_1,\dots,X_\ell\} \cup \{X_{\vp_1},\dots,X_{\vp_p}\}$ for some $1\leq \ell \le n$ and $1\leq p\leq m$, but not both $\ell =n$ and $p=m$. Thus we have
\begin{align*}
\sum_{i: X_i\in N'}& \left(\frac{ds_i}{dt} + \frac{dc_i}{dt}  + \frac{dd_{\vp^{-1}(i)}}{dt} \right)  \\
& = \sum_{i\in[1,\ell]\cap\vp([1,p])}  \left(\frac{ds_i}{dt} +\frac{dc_i}{dt} +\frac{dd_{\vp^{-1}(i)}}{dt}\right) +\sum_{i\in[1,\ell]\cap\vp([1,m])^c}  \left(\frac{ds_i}{dt} +\frac{dc_i}{dt} \right) \\
&+\sum_{i\in[1,n]^c\cap\vp([1,p])}  \left(\frac{ds_i}{dt} +\frac{dd_{\vp^{-1}(i)}}{dt}\right) \\
&= \sum_{i=1}^\ell K_i + \sum_{j=1}^p H_j +\sum_{i:X_i\in N'}\delta_\alpha(i)\sum_{j=1}^mG^\alpha_j.
\end{align*}
At steady state, we must have $\sum_{j=1}^mG^\alpha_j=0$ from Lemma \ref{lem:cascade_ss}. 
Therefore, from Lemma \ref{lem:cascade}, we get that at steady state
\[
0=\sum_{i: X_i\in N'} \left(\frac{ds_i}{dt} + \frac{dc_i}{dt}  + \frac{dd_{\vp^{-1}(i)}}{dt} \right) = \sum_{i=1}^\ell K_i + \sum_{j=1}^p H_j  = -k_\ell c_\ell - h_p d_p,
\]
which implies that $c_\ell$ and $d_p$ are zero at any steady state. 
When $\ell < n$, $c_\ell$ is the concentration of the species $C_\ell$ and when $p < m$, $d_p$ is the concentration of the species $D_p$, and since one of the inequalities must hold, at least one species concentration is zero at steady state. 
In particular, there is no positive steady state, i.e. $\GG$ is not consistent.  
\end{proof}
\begin{remark} \label{rem:construction_existence}
We describe the procedure used to construct rate constants for $\GG$ such that $\one=(1,1,\ldots,1)$ is a steady state. 
If we choose $k_i^- = 1$, $k_i^+ = k_i+1$ for $i \in [1,n-1]$ and $h_j^- = 1$, $h_j^+ = h_j+1$ for $j \in [1,m-1]$, then all enzymes and intermediate complexes are balanced at $\one$. 
Moreover, the net production rate (production rate minus consumption rate) of each $S_i$ is  $-k_{i} + k_{i-1} - h_{\vp^{-1}(i)} + h_{\vp^{-1}(i)-1}$. 
In order to balance the network, i.e. find reaction rate constants such that the net production rate of every species is zero, it suffices to consider the following network, denoted $\XX$, instead:
\begin{align*}
&X_1 \xrightarrow{f} X_2 \xrightarrow{f} \ldots \xrightarrow{f} X_n \\
&X_{\vp_1} \xrightarrow{b} X_{\vp_1} \xrightarrow{b} \ldots \xrightarrow{b} X_{\vp_m}. 
\end{align*}
Note that the network $\XX$ above is related to $N_\GG$ appearing in the proof of Theorem \ref{thm:consistency} but is not exactly the same since here we use different `edge types' in the forward and the backward chain. Specifically, each edge is labeled either $f$ or $b$ depending on whether it appears in the forward or the backward chain, to enable a distinction between a transition $X_i \to X_j$ that may appear in both chains. 
Next we construct a cycle by adding a path of edges from $X_n$ to $X_{\vp_1}$ (possibly trivial path if $\vp_1 = n$) and another path of edges from $X_{\vp_m}$ to $X_1$ (possibly trivial path if $\vp_m=1$). 
Such paths exist because $N_\GG$ is strongly connected by hypothesis. 
The added edges may be any edges selected from $\XX$. 
Denote the constructed cycle by $\PP$ where $\PP_{X_i \to X_j}$ is the chosen fixed path from $X_i \to X_j$. 
\[
\PP = X_1 \xrightarrow{f} \ldots \xrightarrow{f} X_n \to (\PP_{X_n \to X_{\vp_1}}) \to  X_{\vp_1} \xrightarrow{b} \ldots \xrightarrow{b} X_{\vp_m} \to (\PP_{X_{\vp_m} \to X_{1}}).
\]
The rate constants of $\GG$ are now determined from the number of times the corresponding edge appears in $\PP$, i.e.
\begin{align*}
k_i &= \abs{\left(X_i \xrightarrow{f} X_{i+1} \right) \in \PP}, \quad \mbox{ and } \\
h_j &= \abs{\left(X_{\vp_j} \xrightarrow{b} X_{\vp_j + 1} \right) \in \PP}.  
\end{align*}
The specific rate constants \eqref{eq:definebackrates} and \eqref{eq:defineforrates} used in the proof of Theorem \ref{thm:consistency} were obtained by constructing specific paths  $\PP_{X_n \to X_{\vp_1}}$ and $\PP_{X_{\vp_m} \to X_{1}}$, as follows:
\begin{align*}
\PP_{X_n \to X_{\vp_1}} = 
\begin{cases}
X_{\vp_{\vp^{-1}(n)}} \xrightarrow{b} X_{\vp_{\vp^{-1}(n)+1}} \xrightarrow{b} \ldots \xrightarrow{b} X_{\vp_{\vp^{-1}(1)}} \xrightarrow{f} X_2 \xrightarrow{f} \ldots \xrightarrow{f} X_{\vp_1} &\mbox{ if } \vp_1 \ne n,\\
\{ \} &\mbox{ if } \vp_1 = n, 
\end{cases} \\
\PP_{X_{\vp_m} \to X_{1}} = 
\begin{cases}
X_{\vp_m} \xrightarrow{f} X_{\vp_{m}+1} \xrightarrow{f} \ldots \xrightarrow{f} X_{n}  \xrightarrow{b} X_{\vp_{\vp^{-1}(n)+1}} \xrightarrow{b} \ldots \xrightarrow{b} X_{\vp_{\vp^{-1}(1)}} &\mbox{ if } \vp_m \ne 1,\\
\{ \} &\mbox{ if } \vp_m = 1. 
\end{cases}
\end{align*}
\end{remark}

\begin{example}
\been
\item Let $\GG$ be the covalent modification network $(5,4,\vp)$ with $\vp([1,4]) = (2,5,1,4)$. $N_\GG$ is clearly strongly connected because it has the edge $X_5 \to X_1$. 

\begin{equation*} 
  \begin{tikzpicture}[baseline={(current bounding box.center)}, scale=0.8]
   \node[state] (X1)  at (0,0)  {$X_1$};
   \node[state] (X2)  at (2,0)  {$X_2$};
   \node[state] (X3)  at (4,0)  {$X_3$};
   \node[state] (X4)  at (6,0)  {$X_4$};
   \node[state] (X5)  at (8,0)  {$X_5$};
   \path[thick,->]
    (X1) edge[->,red] node {$2$} (X2)
        (X2) edge[->] node {$1$} (X3)
            (X3) edge[->] node {$1$} (X4)
                (X4) edge[->,red] node {$2$} (X5)
                    (X2) edge[->,bend right=30] node {$1$} (X5)
                    (X5) edge[->,bend right=-45,red,dashed] node {$3$} (X1)
                    (X1) edge[->,bend right=-30] node {$1$} (X4);
  \end{tikzpicture}
 \end{equation*}
 
\item Let $\GG$ be the covalent modification network $(4,4,\vp)$ with $\vp([1,4]) = (2,1,4,3)$. 
Here the subgraph $N'$ of $N_\GG$ containing nodes $X_1, X_2$ is strongly connected, and there is no edge from a node in $N_\GG\setminus N'$ to a node in $G$. As the proof of Theorem \ref{thm:consistency} suggests, at steady state we have:
\[
0=\frac{ds_1}{dt}+\frac{ds_2}{dt}+\frac{dc_1}{d_t}+\frac{dc_2}{dt} +\frac{dd_1}{dt}+\frac{dd_2}{dt} = -k_2c_2-h_2d_2,
\]
thus $\GG$ does not have any positive steady state, i.e. $\GG$ is not consistent.
\begin{equation*} 
  \begin{tikzpicture}[baseline={(current bounding box.center)}, scale=0.8]
   \node[state] (X1)  at (0,0)  {$X_1$};
   \node[state] (X2)  at (2,0)  {$X_2$};
   \node[state] (X3)  at (4,0)  {$X_3$};
   \node[state] (X4)  at (6,0)  {$X_4$};
   \path[thick,->]
    (X1) edge[->] node {} (X2)
        (X2) edge[->] node {} (X3)
            (X3) edge[->] node {} (X4)
                    (X2) edge[->,bend right=30] node {} (X1)
                    (X1) edge[->,bend right=30] node {} (X4)
                    (X4) edge[->,bend right=30] node {} (X3);
  \end{tikzpicture}
 \end{equation*}
 
\enen
\end{example}

\begin{corollary}
Let $\GG$ be a consistent covalent modification network $(n,m,\vp)$. Then the $\beta$ and $\gamma$ appearing in Theorem \ref{thm:general} are defined. 
\end{corollary}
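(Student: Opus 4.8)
The plan is to recognize that the corollary is precisely the statement that consistency implies the hypothesis $\{1,n\}\subseteq \vp([1,m])$ of Theorem~\ref{thm:general}. Indeed, the opening lines of the proof of Theorem~\ref{thm:general} already show that this membership condition suffices for $\beta$ and $\gamma$ to exist: since $\alpha\in[1,n-1]$ we have $1\in[1,\alpha]$ and $n\in[\alpha+1,n]$, so if $1,n\in\vp([1,m])$ then $1\in[1,\alpha]\cap\vp([1,m])$ and $n\in[\alpha+1,n]\cap\vp([1,m])$, making both defining sets nonempty and hence $\beta,\gamma$ well-defined. Thus the entire task reduces to proving the single implication: $\GG$ consistent $\implies\{1,n\}\subseteq\vp([1,m])$.

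To establish this, I would invoke Theorem~\ref{thm:consistency}, which equates consistency of $\GG$ with strong connectivity of the auxiliary graph $N_\GG$. Because $n\ge 2$, the graph $N_\GG$ has at least two vertices, so strong connectivity forces every vertex to have at least one incoming and at least one outgoing edge. I would then apply this to the two endpoints $X_n$ and $X_1$, reading off consequences from the edge rule defining $N_\GG$.

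The heart of the argument is a short structural observation at these endpoints. The vertex $X_n$ has no forward-chain out-edge, since such an edge $X_n\to X_s$ would require $n+1=s\le n$; hence any out-edge of $X_n$ must be a backward-chain edge, whose very existence presupposes that $\vp^{-1}(n)$ is defined, i.e. $n\in\vp([1,m])$. Symmetrically, $X_1$ has no forward-chain in-edge, as that would demand a nonexistent vertex $X_0$; so any in-edge of $X_1$ must be a backward-chain edge, forcing $1\in\vp([1,m])$. Combining these gives $\{1,n\}\subseteq\vp([1,m])$, which closes the chain of implications.

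I expect no serious obstacle: the proof is a two-step assembly from results already available in the excerpt (Theorems~\ref{thm:consistency} and~\ref{thm:general}), and in fact merely repackages the observation made at the very start of the proof of Theorem~\ref{thm:consistency}, namely that a path from $X_n$ to $X_1$ in $N_\GG$ forces $n,1\in\Im(\vp)$. The only place deserving care is to handle the edge definition of $N_\GG$ precisely, ensuring that the ``no forward out-edge at $X_n$'' and ``no forward in-edge at $X_1$'' claims are applied with the correct index arithmetic.
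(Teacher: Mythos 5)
Your proposal is correct and follows essentially the same route as the paper: the paper's proof is the one-line contrapositive ``if $\{1,n\} \not\subseteq \vp([1,m])$ then $N_\GG$ is not strongly connected,'' which combined with Theorem~\ref{thm:consistency} and the opening lines of the proof of Theorem~\ref{thm:general} is exactly your chain of implications. Your endpoint analysis of $X_n$ and $X_1$ merely spells out the detail the paper leaves implicit (and which also appears at the start of the proof of Theorem~\ref{thm:consistency}), so there is no substantive difference.
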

\begin{proof}
If $\{1,n\} \not \subseteq \vp([1,m])$ then $N_\GG$ is not strongly connected. 
\end{proof}

\section{Steady state parameterization}\label{sec:parameterization}

We give a steady state parameterization for the futile cycle using only two parameters $e$, the concentration of the enzyme in the forward chain and $u \coloneqq e/c_\alpha$, the ratio between the concentration of the enzyme in the forward chain and backward chain. 
The results of the previous section, especially identity of the ACR species and its ACR value are used in obtaining a steady state parameterization. 
On the other hand, the steady state parameterization is found to be useful for ruling out ACR in other species. 
Additionally, the steady state parameterization helps find the number of positive steady states for futile cycles with a bifunctional enzyme.

Throughout this section, we let $\GG$ be a covalent modification network $(n,m,\vp)$ with a bifunctional enzyme $C_\alpha$, where  $m=n$ and $\vp_j=n+1-j$. The main result is Theorem \ref{thm:existencess}, where we show that a positive steady state exists for all sufficiently large total substrate concentrations. 

\begin{proposition} \label{prop:ssparam}
The steady state concentration of the substrates $S_i$, and the intermediate species $C_i$ and $D_j$ can be expressed in terms of the concentration of the enzymes and their ratio as follows: 
\begin{equation}\label{eq:scdbases2}
\begin{aligned}
s_i &= k_\alpha h^*_{n - \alpha}  \left(\frac{\sigma^{i-1 \downarrow 1}}{\sigma^{\alpha \downarrow 1}} \right) u^{i-\alpha-1}, \quad (i \in [1,n]), \\
c_i &=  \left(\frac{\mu^{i-1 \downarrow 1} }{ \mu^{\alpha-1 \downarrow 1}}\right) e u^{i-\alpha-1}, \quad (i \in [1,n-1]), \\
d_j &=   \frac{k_\alpha}{h_{n-\alpha}} \left( \frac{ \nu^{n-2 \downarrow j}}{ \nu^{n-2 \downarrow n-\alpha}} \right) e u^{n-j-\alpha-1}, \quad (j \in [1,n-1]), 
\end{aligned}
\end{equation}
where for each of $\upsilon \in \{\sigma, \mu, \nu\}$, 
\begin{align}
\upsilon^{i \downarrow j} \coloneqq \begin{cases}  \upsilon_i \cdots \upsilon_j &\mbox{ if } i \ge j, \\
1 &\mbox{ otherwise}, 
\end{cases}
\end{align}
and 
\begin{align}\label{eq:parameters}
\sigma_i = \frac{h_{n-i}^*}{k_i^*}, \quad \mu_i = \frac{k_i h_{n-i}^*}{k_{i+1} k_{i+1}^*}, \quad \nu_i = \frac{h_{i+1} h_{i+1}^*}{h_i k_{n-i}^*}. 
\end{align}
\end{proposition}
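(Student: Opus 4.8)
The plan is to collapse the steady-state equations into a single multiplicative recursion for the substrate concentrations and then read off $c_i$ and $d_j$ by back-substitution. The three ingredients are equations $(4)$ and $(5)$ of Corollary \ref{cor:steadystate}, namely $s_{\vp_j}c_\alpha = h_j h_j^* d_j$ and $s_i e = k_i k_i^* c_i$, together with the identity $k_i c_i = h_{n-i} d_{n-i}$ from \eqref{eq:base_ratio} in the proof of Theorem \ref{thm:base}. Since here $m=n$ and $\vp_j = n+1-j$, equation $(4)$ taken at $j = n-i$ reads $s_{i+1}c_\alpha = h_{n-i}h_{n-i}^* d_{n-i}$ (because $\vp_{n-i}=i+1$).

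First I would establish the substrate recursion. Substituting $c_i = s_i e/(k_i k_i^*)$ from $(5)$ into $k_i c_i = h_{n-i}d_{n-i}$ gives $s_i e/k_i^* = h_{n-i}d_{n-i}$, while the reindexed $(4)$ gives $h_{n-i}d_{n-i} = s_{i+1}c_\alpha/h_{n-i}^*$. Equating these and using $u = e/c_\alpha$ yields
\[
s_{i+1} = \frac{h_{n-i}^*}{k_i^*}\,u\,s_i = \sigma_i u\, s_i \qquad (i \in [1,n-1]),
\]
which is exactly the multiplier $\sigma_i$ from \eqref{eq:parameters}. Solving this forward recursion and anchoring at the ACR value $s_{\alpha+1} = k_\alpha h_{n-\alpha}^*$ from Theorem \ref{thm:base} fixes $s_1$ and produces the claimed closed form; the $\downarrow$-product convention makes the single expression valid whether $i$ lies above or below $\alpha$, so no case split is required.

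Next I would obtain $c_i$ and $d_j$ by substitution. For $c_i$, dividing consecutive instances of $(5)$ gives $c_{i+1}/c_i = (s_{i+1}/s_i)\,k_i k_i^*/(k_{i+1}k_{i+1}^*) = \sigma_i u\cdot k_i k_i^*/(k_{i+1}k_{i+1}^*)$, and a short check shows this collapses to precisely $\mu_i u$; anchoring at $c_\alpha = e/u$ then gives the stated formula. For $d_j$, equation $(4)$ reads $d_j = s_{n+1-j}c_\alpha/(h_j h_j^*)$, so dividing consecutive instances and using $s_{n-j}/s_{n+1-j} = 1/(\sigma_{n-j}u)$ yields $d_{j+1}/d_j = (\nu_j u)^{-1}$, where the reciprocal appears because the backward chain runs opposite to the forward chain. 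The anchor here is $d_{n-\alpha} = k_\alpha c_\alpha/h_{n-\alpha} = (k_\alpha/h_{n-\alpha})(e/u)$, which follows from \eqref{eq:base_ratio} at $i = \alpha$.

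The genuinely delicate part is not any single algebraic identity but the index bookkeeping: verifying that $\sigma_i\, k_i k_i^*/(k_{i+1}k_{i+1}^*)$ simplifies to $\mu_i$ and that $(\sigma_{n-j})^{-1} h_j h_j^*/(h_{j+1}h_{j+1}^*)$ simplifies to $\nu_j^{-1}$ (each using $\sigma_i = h_{n-i}^*/k_i^*$ from Definition \ref{def:star}), and then checking that the telescoping products collapse to the $\downarrow$-quotients $\mu^{i-1\downarrow 1}/\mu^{\alpha-1\downarrow 1}$ and $\nu^{n-2\downarrow j}/\nu^{n-2\downarrow n-\alpha}$ with the correct powers of $u$. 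I expect this reindexing — in particular matching the anchor index $j = n-\alpha$ to the upper limit $n-2$ in the $\nu$-products — to be the main source of potential error, and would guard against it by first verifying the three formulas at the anchor indices $i = \alpha+1$ and $j = n-\alpha$ and then confirming the one-step ratios.
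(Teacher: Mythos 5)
Your proposal is correct and follows essentially the same route as the paper's own proof: the paper likewise combines equations $(4)$ and $(5)$ of Corollary \ref{cor:steadystate} with the identity $k_i c_i = h_{n-i}d_{n-i}$ from \eqref{eq:base_ratio} to obtain the one-step recursions $s_{i+1} = \sigma_i u\, s_i$, $c_{i+1} = \mu_i u\, c_i$, $d_i = \nu_i u\, d_{i+1}$, anchors them at $s_{\alpha+1} = k_\alpha h^*_{n-\alpha}$ and $d_{n-\alpha} = (k_\alpha/h_{n-\alpha})c_\alpha$, and telescopes; your index checks ($\sigma_i k_i k_i^*/(k_{i+1}k_{i+1}^*) = \mu_i$, etc.) are exactly the verifications the paper performs implicitly. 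The one substantive difference is that the paper's proof contains a second, converse half: it checks that the closed-form expressions \eqref{eq:scdbases2}, for \emph{arbitrary} positive $e$ and $u$, satisfy equations $(1)$, $(4)$, $(5)$ of Corollary \ref{cor:steadystate}, so that \eqref{eq:scdbases2} is a genuine parameterization of the positive steady state set rather than merely a formula valid at steady states. Your argument establishes the proposition as literally stated (steady state $\Rightarrow$ formulas), but the converse inclusion is what is actually invoked in Proposition \ref{prop:polyg} and Theorem \ref{thm:existencess}, where positive roots $u^*$ of \eqref{eq:polygu} are converted back into positive steady states; if you intend your proof to support that downstream use, you should add the (routine but necessary) verification that the parameterized point solves the steady state equations.
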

\begin{proof}  We observe that for futile cycle, $m=n$ and $\vp([1,m]) = [1,n]$, thus the steady state equations only consist of equation $(1), (4), (5)$ in Corollary \ref{cor:steadystate}. First, we prove that these equations imply \eqref{eq:scdbases2}. We start with a useful claim on recurrence relations  of  the substrates $S_i$, and the intermediate species $C_i$ and $D_j$.

{\bf Claim:} For $i \in [1,n-1]$, we have the following recurrence relations at a positive steady state: 
\begin{align} \label{eq:recrel}
s_{i+1} = u \cdot \sigma_i \cdot s_i, \quad 
c_{i+1} = u \cdot \mu_i \cdot c_i, \quad 
d_{i} = u \cdot \nu_i \cdot d_{i+1}, 
\end{align}
where $\sigma_i,  \mu_i, \nu_i $ are given in \eqref{eq:parameters}. 

{\bf Proof of claim:} 
Let $j=n-i$ in equation $(4)$ in Corollary \ref{cor:steadystate} we obtain $s_{i+1} c_\alpha = h_{n-i}h^*_{n-i} d_{n-i}$. From equation $(5)$ in   Corollary \ref{cor:steadystate} we have $s_ie=k_ik_i^*c_i$. Thus
\begin{equation}\label{eq:ratio_s}
\frac{s_{i+1}}{s_i} = \frac{e}{c_\alpha} \frac{h_{n-i} h^*_{n-i} d_{n-i}}{k_ik^*_ic_i} = u \cdot \sigma_i \cdot \frac{h_{n-i}d_{n-i}}{k_ic_i} =  u \cdot \sigma_i, 
\end{equation}
where the last equality comes from \eqref{eq:base_ratio}. The recurrence relations for $c_i$ come from equation $(5)$ in  Corollary \ref{cor:steadystate} and \eqref{eq:ratio_s}. The recurrence relations for $d_i$ come from equation $(4)$ in  Corollary \ref{cor:steadystate} and \eqref{eq:ratio_s}, which completes proof of the claim.

Next, we return to the proof of the Proposition. The recurrence relations in \eqref{eq:recrel} have the following general solution: 
\begin{equation} \label{eq:scdbases}
\begin{aligned}
s_i = u^{i-1} \sigma^{i-1 \downarrow 1} s_1, \quad 
c_i = u^{i-1} \mu^{i-1 \downarrow 1} c_1, \quad 
d_i = u^{n-i-1} \nu^{n-2 \downarrow i} d_{n-1}. 
\end{aligned}
\end{equation}
From Theorem \ref{thm:base} and \eqref{eq:base_ratio} (where $i=\alpha$) we have the following:
\begin{align}
s_{\alpha + 1} = k_\alpha h^*_{n-\alpha}, \quad 
d_{n-\alpha} = \frac{k_\alpha}{h_{n - \alpha}} c_\alpha. 
\end{align}
Now, we let $i = \alpha+1$ in the first equation, $i= \alpha$ in the second equation and $i = n - \alpha$ in the third equation of  \eqref{eq:scdbases} to get:
\begin{equation}
\begin{aligned}
s_1 &= \frac{s_{\alpha + 1}}{u^\alpha \sigma^{\alpha \downarrow 1}} = \left(\frac{k_\alpha h^*_{n - \alpha}}{\sigma^{\alpha \downarrow 1}} \right) u^{-\alpha}, \\
c_1 &= \frac{c_{\alpha}}{u^{\alpha-1} \mu^{\alpha-1 \downarrow 1}}  = \left(\frac{1}{ \mu^{\alpha-1 \downarrow 1}}\right) e u^{-\alpha}, \\
d_{n-1} &= \frac{d_{n-\alpha}}{u^{\alpha-1} \nu^{n-2 \downarrow n-\alpha}} = \left( \frac{k_\alpha }{h_{n-\alpha}  \nu^{n-2 \downarrow n-\alpha}} \right) e u^{-\alpha}. 
\end{aligned}
\end{equation}
Re-substituting in \eqref{eq:scdbases}, we get the desired equations \eqref{eq:scdbases2}. 

 Conversely, we check that equations \eqref{eq:scdbases2} imply equations $(1),(4),(5)$ in Corollary \ref{cor:steadystate}, and thus they form a parameterization of the steady states. Suppose that $s_i,c_i,d_j$ are given by equations \eqref{eq:scdbases2}. Then we have
\[
k_ic_i = k_i \left(\frac{\mu^{i-1 \downarrow 1} }{ \mu^{\alpha-1 \downarrow 1}}\right) e u^{i-\alpha-1}
\]
and
\[
h_{n-i}d_{n-i} =h_i  \frac{k_\alpha}{h_{n-\alpha}} \left( \frac{ \nu^{n-2 \downarrow n-i}}{ \nu^{n-2 \downarrow n-\alpha}} \right) e u^{i-\alpha-1}
\]
It is straightforward to check that the coefficients are equal:
\[
 k_i \left(\frac{\mu^{i-1 \downarrow 1} }{ \mu^{\alpha-1 \downarrow 1}}\right)= h_i  \frac{k_\alpha}{h_{n-\alpha}} \left( \frac{ \nu^{n-2 \downarrow n-i}}{ \nu^{n-2 \downarrow n-\alpha}} \right), 
\]
thus $k_ic_i=h_{n-i}d_{n-i}$ for all $i$. This further implies
\[k_{i-1}c_{i-1}=h_{n-i+1}d_{n-i+1}-h_{n-i}d_{n-i} = -h_{\vp^{-1}(i)-1}d_{\vp^{-1}(i)-1}+h_{\vp^{-1}(i)}d_{\vp^{-1}(i)},\]
which is equation $(1)$ in Corollary \ref{cor:steadystate}.

Next, we have
\[
s_{i+1}c_\alpha = k_\alpha h^*_{n-\alpha} \left(\frac{\sigma^{i \downarrow 1}}{\sigma^{\alpha \downarrow 1}} \right) u^{i-\alpha}c_\alpha =  k_\alpha h^*_{n-\alpha} \left(\frac{\sigma^{i \downarrow 1}}{\sigma^{\alpha \downarrow 1}} \right) e u^{i-\alpha-1},
\]
where the last equality comes from $c_\alpha=e/u$. We also have
\[
h_{n-i}h^*_{n-i} d_{n-i}= h_{n-i}h^*_{n-i} \frac{k_\alpha}{h_{n-\alpha}} \left( \frac{ \nu^{n-2 \downarrow n-i}}{ \nu^{n-2 \downarrow n-\alpha}} \right) e u^{i-\alpha-1}.
\]
Again, it is straightforward to verify that the coefficients in $s_{i+1}c_\alpha$ and $h_{n-i}h^*_{n-i} d_{n-i}$ are the same, thus $s_{i+1}c_\alpha=h_{n-i}h^*_{n-i} d_{n-i}$. By substituting $j=n-i$ we obtain equation $(4)$ in Corollary \ref{cor:steadystate}. Similarly, we can check that $s_ie = k_ik_i^*c_i$ by verifying that they have the same monomial in $e$ and $u$ and same coefficient.

\end{proof}

\begin{proposition} \label{prop:polyg}
Within a given compatibility class, with fixed total substrate $T_s$ and total enzyme $T_e$, defined by the equations
\begin{align} \label{eq:consquant}
\sum_{i=1}^ns_i+\sum_{i=1}^{n-1}c_i+2\sum_{i=1}^{i-1}d_i = T_s, \quad \quad \quad 
e+ \sum_{i=1}^{n-1}c_i+\sum_{i=1}^{i-1}d_i = T_e, 
\end{align}
the number of positive steady states is the number of positive solutions of the following equation in the variable $u$
\begin{align} \label{eq:polygu}
&g_\alpha(u) \coloneqq \sum_{i=0}^{n-1} \left(\frac{T_s \delta_\alpha - a_i}{T_e} \right) u^i = \frac{u^{\alpha} \sum_{i=0}^{n-2} b_i u^i}{u^{\alpha} +  \sum_{i=0}^{n-2} q_i u^i } \eqqcolon h_\alpha(u)
\end{align}
where 
\begin{align*}
a_i =  k_\alpha h^*_{n - \alpha}  \left(\frac{\sigma^{i \downarrow 1}}{\sigma^{\alpha \downarrow 1}} \right), \quad
b_i = \frac{\mu^{i \downarrow 1}}{\mu^{\alpha-1 \downarrow 1}} + 2\left( \frac{k_\alpha }{h_{n-\alpha} } \right) \frac{\nu^{n-2 \downarrow n-i-1}}{\nu^{n-2 \downarrow n-\alpha}}, \quad
q_i = \frac{\mu^{i \downarrow 1}}{\mu^{\alpha-1 \downarrow 1}} + \left( \frac{k_\alpha }{h_{n-\alpha} } \right) \frac{\nu^{n-2 \downarrow n-i-1}}{\nu^{n-2 \downarrow n-\alpha}}. 
\end{align*}
\end{proposition}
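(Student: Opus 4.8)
The plan is to use the steady state parameterization of Proposition \ref{prop:ssparam} to reduce the two conservation laws \eqref{eq:consquant} to a single scalar equation in the variable $u$, and then to match that equation term by term with \eqref{eq:polygu}. Proposition \ref{prop:ssparam} yields a bijection between positive steady states and pairs $(e,u)\in\R_{>0}^2$: at any positive steady state both $e$ and $c_\alpha$ are positive, so $u=e/c_\alpha$ is well-defined and positive and \eqref{eq:scdbases2} recovers every coordinate, while conversely each $(e,u)$ with $e,u>0$ produces a positive state through \eqref{eq:scdbases2} since all rate constants are positive (and the map is injective because its first coordinate is $e$ and $c_\alpha=e/u$). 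Restricting to a fixed compatibility class amounts to imposing the two equations \eqref{eq:consquant} on $(e,u)$, so it suffices to count the pairs $(e,u)\in\R_{>0}^2$ solving that system.

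First I would substitute \eqref{eq:scdbases2} into each conservation sum and collect powers of $u$. Reindexing $s_{i+1}=a_i u^{i-\alpha}$ for $i\in[0,n-1]$ gives $\sum_{i=1}^n s_i = u^{-\alpha}\sum_{i=0}^{n-1}a_i u^i$. Writing $\hat c_i := c_i/e$ and $\hat d_j := d_j/e$ (both functions of $u$ alone), the shift $i\mapsto i+1$ on the $c$-sum and $j\mapsto n-1-i$ on the $d$-sum turn $\hat c_{i+1}$ and $\hat d_{n-1-i}$ into precisely the two summands appearing in $b_i$ and $q_i$. Consequently
\[
\sum_{i=1}^{n-1} c_i + 2\sum_{j=1}^{n-1} d_j = e\,u^{-\alpha}\sum_{i=0}^{n-2} b_i u^i,
\qquad
\sum_{i=1}^{n-1} c_i + \sum_{j=1}^{n-1} d_j = e\,u^{-\alpha}\sum_{i=0}^{n-2} q_i u^i,
\]
the factor $2$ on the $d$-term being exactly what distinguishes the $\nu$-contribution to $b_i$ from that to $q_i$. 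I expect this bookkeeping --- keeping the $\downarrow$-products aligned through the two different index shifts --- to be the main obstacle, although it is entirely routine.

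Next I would eliminate $e$. Multiplying the second law in \eqref{eq:consquant} by $u^\alpha$ gives $e\bigl(u^\alpha+\sum_{i=0}^{n-2}q_i u^i\bigr)=T_e u^\alpha$; since every $q_i>0$ the denominator is positive, so this determines a unique positive $e = T_e u^\alpha/\bigl(u^\alpha+\sum q_i u^i\bigr)$ for each $u>0$. Multiplying the first law by $u^\alpha$ gives $\sum_{i=0}^{n-1}a_i u^i + e\sum_{i=0}^{n-2}b_i u^i = T_s u^\alpha$, hence $(e/T_e)\sum b_i u^i = (T_s u^\alpha-\sum a_i u^i)/T_e$. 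Reading $\delta_\alpha$ as the Kronecker indicator (that is, the coefficient $T_s\delta_\alpha$ of $u^i$ equals $T_s$ when $i=\alpha$ and $0$ otherwise, as in \eqref{eq:ODEs}), the right-hand side is exactly $g_\alpha(u)$, while substituting the expression for $e$ into the left-hand side gives $u^\alpha\sum b_i u^i/\bigl(u^\alpha+\sum q_i u^i\bigr) = h_\alpha(u)$. Thus the system \eqref{eq:consquant} is equivalent to $g_\alpha(u)=h_\alpha(u)$ together with the explicit formula for $e$.

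Finally I would assemble the count. For each $u>0$ the second law fixes a unique positive $e$, and the resulting pair $(e,u)$ lies in the prescribed compatibility class if and only if the first law also holds, i.e. if and only if $g_\alpha(u)=h_\alpha(u)$. Combined with the bijection of the first paragraph, this puts the positive steady states in the class in one-to-one correspondence with the positive roots of \eqref{eq:polygu}, as claimed.
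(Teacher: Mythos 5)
Your proposal is correct and follows essentially the same route as the paper's proof: substitute the parameterization \eqref{eq:scdbases2} into the two conservation laws, solve the enzyme law for $e$ (which is uniquely determined and positive for each $u>0$ since all $q_i>0$), and substitute into the substrate law to obtain $g_\alpha(u)=h_\alpha(u)$, giving a one-to-one correspondence between positive roots in $u$ and positive steady states in the class. Your treatment is somewhat more explicit than the paper's (the injectivity of the $(e,u)$ parameterization, the reading of $\delta_\alpha$ as a Kronecker indicator, and the index-shift bookkeeping for $b_i$ and $q_i$), but these are elaborations of the same argument rather than a different one.
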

\begin{proof}
Plugging the solution \eqref{eq:scdbases2} into the conservation laws \eqref{eq:consquant} results in 
\begin{equation}
\begin{aligned}  \label{eq:cons4}
 \sum_{i=0}^{n-1}a_i u^i + e\sum_{i=0}^{n-2} b_i u^i  &= T_su^\alpha, \quad 
e \left[1 + u^{-\alpha} \sum_{i=0}^{n-2} q_i u^i  \right] &= T_e, 
\end{aligned}
\end{equation}
where $a_i, b_i$ and $q_i$ are as defined in the statement of the theorem. Eliminating the variable $e$, and multiplying through by the common denominator gives an equation whose left side is \eqref{eq:polygu} and the right side is 0. For every positive zero $u^*$ of \eqref{eq:polygu}, we can find the value of the coordinate $e$ from the second equation in \eqref{eq:cons4} and finally the other coordinates from the steady state parameterization \eqref{eq:scdbases2} appearing in Proposition \ref{prop:ssparam}. 
It is clear that these other coordinates are positive when $u^*>0$.
\end{proof}
\begin{theorem} \label{thm:existencess}
Let $\GG$ be a covalent modification network $(n,m,\vp)$ with a bifunctional enzyme $C_\alpha$, where  $m=n$ and $\vp_j=n+1-j$. 
\been
\item Suppose that $\alpha = n-1$. Then a positive steady state exists if and only if $T_s > k_\alpha h^*_{n - \alpha}$. Moreover, when a positive steady state exists, the number (counted with multiplicity) of positive steady states is odd. 
\item Suppose that $\alpha \in [1,n-2]$. Then for every fixed $T_e$, there exists a $\widehat T_s > k_\alpha h^*_{n - \alpha}$ such that a positive steady state exists if $T_s > \widehat T_s$. 
Moreover, when a positive steady state exists, the number (counted with multiplicity) of positive steady states is even. 
\enen
\end{theorem}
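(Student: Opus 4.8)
The plan is to reduce the entire statement to a one-variable root-counting problem via Proposition~\ref{prop:polyg}, which says that within a fixed compatibility class the number of positive steady states (with multiplicity) equals the number of positive roots $u$ of $g_\alpha(u)=h_\alpha(u)$. Writing $A(u)=\sum_{i=0}^{n-1}a_iu^i$, $B(u)=\sum_{i=0}^{n-2}b_iu^i$ and $D(u)=u^\alpha+\sum_{i=0}^{n-2}q_iu^i$, we have $g_\alpha(u)=(T_su^\alpha-A(u))/T_e$ and $h_\alpha(u)=u^\alpha B(u)/D(u)$, where all $a_i,b_i,q_i>0$, $D(u)>0$ on $[0,\infty)$, and crucially $a_\alpha=k_\alpha h^*_{n-\alpha}$. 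Since $D>0$ on $(0,\infty)$, the positive roots of $g_\alpha=h_\alpha$ coincide with those of the polynomial
\[
\tilde P(u):=T_eD(u)\bigl(g_\alpha(u)-h_\alpha(u)\bigr)=\bigl(T_su^\alpha-A(u)\bigr)D(u)-T_eu^\alpha B(u),
\]
and I would first record the boundary value $\tilde P(0)=-a_0q_0<0$.

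Next I would read off the degree and the leading coefficient of $\tilde P$ in the two regimes, which is where the dichotomy originates. When $\alpha=n-1$, both $T_su^\alpha-A(u)$ and $D(u)$ have degree $n-1$ while $T_eu^\alpha B(u)$ has degree $2n-3$, so $\tilde P$ has \emph{even} degree $2n-2$ with leading coefficient $T_s-a_{n-1}$. When $\alpha\le n-2$, the factor $T_su^\alpha-A(u)$ has degree $n-1$ with leading coefficient $-a_{n-1}$, $D(u)$ has degree $n-2$ with a positive leading coefficient, and $T_eu^\alpha B(u)$ has degree $\alpha+n-2\le 2n-4$; hence $\tilde P$ has \emph{odd} degree $2n-3$ with a negative leading coefficient. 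The parity statements then follow from the elementary sign principle that, for a polynomial nonvanishing at $0$, the number of positive roots counted with multiplicity is odd if and only if $\tilde P(0)$ and the leading coefficient have opposite signs: for $\alpha=n-1$ with $T_s>a_{n-1}$ the signs are $(-,+)$, giving an odd count, whereas for $\alpha\le n-2$ both are negative, giving an even count.

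For existence and the threshold I would recast $g_\alpha(u)=h_\alpha(u)$ as $T_s=\Phi(u)$ with
\[
\Phi(u):=\frac{A(u)}{u^\alpha}+\frac{T_eB(u)}{D(u)},
\]
a continuous positive function on $(0,\infty)$ satisfying $\Phi(u)\to+\infty$ as $u\to0^+$ (because of the term $a_0u^{-\alpha}$). The behaviour at infinity is what separates the cases. If $\alpha=n-1$, then $\Phi(u)\to a_{n-1}=k_\alpha h^*_{n-\alpha}$, and moreover $\Phi(u)=a_{n-1}+(\text{strictly positive terms})>a_{n-1}$ for every finite $u>0$; by the intermediate value theorem the range of $\Phi$ is exactly $(a_{n-1},\infty)$, so a positive steady state exists if and only if $T_s>k_\alpha h^*_{n-\alpha}$, and by the previous paragraph the count is then odd. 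If $\alpha\le n-2$, then $\Phi(u)\to+\infty$ as $u\to\infty$ as well (the term $a_{n-1}u^{n-1-\alpha}$ with $n-1-\alpha\ge1$ dominates), so $\Phi$ attains a finite minimum $\Phi_{\min}$ on $(0,\infty)$; taking $\widehat T_s:=\max\{\Phi_{\min},\,k_\alpha h^*_{n-\alpha}\}$ (enlarged slightly to make the inequality strict), any $T_s>\widehat T_s$ yields a solution on each side of the minimizer, hence at least one—and by parity at least two—positive steady states.

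The main obstacle I anticipate is the bookkeeping in the second paragraph: correctly identifying which factor controls the leading term of $\tilde P$ in each regime, so that the degree parity and the sign of the leading coefficient come out right, and, relatedly, verifying the strict inequality $\Phi(u)>a_{n-1}$ in the case $\alpha=n-1$ so that the range of $\Phi$ is pinned down as exactly $(a_{n-1},\infty)$ rather than merely containing it. Everything else reduces to the intermediate value theorem and the sign-parity principle for real polynomials.
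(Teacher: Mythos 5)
Your proof is correct and takes essentially the same route as the paper: both reduce to the scalar equation $g_\alpha(u)=h_\alpha(u)$ of Proposition~\ref{prop:polyg}, obtain the parity statements by comparing signs at $u\to 0^+$ and $u\to\infty$ (opposite signs when $\alpha=n-1$ and $T_s>a_{n-1}$, equal signs when $\alpha\le n-2$), and get existence from the intermediate value theorem, with $a_{n-1}=k_\alpha h^*_{n-\alpha}$ as the exact threshold in case 1. Your two departures are presentational rather than substantive: clearing denominators to the polynomial $\tilde P=(T_su^\alpha-A)D-T_eu^\alpha B$ makes ``counted with multiplicity'' precise where the paper leaves it implicit, and solving for $T_s=\Phi(u)$ and placing $\widehat T_s$ above the minimum of the coercive function $\Phi$ replaces the paper's explicit test at $u=1$, whose threshold $\widehat T_s=\sum_{i=0}^{n-1} a_i+T_e\sum_{i=0}^{n-2} b_i\big/\bigl(1+\sum_{i=0}^{n-2} q_i\bigr)$ is exactly your $\Phi(1)$.
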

\begin{proof}
When $\alpha = n-1$, \eqref{eq:polygu} becomes 
\begin{align} \label{eq:polygun-1}
&g_{n-1}(u) = \frac{T_s  - a_{n-1}}{T_e} u^{n-1} - \sum_{i=0}^{n-2} \left(\frac{a_i}{T_e} \right) u^i = \frac{u^{n-1} \sum_{i=0}^{n-2} b_i u^i}{u^{n-1} +  \sum_{i=0}^{n-2} q_i u^i } = h_{n-1}(u).
\end{align}
For $u > 0$, the range of $h_{n-1}$ is $(0,\infty)$. 
If $T_s \le a_{n-1}$, then $g_{n-1}(u) < 0$ for all $u > 0$ and so there is no positive solution. 
Now suppose that $T_s > a_{n-1}$. 
Then $g_{n-1}(u) \sim c' u^{n-1}$ (as $u \to \infty$) for some positive constant $c'$.
Note that $g_{n-1}(0) < 0 = h_{n-1}(0)$. 
Since $h_{n-1}(u) \sim c''u^{n-2}$ and $g_{n-1}(u) \sim c'u^{n-1}$, $g_{n-1}(u) > h_{n-1} (u)$ for all sufficiently large $u$. 
It follows \eqref{eq:polygun-1} must have an odd number (counted with multiplicity) of positive solutions, and therefore at least one.  
By Proposition \ref{prop:polyg}, the number of positive solutions is the same as the number of positive steady states.
Finally, note that $a_{\alpha} = a_{n-1} = k_\alpha h^*_{n - \alpha}$ is both the ACR value and the threshold for existence of positive steady state.

For any $\alpha \in [1,n-2]$, $g_\alpha (0) = - a_0/T_e < 0 = h_\alpha(0)$. 
Moreover, $g_\alpha(u) \sim - (a_{n-1}/T_e) u^{n-1}$ while $h_\alpha(u)$ is positive on $(0,\infty)$.
Therefore, the number (counted with multiplicity) of positive solutions of \eqref{eq:polygu} must be even. 

To show that a positive solution exists for a large enough $T_s$, we only need to show that the graph of $g_\alpha$ crosses the graph of $h_\alpha$. Indeed, fix $T_e > 0$ and let 
\[
\widehat T_s = \sum_{i=0}^{n-1} a_i + \frac{T_e \sum_{i=0}^{n-2} b_i}{1 + \sum_{i=0}^{n-2} q_i}. 
\]
Then for any $T_s > \widehat T_s$, we have 
\begin{align*}
T_e \left(g_\alpha(1) - h_\alpha(1)\right) = T_s - \sum_{i=0}^{n-1} a_i - \frac{T_e \sum_{i=0}^{n-2} b_i}{1 + \sum_{i=0}^{n-2} q_i} = T_s - \widehat T_s > 0.
\end{align*}
It follows that $g_\alpha(1) > h_\alpha(1)$ and therefore there is a positive solution for $u \in (0,1)$ and another positive solution in the interval $(1,\infty)$. 
\end{proof}

\section{Existence of boundary steady state}\label{sec:boundaryss}

\begin{definition}
Let $\GG$ be a covalent modification network $(n,m,\vp)$. We denote by $x_B$ the concentration vector where $s_i=0$ for any $i\neq n$; $c_i=0$ for  $i\in[1,n-1]$, $d_j=0$ for $j\in[1,m-1]$, $s_n=T_s$ and $e=T_e$.
\end{definition}

\begin{lemma}
Let $\GG$ be a covalent modification network $(n,m,\vp)$. Then the concentration vector $x_B$ is a boundary steady state of $(\GG,(k,h))$. 
\end{lemma}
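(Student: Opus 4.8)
The plan is to verify directly that $x_B$ is a steady state by showing that the right-hand side of every ODE in \eqref{eq:ODEs} vanishes when evaluated at $x_B$. The key observation is that $x_B$ is a very degenerate point: every intermediate complex concentration $c_i$ and $d_j$ is zero, and every substrate concentration $s_i$ is zero except $s_n = T_s$. Because mass-action rate monomials are products of species concentrations, most monomials will contain a factor that is zero at $x_B$, so the computation should reduce to checking a small number of surviving terms.

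Concretely, I would evaluate the expressions $F_i, G_j^\alpha, H_j, K_i$ from \eqref{eq:FGHK} at $x_B$. For $F_i = -k_i^+ s_i e + k_i^- c_i + k_{i-1} c_{i-1}$, every term vanishes at $x_B$: the first because $s_i = 0$ for $i \ne n$ (and for $i = n$ there is no $k_n^+$ term by the convention $k_n = k_n^+ = 0$), and the last two because all $c_i = 0$. Hence $F_i = 0$ for all $i$, and likewise $K_i = k_{i-1}c_{i-1} - k_i c_i = 0$. Similarly $G_j^\alpha = -h_j^+ s_{\vp_j} c_\alpha + h_j^- d_j + h_{j-1} d_{j-1}$ vanishes because the first term has the factor $c_\alpha = 0$ and the remaining terms have $d_j = 0$; and $H_j = h_{j-1} d_{j-1} - h_j d_j = 0$. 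Feeding these into \eqref{eq:ODEs} immediately gives $ds_i/dt = 0$, $de/dt = \sum_i F_i = 0$, $dd_j/dt = -G_j^\alpha + H_j = 0$, and $dc_i/dt = -F_i + K_i + \delta_\alpha(i)\sum_j G_j^\alpha = 0$.

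The remaining step is to confirm that $x_B$ is indeed a nonnegative concentration vector lying in some stoichiometric compatibility class, which is immediate since all coordinates are nonnegative and the conservation laws \eqref{eq:conservation} are satisfied by construction (the substrate law gives $s_n = T_s$ and the enzyme law gives $e = T_e$). There is essentially no obstacle here: the main point is simply the bookkeeping observation that at $x_B$ every reaction's reactant monomial vanishes, since each reaction in $\GG$ either consumes an intermediate complex (concentration zero) or consumes a substrate $S_i$ with $i \ne n$ together with an enzyme, and even the reactions $S_n + E \to \cdots$ and $S_n + C_\alpha \to \cdots$ do not fire because the only intermediate that could feed $S_n$ requires a nonzero complex. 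I would therefore present the proof as a short verification: substitute $x_B$ into \eqref{eq:FGHK}, observe all four families of expressions vanish, and conclude via \eqref{eq:ODEs} that $f_\kappa(x_B) = 0$.
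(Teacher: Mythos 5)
Your proof is correct, and it takes a slightly different (more elementary) route than the paper. The paper's proof is a one-line appeal to Corollary \ref{cor:steadystate}: one checks that $x_B$ satisfies all five equations listed there. You instead verify the definition of steady state directly: you evaluate $F_i$, $G_j^\alpha$, $H_j$, $K_i$ from \eqref{eq:FGHK} at $x_B$, observe that each expression vanishes (the terms $k_i^- c_i$, $k_{i-1}c_{i-1}$, $h_j^- d_j$, $h_{j-1}d_{j-1}$ because all intermediate concentrations are zero; the terms $-k_i^+ s_i e$ because $s_i = 0$ for $i < n$ and $k_n^+ = 0$ by convention; the terms $-h_j^+ s_{\vp_j} c_\alpha$ because $c_\alpha = 0$), and conclude that every right-hand side of \eqref{eq:ODEs} is zero. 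This buys you something small but real: Corollary \ref{cor:steadystate} is stated as a \emph{necessary} condition on steady states, so using it to certify that a given point \emph{is} a steady state implicitly requires the converse implication (equations $(1)$--$(5)$ imply steady state), which does hold but rests on the identities \eqref{eq:KnHm} and the cascade sums of Lemma \ref{lem:cascade}; your direct computation sidesteps that issue entirely. Both arguments ultimately hinge on the same observation, namely that at $x_B$ every mass-action monomial contains a vanishing factor. One slip in your closing remarks: there is no reaction with reactant complex $S_n + E$ (the convention $k_n^+ = 0$ encodes exactly this), and when $\vp_1 = n$ the reaction $S_n + C_\alpha \to D_1$ fails to fire simply because $c_\alpha = 0$, not because of any condition about intermediates ``feeding'' $S_n$; this is harmless, since your computation of $G_j^\alpha$ in the second paragraph already disposes of these terms correctly.
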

\begin{proof}
It is easy to check that $x_B$ satisfies all five equations in Corollary \ref{cor:steadystate}.
\end{proof}

\begin{lemma}\label{lem:e>0}
Let $\GG$ be a covalent modification network $(n,m,\vp)$. Suppose that $T_e>0$.  Then at any boundary steady state, we must have $e>0$. 
\end{lemma}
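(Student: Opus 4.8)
The plan is to argue by contradiction, exploiting the two steady-state relations in Corollary \ref{cor:steadystate} that tie the enzyme concentrations to the intermediate complexes, together with the conservation law for total enzyme. Suppose that at some boundary steady state of $(\GG,(k,h))$ we had $e = 0$; I will show this forces $T_e = 0$, contradicting the hypothesis $T_e > 0$.

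First I would invoke equation $(5)$ of Corollary \ref{cor:steadystate}, namely $s_i e = k_i k_i^* c_i$ for $i \in [1, n-1]$. Since each $k_i$ and each $k_i^*$ is strictly positive (being built from positive rate constants, see Definition \ref{def:star}), setting $e = 0$ immediately yields $c_i = 0$ for every $i \in [1, n-1]$. In particular, the bifunctional-enzyme complex $C_\alpha$ has $c_\alpha = 0$, since $\alpha \in [1, n-1]$.

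Next I would feed this into equation $(4)$ of Corollary \ref{cor:steadystate}, $s_{\vp_j} c_\alpha = h_j h_j^* d_j$ for $j \in [1, m-1]$. With $c_\alpha = 0$ and each $h_j, h_j^* > 0$, this forces $d_j = 0$ for all $j \in [1, m-1]$. Thus the assumption $e = 0$ propagates to kill all forward-chain intermediates (through equation $(5)$) and then, precisely because $c_\alpha$ serves as the enzyme of the backward chain, all backward-chain intermediates (through equation $(4)$).

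Finally I would substitute these vanishing values into the conservation law for the total enzyme, $e + \sum_{i=1}^{n-1} c_i + \sum_{j=1}^{m-1} d_j = T_e$ from \eqref{eq:conservation}, obtaining $T_e = 0$, which contradicts $T_e > 0$. There is no serious obstacle here; the only point requiring care is observing that $c_\alpha$ is literally one of the $c_i$ already shown to vanish, so that equation $(4)$ becomes available to propagate the zero into the $d_j$. Everything else is a direct consequence of the positivity of the rate constants.
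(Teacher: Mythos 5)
Your proof is correct and follows exactly the same route as the paper's: contradiction via equation $(5)$ of Corollary \ref{cor:steadystate} to force $c_i = 0$ (hence $c_\alpha = 0$), then equation $(4)$ to force $d_j = 0$, and finally the enzyme conservation law to contradict $T_e > 0$. Nothing is missing; your write-up is simply a more detailed rendering of the paper's argument.
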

\begin{proof}
Assume by contradiction that $e=0$ at a boundary steady state. From equation $(5)$ in Corollary \ref{cor:steadystate}, we have $c_i=0$ for $i\in [1,n-1]$. In particular, $c_\alpha=0$ and from equation $(4)$ in Corollary \ref{cor:steadystate} we have $d_j=0$ for $j\in[1,m-1]$. Thus $T_e=e+\sum_{i=1}^{n-1}c_i+\sum_{j=1}^{m-1}d_j=0$, which contradicts the assumption $T_e>0$.
\end{proof}

\begin{lemma}\label{lem:leftside}
Let $\GG$ be a covalent modification network $(n,m,\vp)$. Suppose that $T_e>0$. Then at any boundary steady state, the following statements hold
\begin{enumerate}
\item For any $\ell\in[1,n-1]$, if $c_\ell=0$ or $s_\ell=0$, then $c_i=s_i=0$ for $i\leq \ell$.
\item For any $\ell\in[1,m-1]$, if $c_\alpha \neq 0$ and $d_\ell=0$ or $s_{\vp_\ell}=0$, then $d_j=0$ and $s_{\vp_j}=0$ for $j\leq \ell$.
\end{enumerate}
\end{lemma}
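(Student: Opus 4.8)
The plan is to prove both statements by downward induction along the respective chain, using the steady-state equations of Corollary \ref{cor:steadystate} together with the positivity $e>0$ from Lemma \ref{lem:e>0} and the fact that all concentrations are non-negative at a boundary steady state. The convenient bookkeeping variables are the forward and backward ``fluxes'' $k_ic_i\ge 0$ and $h_jd_j\ge 0$, with the boundary conventions $k_0c_0=k_nc_n=0$ and $h_0d_0=h_md_m=0$. I first reduce each hypothesis to a flux statement. By equation $(5)$, $s_ie=k_ik_i^*c_i$ with $e>0$ and $k_ik_i^*>0$, so for $i\in[1,n-1]$ we have $s_i=0\iff c_i=0\iff k_ic_i=0$; hence the hypothesis of $(1)$ is equivalent to $k_\ell c_\ell=0$. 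Likewise, when $c_\alpha\ne 0$, equation $(4)$ gives $s_{\vp_j}c_\alpha=h_jh_j^*d_j$, so for $j\in[1,m-1]$ we have $s_{\vp_j}=0\iff d_j=0\iff h_jd_j=0$, and the hypothesis of $(2)$ is equivalent to $h_\ell d_\ell=0$.

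For statement $(1)$ I would show, by downward induction on $i$ from $\ell$ to $1$, that $k_ic_i=0$ (equivalently $c_i=s_i=0$). The base case is the reduced hypothesis. For the step, assume $k_ic_i=0$ with $i\le\ell$ and seek $k_{i-1}c_{i-1}=0$. If $i\in[1,n]\cap\vp([1,m])^c$, equation $(2)$ gives $k_{i-1}c_{i-1}=k_ic_i=0$ at once. If $i\in[1,n]\cap\vp([1,m])$, set $j_0=\vp^{-1}(i)$; since $c_i=0$ forces $s_i=0$, the left side $s_{\vp_{j_0}}c_\alpha$ of equation $(4)$ vanishes regardless of the value of $c_\alpha$, so $h_{j_0}d_{j_0}=0$ (and when $j_0=m$ this is automatic from $h_m=0$). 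Substituting $k_ic_i=0$ and $h_{j_0}d_{j_0}=0$ into equation $(1)$ leaves $k_{i-1}c_{i-1}=-h_{j_0-1}d_{j_0-1}$, whose left side is non-negative and right side non-positive; hence both vanish and $k_{i-1}c_{i-1}=0$.

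Statement $(2)$ is dual and is handled by downward induction on $j$ with the same non-negativity device, but here the hypothesis $c_\alpha\ne 0$ is genuinely needed. Assuming $h_jd_j=0$ with $j\le\ell$, I split on $\vp_j$. If $\vp_j\in[1,n]^c$, equation $(3)$ gives $h_{j-1}d_{j-1}=h_jd_j=0$. If $\vp_j=:i_0\in[1,n]$, I first use $h_jd_j=0$ together with equation $(4)$ and $c_\alpha\ne 0$ to get $s_{\vp_j}=0$, then equation $(5)$ with $e>0$ to get $k_{i_0}c_{i_0}=0$ (automatic when $i_0=n$), and finally substitute into equation $(1)$ with $i=i_0$ (so $\vp^{-1}(i_0)=j$) to obtain $k_{i_0-1}c_{i_0-1}+h_{j-1}d_{j-1}=0$, which forces $h_{j-1}d_{j-1}=0$. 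The accompanying conclusion $s_{\vp_j}=0$ for all $j\le\ell$ then follows from $h_jd_j=0$ via equation $(4)$, again using $c_\alpha\ne 0$.

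I expect the main obstacle to be the inductive step at a substrate shared by the two chains, namely $i\in\vp([1,m])$ in $(1)$ and $\vp_j\le n$ in $(2)$: there the governing relation is equation $(1)$, which couples the forward and backward chains and on its own produces only a sign-indefinite identity. The resolution, and the crux of the argument, is to annihilate one side of equation $(1)$ using the enzyme equations $(4)$ and $(5)$ and then appeal to non-negativity of the surviving fluxes. This is precisely where the two parts diverge: in $(1)$ the factor $c_\alpha$ multiplies an already-vanishing $s_i$ and therefore drops out, so no assumption on $c_\alpha$ is required; in $(2)$ one must instead divide equation $(4)$ by $c_\alpha$ to pass from $d_j=0$ to $s_{\vp_j}=0$, which is legitimate exactly because $c_\alpha\ne 0$ is assumed.
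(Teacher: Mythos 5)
Your proof is correct and follows essentially the same route as the paper's: a downward induction along each chain in which the steady-state relations (Corollary \ref{cor:steadystate}, equivalently the $F$, $G$ identities of Proposition \ref{prop:FGHK}) reduce each step to equating a non-negative flux with a non-positive one, forcing both to vanish. The differences are cosmetic: you phrase the argument via the fluxes $k_i c_i$, $h_j d_j$ rather than the paper's direct sign argument on $F_\ell$, and you spell out part (2), which the paper omits as ``similar.''
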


\begin{proof}
From Lemma \ref{lem:e>0}, we must have $e>0$. For part (1), due to equation $(5)$ in Cor \ref{cor:steadystate}, $c_\ell=0$ if and only if $s_\ell=0$. Assume that $c_\ell=s_\ell=0$. Since either $F_\ell=0$ if $\ell\in \vp([1,m])^c$ or $F_\ell=-G^\alpha_{\vp^{-1}(\ell)} = h^+_{\vp^{-1}(\ell)}s_\ell c_\alpha - h^-_{\vp^{-1}(\ell)} d_{\vp^{-1}(\ell)} - h^-_{\vp^{-1}(\ell)-1} d_{\vp^{-1}(\ell)-1} = - h^-_{\vp^{-1}(\ell)} d_{\vp^{-1}(\ell)} - h^-_{\vp^{-1}(\ell)-1} d_{\vp^{-1}(\ell)-1}$, we must have $F_\ell \leq 0$. On the other hand, we have $F_\ell = -k_\ell^+s_\ell e + k^-_\ell c_\ell + k_{\ell-1}c_{\ell -1} = k_{\ell-1}c_{\ell -1} \geq 0$. Thus $F_\ell=0$ and consequently $c_{\ell-1}=0$. By a simple induction argument, we have $c_i=s_i=0$ for $i\leq \ell$.

The proof for part (2) is similar and thus is omitted for the sake of brevity.
\end{proof}

\begin{theorem}
Let $\GG$ be a covalent modification network $(n,m,\vp)$ with auxiliary graph $N_\GG$. Suppose that $N_\GG$ is strongly connected and $T_s>0, T_e>0$. Then $x_B$ is the only boundary steady state $\GG$ can admit.
\end{theorem}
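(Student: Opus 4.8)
The plan is to reduce the statement to a single structural fact: that at \emph{every} boundary steady state $x$ all intermediate complexes vanish ($c_i=0$ for $i\in[1,n-1]$ and $d_j=0$ for $j\in[1,m-1]$) and all substrates except $S_n$ vanish. Once this is established, the two conservation laws \eqref{eq:conservation} immediately force $e=T_e$ and $s_n=T_s$, which is exactly $x_B$. First I would record that, by Lemma \ref{lem:e>0}, $e>0$, so the vanishing coordinate is never an enzyme coordinate. Equation $(5)$ of Corollary \ref{cor:steadystate} together with $e>0$ gives $s_i=0\iff c_i=0$ for $i\in[1,n-1]$; moreover $s_n=0$ would force $c_{n-1}=0$ (otherwise $S_n$ has strictly positive production from $C_{n-1}\to S_n+E$ and zero consumption), and then Lemma \ref{lem:leftside}(1) forces $c_\alpha=0$. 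Hence any boundary condition produces a vanishing forward flux $k_ic_i$, or — via equation $(4)$ of Corollary \ref{cor:steadystate}, in the regime $c_\alpha\neq0$ — a vanishing backward flux $h_jd_j$.

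The key reformulation is that Corollary \ref{cor:HK} is precisely a flow-balance (Kirchhoff) statement for the nonnegative quantities $\phi_i:=k_ic_i$ and $\psi_j:=h_jd_j$ on the graph $N_\GG$: writing $K_i=\phi_{i-1}-\phi_i$ and $H_j=\psi_{j-1}-\psi_j$, equation $(1)$ says inflow equals outflow at a vertex lying on both chains, while equations $(2)$ and $(3)$ say the same at vertices on only one chain. Thus $(\phi,\psi)$ is a nonnegative circulation on the strongly connected graph $N_\GG$, and I would argue this circulation is identically zero. By Lemma \ref{lem:leftside}(1) the dead forward edges $\{i:\phi_i=0\}$ form a down-closed prefix $[1,a]$, and (when $c_\alpha\neq0$) by Lemma \ref{lem:leftside}(2) the dead backward edges form a down-closed prefix $[1,b]$. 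If the circulation were nonzero, some prefix would be proper, and tracing the surviving positive flux through the two interlocking paths — using the telescoping identities $\sum_{i=1}^\ell K_i=-\phi_\ell$ and $\sum_{j=1}^p H_j=-\psi_p$ of Lemma \ref{lem:cascade} — would exhibit a proper set of nodes closed under the reactions, which is impossible once $N_\GG$ is strongly connected. This is exactly the obstruction exploited in the converse direction of Theorem \ref{thm:consistency}, now run in reverse. Hence $a=n-1$ and $b=m-1$, so all $c_i=0$ and all $d_j=0$.

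The case $c_\alpha=0$ is the easy one and I would dispose of it first: equation $(4)$ of Corollary \ref{cor:steadystate} gives $d_j=0$ for all $j$, hence all $H_j=0$, and then Corollary \ref{cor:HK} forces $K_i=0$ for all $i$, so $\phi_i=0$ by the telescoping of Lemma \ref{lem:cascade}, i.e. all $c_i=0$. In the complementary case the circulation collapse above yields $\phi_\alpha=k_\alpha c_\alpha=0$, contradicting $c_\alpha\neq0$; so no boundary steady state has $c_\alpha\neq0$. Either way every boundary steady state has all $c_i=0$ and all $d_j=0$, whence equation $(5)$ gives $s_i=0$ for $i\in[1,n-1]$, and the conservation laws \eqref{eq:conservation} give $e=T_e$ and $s_n=T_s$, so $x=x_B$. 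The main obstacle is the middle step: converting the two down-closed prefixes of dead edges into a \emph{total} collapse of the circulation. This is the only place strong connectivity is used essentially, and it must be combined carefully with the coupling of the two chains at the bifunctional enzyme $C_\alpha$ — in particular, $c_\alpha\neq0$ is exactly the hypothesis needed to invoke Lemma \ref{lem:leftside}(2) and close off the backward chain.
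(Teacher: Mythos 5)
Your reduction and your treatment of the case $c_\alpha=0$ are sound and essentially match the paper's first case, but the ``circulation collapse'' step in the case $c_\alpha\neq 0$ --- which you yourself flag as the main obstacle --- contains a genuine gap: the deduction is false from the ingredients you actually invoke, namely node balance of the fluxes $\phi_i=k_ic_i$, $\psi_j=h_jd_j$ (Corollary \ref{cor:HK}), the prefix structure of the two dead-edge sets, and strong connectivity of $N_\GG$. Concretely, take $n=m=3$ and $\vp=(3,1,2)$ with $\alpha=2$, so that $N_\GG$ has forward edges $X_1\to X_2\to X_3$ and backward edges $X_3\to X_1$, $X_1\to X_2$, and is strongly connected. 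The assignment $\phi_1=0$, $\phi_2=\psi_1=\psi_2=t>0$ satisfies every node balance of Corollary \ref{cor:HK}; the dead forward edge set $\{1\}$ is a prefix and the dead backward edge set is empty (trivially a prefix); yet the circulation is nonzero, and its support is the directed cycle $X_2\to X_3\to X_1\to X_2$, which passes through every node, so there is no proper nonempty node set closed under the reactions for you to exhibit. What actually rules out this configuration is not flux balance but the substrate-mediated coupling between the chains: $\phi_1=0$ forces $s_1=0$ by equation $(5)$ of Corollary \ref{cor:steadystate} (since $e>0$), and then equation $(4)$ at $j=\vp^{-1}(1)=2$ together with $c_\alpha>0$ forces $d_2=0$ --- a backward edge that the prefix structure alone never touches. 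In other words, you use Lemma \ref{lem:leftside}(2) only through its edge-level consequence (``dead backward edges form a prefix'') and never fire its cross-chain trigger $s_{\vp_\ell}=0$; the counterexample shows that this edge-level weakening, even combined with Kirchhoff balance and strong connectivity, is strictly insufficient.

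The repair is to argue at the level of substrate nodes rather than edge fluxes, which is exactly what the paper does in its second case. With $e>0$ and $c_\alpha>0$, equations $(4)$ and $(5)$ of Corollary \ref{cor:steadystate} convert any vanishing coordinate into a vanishing substrate $s_\ell$, and Lemma \ref{lem:leftside} (both parts, plus your own observation that $s_n=0$ forces $c_{n-1}=0$) shows that substrate zeros propagate backwards along \emph{every} edge of $N_\GG$: if $X_\ell \to X_s$ is an edge and $s_s=0$, then $s_\ell=0$. Strong connectivity supplies a path from $X_n$ to the dead node, hence $s_n=0$, hence $c_{n-1}=0$, hence $c_\alpha=0$ by Lemma \ref{lem:leftside}(1), contradicting $c_\alpha>0$. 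This node-level propagation is where strong connectivity genuinely does its work; your Kirchhoff reformulation is an appealing idea, but it discards precisely the coupling that makes the step true.
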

\begin{proof}
Consider a boundary steady state that $\GG$ can admit. From  Lemma \ref{lem:e>0}, we must have $e>0$. We consider two cases below.

\noindent \textbf{Case 1:} Suppose that $c_\alpha=0$. Then from equation $(4)$ in Corollary \ref{cor:steadystate}, we must have $d_j=0$ for $j\in[1,m-1]$. From equations $(1)$ and $(2)$ in Corollary \ref{cor:steadystate}, this further implies $k_{i-1}c_{i-1}=k_ic_i$ for $i\in[1,n-1]$. Thus $c_i=0$ for $i\in[1,n-1]$ and by Lemma \ref{lem:leftside} we have $s_i=0$ for $i\in[1,n-1]$. As a result, $s_n=T_s$ and $e=T_e$ and the boundary steady state must be $x_B$.

\vspace{.1in}
\noindent \textbf{Case 2:} Suppose that $c_\alpha> 0$. From equations $(4)$ and $(5)$ in Corollary \ref{cor:steadystate}, it suffices to assume that $s_\ell=0$ for some $\ell \in [1,n]\cap \vp[1,m]^c$. Since the auxiliary graph $N_\GG$ is strongly connected, there exists a path from $X_n$ to $X_\ell$. Together with Lemma \ref{lem:leftside}, this implies $s_n=0$ and thus $c_\alpha=0$. We reach a contradiction in this case.
\end{proof}

\section{Discussion}

In this paper, we focus on bifunctional enzyme action, an important mechanism which has been shown to cause robustness in  biological networks. Intuitively, a bifunctional enzyme in native form facilitates the production of a substrate, while in bound form catalyzes the degradation of the same substrate. We have shown that bifunctional enzyme action can ensure absolute concentration robustness (ACR) of a target substrate in a large class of covalent modification networks. Our main results not only state sufficient conditions for the existence of ACR in this class of networks, but also pinpoint precisely the ACR species and provide the ACR value (i.e. the steady state concentration of the robust species). Notably, our results do not rely on the deficiency of the networks like the well-known Shinar-Feinberg criterion in \cite{shinar2010structural}, and thus they can be applied to biological networks with various size and complexity.

In addition, we have provided the necessary and sufficient conditions for the existence of a positive steady state, and the existence and uniqueness of a boundary steady state, in covalent modification networks. For a special subclass consisting of futile cycles with a bifunctional enzyme, we have also given a steady state parameterization based on the bifunctional enzyme concentration and the ratio between the native form and the bound form of this enzyme. This steady state parameterization indicates that futile cycles with bifunctional enzyme can be multistationary (i.e. have multiple positive steady states).

Going forward, we are planning to extend the definition of bifunctionality and our results on ACR to a significantly more general class of enzymatic networks. For example, we can allow for different enzymes (see Remark \ref{rem:multi_enzyme}) and/or multiple intermediate complexes in each step of the reaction cascade. 
The intermediate steps of an enzyme-catalyzed reaction often vary between different modeling choices since they are difficult to pin down experimentally. 
Our goal is to develop results on the connection between bifunctionality and ACR in systems which have minimal underlying assumptions on such intermediate catalysis steps. 


Another direction we plan to pursue involves studying the dynamics of networks with bifunctional enzyme. Can they exhibit important dynamical properties like bistability or dynamic ACR (for example, see \cite{joshi2022foundations,joshi2023motifs,joshi2023power})? What is the stability of the boundary steady state? These questions would shed light on how the presence of bifunctionality impact the dynamics of biological networks besides causing ACR.

\section*{Statements and Declarations}

\subsection*{Competing interests}
The authors have no relevant financial or non-financial interests to disclose. 
\subsection*{Data availability}
The authors declare that the data supporting the findings of this study are available within the article.

\section*{Acknowledgments}
B.J. was supported by NSF grant DMS-2051498. 

\bibliographystyle{unsrt}
\bibliography{acr}

\begin{thebibliography}{10}

\bibitem{goldbeter1981amplified}
Albert Goldbeter and Daniel~E Koshland~Jr.
\newblock An amplified sensitivity arising from covalent modification in
  biological systems.
\newblock {\em Proceedings of the National Academy of Sciences},
  78(11):6840--6844, 1981.

\bibitem{straube2013reciprocal}
Ronny Straube and Carsten Conradi.
\newblock Reciprocal enzyme regulation as a source of bistability in covalent
  modification cycles.
\newblock {\em Journal of theoretical biology}, 330:56--74, 2013.

\bibitem{jeynes2021ultrasensitivity}
Cailan Jeynes-Smith and Robyn~P Araujo.
\newblock Ultrasensitivity and bistability in covalent-modification cycles with
  positive autoregulation.
\newblock {\em Proceedings of the Royal Society A}, 477(2252):20210069, 2021.

\bibitem{ThomGuna09a}
Matthew Thomson and Jeremy Gunawardena.
\newblock {Unlimited multistability in multisite phosphorylation systems}.
\newblock {\em Nature}, 460(7252):274--277, 2009.

\bibitem{walsh2006posttranslational}
Christopher Walsh.
\newblock {\em Posttranslational modification of proteins: expanding nature's
  inventory}.
\newblock Roberts and Company Publishers, 2006.

\bibitem{cohen2001role}
Philip Cohen.
\newblock {The role of protein phosphorylation in human health and disease. The
  Sir Hans Krebs Medal Lecture.}
\newblock {\em European journal of biochemistry}, 268(19):5001--5010, 2001.

\bibitem{gnad2007phosida}
Florian Gnad, Shubin Ren, Juergen Cox, Jesper~V Olsen, Boris Macek, Mario
  Oroshi, and Matthias Mann.
\newblock Phosida (phosphorylation site database): management, structural and
  evolutionary investigation, and prediction of phosphosites.
\newblock {\em Genome biology}, 8(11):1--13, 2007.

\bibitem{ThomGuna09}
Matthew Thomson and Jeremy Gunawardena.
\newblock The rational parameterisation theorem for multisite
  post-translational modification systems.
\newblock {\em J. Theoret. Biol.}, 261(4):626--636, 2009.

\bibitem{WangSontag}
Liming Wang and Eduardo Sontag.
\newblock {On the number of steady states in a multiple futile cycle}.
\newblock {\em J. Math. Biol.}, 57(1):29--52, 2008.

\bibitem{conradi2015global}
Carsten Conradi and Anne Shiu.
\newblock A global convergence result for processive multisite phosphorylation
  systems.
\newblock {\em Bulletin of mathematical biology}, 77(1):126--155, 2015.

\bibitem{holmberg2002multisite}
Carina~I Holmberg, Stefanie~EF Tran, John~E Eriksson, and Lea Sistonen.
\newblock Multisite phosphorylation provides sophisticated regulation of
  transcription factors.
\newblock {\em Trends in biochemical sciences}, 27(12):619--627, 2002.

\bibitem{manrai2008geometry}
Arjun~Kumar Manrai and Jeremy Gunawardena.
\newblock The geometry of multisite phosphorylation.
\newblock {\em Biophysical journal}, 95(12):5533--5543, 2008.

\bibitem{rubinstein2016long}
Boris~Y Rubinstein, Henry~H Mattingly, Alexander~M Berezhkovskii, and
  Stanislav~Y Shvartsman.
\newblock Long-term dynamics of multisite phosphorylation.
\newblock {\em Molecular biology of the cell}, 27(14):2331--2340, 2016.

\bibitem{shinar2007input}
Guy Shinar, Ron Milo, Mar{\'\i}a~Rodr{\'\i}guez Mart{\'\i}nez, and Uri Alon.
\newblock Input--output robustness in simple bacterial signaling systems.
\newblock {\em Proceedings of the National Academy of Sciences},
  104(50):19931--19935, 2007.

\bibitem{alon2019introduction}
Uri Alon.
\newblock {\em An introduction to systems biology: design principles of
  biological circuits}.
\newblock CRC press, 2019.

\bibitem{hart2013utility}
Yuval Hart and Uri Alon.
\newblock The utility of paradoxical components in biological circuits.
\newblock {\em Molecular cell}, 49(2):213--221, 2013.

\bibitem{russo1993essential}
Frank~D Russo and Thomas~J Silhavy.
\newblock The essential tension: opposed reactions in bacterial two-component
  regulatory systems.
\newblock {\em Trends in microbiology}, 1(8):306--310, 1993.

\bibitem{hsing1998mutations}
Weihong Hsing, Frank~D Russo, Karen~K Bernd, and Thomas~J Silhavy.
\newblock Mutations that alter the kinase and phosphatase activities of the
  two-component sensor envz.
\newblock {\em Journal of bacteriology}, 180(17):4538--4546, 1998.

\bibitem{batchelor2003robustness}
Eric Batchelor and Mark Goulian.
\newblock Robustness and the cycle of phosphorylation and dephosphorylation in
  a two-component regulatory system.
\newblock {\em Proceedings of the National Academy of Sciences},
  100(2):691--696, 2003.

\bibitem{shinar2009robustness}
Guy Shinar, Joshua~D Rabinowitz, and Uri Alon.
\newblock Robustness in glyoxylate bypass regulation.
\newblock {\em PLoS Comput Biol}, 5(3):e1000297, 2009.

\bibitem{dexter2013dimerization}
Joseph~P Dexter and Jeremy Gunawardena.
\newblock Dimerization and bifunctionality confer robustness to the isocitrate
  dehydrogenase regulatory system in escherichia coli.
\newblock {\em Journal of Biological Chemistry}, 288(8):5770--5778, 2013.

\bibitem{shinar2010structural}
Guy Shinar and Martin Feinberg.
\newblock Structural sources of robustness in biochemical reaction networks.
\newblock {\em Science}, 327(5971):1389--1391, 2010.

\bibitem{joshi2022foundations}
Badal Joshi and Gheorghe Craciun.
\newblock Foundations of static and dynamic absolute concentration robustness.
\newblock {\em Journal of Mathematical Biology}, 85(53), 2022.

\bibitem{tonello2018network}
Elisa Tonello and Matthew~D Johnston.
\newblock Network translation and steady-state properties of chemical reaction
  systems.
\newblock {\em Bulletin of Mathematical Biology}, 80:2306--2337, 2018.

\bibitem{feinberg2019foundations}
Martin Feinberg.
\newblock {\em Foundations of chemical reaction network theory}.
\newblock Springer, 2019.

\bibitem{anderson2022prevalence}
David~F Anderson and Tung~D Nguyen.
\newblock {Prevalence of deficiency-zero reaction networks in an
  Erd{\"o}s--R{\'e}nyi framework}.
\newblock {\em Journal of Applied Probability}, 59(2):384--398, 2022.

\bibitem{joshi2023motifs}
Badal Joshi and Gheorghe Craciun.
\newblock Reaction network motifs for static and dynamic absolute concentration
  robustness.
\newblock {\em SIAM Journal on Applied Dynamical Systems}, 22(2):501--526,
  2023.

\bibitem{joshi2023power}
Badal Joshi and Gheorghe Craciun.
\newblock Power-engine-load form for dynamic absolute concentration robustness.
\newblock {\em SIAM Journal on Applied Mathematics}, 83(6):2237--2259, 2023.

\end{thebibliography}

\end{document}